\newtheorem{lemma}{Lemma}
\newtheorem{theorem}{Theorem}
\DeclareMathOperator*{\argmax}{arg\,max}
\newcommand{\E}{\mathbb{E}}
\newcommand{\B}{\mathcal{B}}
\newcommand{\tr}[1]{\text{tr}\left(#1\right)}
\renewcommand{\b}[1]{\mathbf{#1}}
\newcommand{\R}{\mathbb{R}}
\newcommand{\N}{\mathbb{N}}
\begin{document}

\title{Monitoring Correlated Sources: AoI-based Scheduling is Nearly Optimal}

\author{R Vallabh Ramakanth, Vishrant Tripathi and Eytan Modiano
\thanks{R Vallabh Ramakanth, Vishrant Tripathi and Eytan Modiano are with the Laboratory of Information and Decision Systems (LIDS), Massachusetts Institute of Technology, Cambridge, MA, 02139. A preliminary version of this paper appeared in the conference proceedings of IEEE INFOCOM 2024.\\
Email: \{vallabhr, vishrant, modiano\}@mit.edu.}}

\date{}
\maketitle

\begin{abstract}
We study the design of scheduling policies to minimize monitoring error for a collection of correlated sources, where only one source can be observed at any given time. We model correlated sources as a discrete-time Wiener process, where the increments are multivariate normal random variables, with a general covariance matrix that captures the correlation structure between the sources. Under a Kalman filter-based optimal estimation framework, we show that the performance of all scheduling policies oblivious to instantaneous error, can be lower and upper bounded by the weighted sum of Age of Information (AoI) across the sources for appropriately chosen weights. We use this insight to design scheduling policies that are only a constant factor away from optimality, and make the rather surprising observation that AoI-based scheduling that ignores correlation is sufficient to obtain performance guarantees. We also derive scaling results that show that the optimal error scales roughly as the square of the dimensionality of the system, even in the presence of correlation. Finally, we provide simulation results to verify our claims.

\end{abstract} 
	\section{Introduction}\label{chap:intro}
\subsection{Motivation}\label{chap:intro-sec:motiv}
Remote monitoring and estimation of physical processes have been receiving growing interest in the field of wireless networks. Timely estimation of physical processes and up-to-date knowledge of system state over capacity-limited communication channels is crucial in many applications such as sensing in IoT, control of robot swarms, communication between autonomous vehicles, environmental monitoring, search-and-rescue, and real-time surveillance. Fresh and up-to-date information regarding the system state is necessary for good monitoring and control performance in such tasks.

A common, yet powerful, way to model many physical processes is to use the linear time-invariant (LTI) system framework. A simplified version of the discrete LTI system is the Wiener process in discrete time. We primarily focus on multi-dimensional discrete-time Wiener process in this work. Let us denote the state of this physical process at time $t$ as $\b{x}_t \in \mathbb{R}^{M}$, where $M$ is the number of sources (sensors) being monitored.

Many physical processes of interest exhibit some form of correlation in their dynamics. For example, in the context of environment monitoring, the data collected by sensors measuring the temperature or humidity of a geographical location would be correlated. Hence, we assume that the $M$ dimensional Wiener process evolves in a \textit{correlated} manner. In particular, the increments of the Wiener process are assumed to be multi-variate normal random vectors with a general covariance matrix, that captures the correlation structure between the different sensors.

Many physical processes can be modelled well using the discrete-time linear time-invariant (LTI) system model, where the system state $\b{x}_t \in \R^M$ at time $t$  evolves as 
\begin{align}
    \b{x}_{t+1} = A\b{x}_t + \b{w}_t. \label{eqn:intro_sys_model}
\end{align}
Here, $A$ denotes the system matrix that describes the time evolution of the system and $\b{w}_t$ denotes the random innovations in the process at time $t$, usually modeled as multivariate normal random variables that are i.i.d. across time. 

However, in this work, we primarily look at a simplified version of this model where the system matrix $A$ is set to the identity matrix $I$. This means that the processes of our interest evolve as Gaussian random walks
\begin{align}
    \b{x}_{t+1} = \b{x}_t + \b{w}_t. \label{eqn:intro_random_walks}
\end{align}
Most importantly, we assume that $\b{w}_t \in \mathbb{R}^{M}$ is an i.i.d. multi-variate normal random variable across time, i.e. $\b{w}_t \sim \mathcal{N}(0,Q)$. Here $Q$ is the covariance matrix for the noise innovations. Choosing a non-diagonal $Q$ matrix allows us to model correlation between the random walks.

We study a remote monitoring setting, where a monitor located at the base station is interested in tracking the state of this $M$ dimensional correlated Wiener process as accurately as possible; i.e., our goal is to minimize the time-average monitoring error. Due to bandwidth and interference constraints of the wireless channel, we assume that only one dimension of the state $\b{x}_t$ can be observed in any given time-slot. In other words, while there are $M$ available sensors, one for each dimension of the process, the monitor can only observe one at any give time.  

Solving this problem involves two tasks - estimation and scheduling. \textbf{Estimation} involves reconstructing an estimate $\hat{\b{x}}_t$ for the current state of the system based on all previously received status updates at the base station. \textbf{Scheduling} involves choosing the sensor that gets to transmit its state to the base station in each time-slot. Throughout this work, we will focus on the problem of designing centralized scheduling policies, where the base station decides which sensor to schedule at the beginning of each time slot. Thus, the base station makes the scheduling decision without knowledge to the instantaneous error or the current state of the process. We call this the \textit{oblivious} scheduling setting.

At first glance, it might seem that the problem of minimizing monitoring error requires the solution to a complicated joint optimization problem over the space of estimators and scheduling policies. However, we show in Section \ref{chap:kalmanfilter} that within the class of oblivious scheduling policies, the overall problem is \textit{separable}, and estimation and scheduling can be solved independently of one another.
Further, the optimal estimator design problem in this setting is well-studied in control literature. It involves the application of the Kalman filter approach for \textit{time-varying} LTI systems. 
The more interesting problem, and the focus of our work, is the problem of designing an optimal scheduling policy given the optimal estimator.

Intuitively, a scheduling policy that prioritizes the collection of the ``freshest"  information from the sensors should work well for minimizing error. 
One way to characterize the information freshness of sensors is using the well-known Age of Information (AoI) metric \cite{kaul_real-time_2012}. In fact, we show that for correlated Gaussian sources, the monitoring error of any oblivious scheduling policy can be lower and upper bounded by the weighted sum of AoI of all the sensors, with appropriately chosen weights. These bounds hold on a per time slot and sample path basis, establishing a direct relationship between minimizing weighted-sum AoI and monitoring error, similar to the uncorrelated case. Using this insight, we design Max-Weight style policies based on AoI that have provable constant factor performance guarantees in terms of monitoring error performance for infinite time horizons. This is a surprising result since it suggests that at least for scheduling design, one can effectively ignore correlation and still get near-optimal performance.

\subsection{Prior Works}\label{chap:intro-sec:priorworks}
The connection between remote estimation and AoI has been well documented in prior works\cite{yates_age_survey_2021, champati2019performance, klugel2019aoi}. In fact, for a one-dimensional Wiener process, minimization of estimation error is equivalent to minimization of AoI, for policies which are oblivious to instantaneous error\cite{sun_yuri_remote_estimation}. The same equivalence between monitoring error and weighted-sum AoI minimization appears for \textit{uncorrelated} linear Gaussian random processes in \cite{tripathi2019whittle}. In \cite{ornee_sampling_2021, ornee_whittle_2023}, these ideas are extended to the monitoring of single and multiple uncorrelated Ornstein-Uhlenbeck processes respectively.

There has also been work on distributed scheduling algorithms for minimizing monitoring error. In \cite{chen_real-time_2021}, the authors develop an ALOHA-like scheduling strategy for monitoring \textit{uncorrelated} discrete-time Wiener processes in both the oblivious and non-oblivious setting. They show that their proposed policies are asymptotically constant-factor optimal. 

Works on wireless scheduling for timely estimation typically study only independent processes. The performance of monitoring correlated sources has not been explored thoroughly. Intuitively, correlation between sources can potentially allow us to get information about the system state by only measuring a few sources. This intuition motivates us to find scheduling policies which make the best use of bandwidth resources in order to minimize monitoring error by exploiting correlation.
A simple toy example would be the monitoring of a discrete-time Wiener process with 10 sources. Let these sources be numbered from 1 to 10. If we assume pairs of sources $\{(1,2), (3,4), (5,6), (7,8), (9,10)\}$ to be perfectly correlated, then we get perfect system knowledge by only scheduling sensors 1,3,5,7 and 9. However, if the sources were only partially correlated, then intuitively, we would need to schedule each source as we cannot get away with sampling only a subset of these sources. In this work, we formalize this intuition via performance bounds.

Nevertheless, there has been some preliminary work on the remote monitoring of correlated sources. In particular, a simple probabilistic correlation model is considered in \cite{tripathi_optimizing_2022}. This simple model allows the authors to design scheduling policies with performance guarantees and provide scaling results that take correlation into account. In \cite{he_camera_net_2018} and \cite{he_camera_net_2019}, the authors consider a network of cameras with overlapping fields-of-view and formulate an optimization problem based on AoI to study processing and scheduling in this correlated setting.

There has also been some preliminary work on remote monitoring of LTI systems in the control theory literature \cite{mo_infinite-horizon_2014, shi_scheduling_2012,  han_stochastic_2017, mo_convex_optimization_2009, mo_sensor_2011}. Scheduling of sensors for optimal error finite time horizon has been studied in \cite{Finite_Horizon_Vitus_2012}, wherein the authors exploit the structure of the problem to reduce the search space of schedules without loss in optimality. In \cite{mo_infinite-horizon_2014}, the sensor scheduling problem for the the infinite time horizon is shown to be computationally infeasible to solve \textit{exactly} due its combinatorial nature. However, when there are only two sources, the authors of \cite{shi_scheduling_2012} find an explicit periodic scheduling policy which is optimal in the infinite time horizon. In \cite{mo_infinite-horizon_2014, infinite_horizon_Zhao_2014}, the authors also show that periodic schedules can approximate the performance of any other schedule to arbitrary precision for the case of infinite time horizons, and hence the method developed in \cite{Finite_Horizon_Vitus_2012} can be used to produce a schedule arbitrarily close to optimal. Though the results of \cite{Finite_Horizon_Vitus_2012} help in reducing the space of schedules that needs to be searched, it is still computationally expensive to carry it out for a large sensor network.

To the best of our knowledge, this is the first work to establish the connection between AoI optimization and monitoring error for the case of \textit{correlated} sources - leading to insightful and low complexity scheduling policies with error performance guarantees. 

\subsection{Outline}
The rest of the paper is organized as follows. In Section \ref{chap:problem_formulation}, we describe the system model and explicitly formulate the problem. In Section \ref{chap:kalmanfilter}, we discuss the optimality of the Kalman filter estimator when the scheduler is confined to oblivious policies. In Section \ref{chap:aoi-error-relation}, we develop upper and lower bounds for the estimation error of any oblivious scheduling policy in terms of a weighted sum of AoIs across all sensors. Using these bounds, we develop Max-Weight style policies in Section \ref{chap:schedule_policy} and provide constant factor performance guarantees. We further use these bounds to show that correlation does not lead to any order improvement in performance when compared to the non-correlated case in Section \ref{chap:scaling}. Finally, in Section \ref{chap:simulations}, we provide simulation results that support our theoretical results.

\subsection{Notation}
Vectors are denoted in boldface, whereas scalars and matrices are denoted in plainface, and the difference between matrices and scalars is usually made explicitly. If $\b{x}$ is a $d$ dimensional vector ($\b{x} \in \R^d$), then $x^i$ denotes the $i^{th}$ coordinate of the vector. Usually, the subscript $t$ or $k$ denotes the time index (for example, $x^i_t$, $\b{x}_t$). If $A$ is a matrix, then $A^T$ denotes the transpose of matrix $A$. $\tr{A}$ denotes the trace of matrix $A$, and if $A$ is invertible, $A^{-1}$ denotes the matrix inverse of $A$. If $A \in \R^{M\times M}$ is a real symmetric matrix, then, $\lambda_i(A)$ denotes the $i^{th}$ largest eigenvalue of $A$ (since all the eignevalues of real symmetric matrices are real). More specifically, $\lambda_M(A) \leq \lambda_{M-1}(A) \leq ... \leq \lambda_1(A)$. $A \succeq 0$ denotes that $A$ is a positive semidefinite matrix, and $A \succ 0$ denotes that $A$ is a positive matrix. If $A \succeq B$, then $A-B \succeq 0$. If $\b{x} \sim \N(\b{\mu}, \Sigma)$, where $\mu \in \R^M$ and $Q \in \R^{M\times M}, Q \succeq 0$, then $\b{x}$ is a multivariate normal random variable with mean $\b{\mu}$ and covariance $\Sigma$.
 
        \section{Problem Formulation}\label{chap:problem_formulation}
\subsection{System Model}\label{chap:problem_formulation-sec:system_model}
Consider $M$ sensors communicating over a wireless channel to a central base station (Fig. \ref{fig:sys_model}). The base station is interested in keeping track of the state of the process at each sensor, with as little monitoring error as possible. Due to interference and bandwidth constraints, we assume that only one of these $M$ sensors can communicate with the base station at any given time.
\begin{figure}[!hbt]
    \centering
    \includegraphics[scale=0.35]{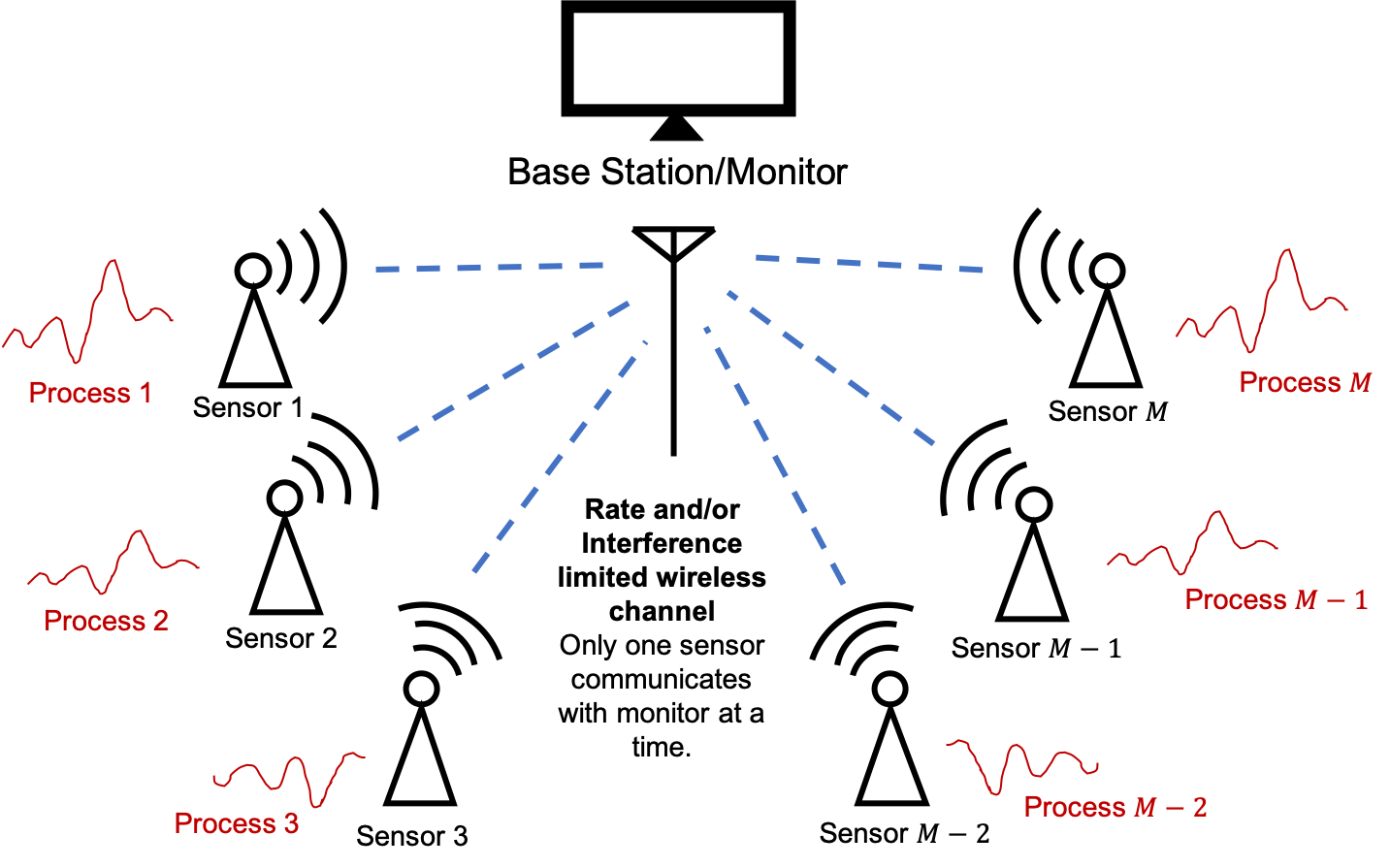}
    \caption{$M$ sensors tracking $M$ processes which are correlated. The central monitor can communicate with sensors one at a time, due to interference constraints.}
    \label{fig:sys_model}
\end{figure}

We assume that the dynamics of the process being monitored at each sensor are linear and time-invariant discrete time random walks with Gaussian increments (i.e., discrete-time Wiener process). More, specifically, the state of the $i$th process evolves as follows
\begin{align}
    x^i_{t+1} = x^i_{t} + w^i_{t},
\end{align}
where, $ i\in\{1,...,M\}$ and $w^i_t$ is the zero mean Gaussian increment at time $t$ for the $i$th process.

The most important aspect in our model is the assumption that the increments across different sources are \textit{correlated}. To model this correlation, we assume that the vector of increments $\b{w}_t$ is a zero-mean multivariate normal random variable with a covariance matrix that is possibly non-diagonal. If we aggregate the state evolution equations for all sources together in vector form, we obtain
\begin{align}
    \b{x}_{t+1} = \b{x}_{t} + \b{w}_{t}, \label{eqn:sys_model}
\end{align}
where $\b{w}_t$ is a correlated multivariate normal random vector for any given time-slot, but independent and identically distributed across $t$. In other words, $\b{w}_t \sim \mathcal{N}(0, Q)$, $ \forall t \in \N$, and $\b{w}_t \perp \b{w}_s, \forall t \neq s$, where $Q$ is the covariance matrix of $\b{w}_t$.

As discussed earlier, only one of the $M$ sensors can be requested to transmit state information in any given time-slot. Let $y_t \in \mathbb{R}$ denote the received sample at the monitor at time $t$. Suppose sensor $j$ transmitted its state information at time $t$. Then, we can write $y_t$ as follows 
\begin{equation}
    y_t = \b{c}^T_t\b{x}_t,
\end{equation}
where $\b{c}_t = \b{e}_j = [0, ..., 1, 0, ..., 0]^T$ and $\b{e}_j$ is the $j^{th}$ standard basis vector in $\R^M$.

At each time-slot, the monitor uses an estimator $\eta_t(\cdot)$ to map all previously received information, i.e. $y_t, y_{t-1},....,y_0$ to an estimate of the state $\b{x}_t$, denoted by $\b{\hat{x}}_t$.
\begin{equation}
    \label{eqn:estimator}
    \b{\hat{x}}_t = \eta_t(y_t, y_{t-1},....,y_0).
\end{equation}
We assume that the estimate $\hat{\b{x}}_0$ at time $t=0$ is known, with an error covariance matrix $P_0 = \E\left[\left(\b{x}_0 -\hat{\b{x}}_0\right)\left(\b{x}_0 -\hat{\b{x}}_0\right)^T\right]$. For example, if the state $\hat{\b{x}}_0 = \b{x}_0$ is known exactly at time $t=0$, then $P_0$ is the matrix with all zeroes.

\subsection{Setting up the Optimization Problem}
In this work, our objective is to design a scheduling and estimation policy that minimizes the long-term time-average of the estimation error at the monitor. More formally, we want to solve the following optimization problem
\begin{equation}
\begin{aligned}
\min_{\b{c}_t, \eta_t, t\in \N} \quad & \lim_{T \to \infty} \frac{1}{T} \E\left[\sum_{t=1}^T||\b{x}_t-\b{\hat{x}}_t||_2^2\right]\\
\textrm{s.t.} \quad & \b{x}_{t+1} = \b{x}_{t} + \b{w}_{t}, \forall t\\
  &y_t = \b{c}^T_t\b{x}_t, \forall t\\
  &\b{c}_t \in \{\b{e}_1,...,\b{e}_M \}, \forall t\\
  &\b{\hat{x}}_t = \eta_t(y_t,...,y_0), \forall t.
\end{aligned}\label{eqn:general_prob}
\end{equation}

Here, the sequence $\{\b{c}_t\}_{t=0}^\infty$ controls which sensor gets to transmit in each time slot and hence describes the scheduling policy. The sequence $\{\eta_t\}_{t=0}^\infty$ describes the estimation policy for $\hat{\b{x}}_t$.

Let $\pi(t) \in \{ 1,...,M\}$ denote the index of the sensor scheduled at time $t$. For a given scheduling $\pi$, it is well established \footnote{This is a classical result in estimation theory. A proof can be found in \cite[p. 143]{poor_textbook_1998}.} that the optimal estimator for \eqref{eqn:general_prob} is the Minimum Mean Square Error (MMSE) estimator, which is given as 
\begin{equation}\label{eqn:cond_exp}
    \begin{aligned}
        \b{\hat{x}}_t  &= \E[\b{x}_{t}| y_t, y_{t-1},...,y_{0}] =\E[\b{x}_{t}|x^{\pi(t)}_t, x^{\pi(t-1)}_{t-1},...,x^{\pi(0)}_{0}].
    \end{aligned}
    \end{equation}
        \section{Kalman Filter: The Optimal Estimator}\label{chap:kalmanfilter}
In general, (\ref{eqn:general_prob}) requires searching over the joint space of all causal scheduling and estimation policies, which suggests that an exact solution to the problem is intractable \cite{vasconcelos_observation-driven_2020}. However, we will establish that for a reasonably large class of scheduling policies, the overall optimization is separable and splits into separate estimation and scheduling problems. Further, the optimal estimation problem, i.e., equation \eqref{eqn:cond_exp}, can be obtained using the well known Kalman filter algorithm.

To start, we note that causal scheduling policies can be categorized into two classes: 
\begin{enumerate}
    \item \textit{Oblivious policies}: The scheduling decision at time $t$ can only be a function of previous scheduling decisions $\b{c}_0,...,\b{c}_{t-1}$ and previously received updates $y_0,...,y_{t-1}$. 
    \item \textit{Non-oblivious policies}: The scheduling decision at time $t$ can also be a function of the current state $\b{x}_t$.
\end{enumerate}
We are interested in the problem of centralized scheduling, where the base station decides which sensor to schedule at the beginning of each time slot. Since the base station makes the scheduling decision without access to the current state $\b{x}_t$ or the instantaneous errors at the sensors, \textit{throughout this work we will only consider oblivious and causal scheduling policies}.

Next, we describe the optimal MMSE estimator for our problem. Note that our problem can be viewed as a linear dynamical system (with a system matrix equaling the identity matrix) and time-varying observation matrices $\b{c}_t$. Given the sequence of observation matrices for a discrete time linear dynamical system with Gaussian noise, the MMSE can be obtained by running the Kalman filter algorithm\cite{kalman_new_1960, pei_elementary_2019}.

\subsection{Kalman Recursion}\label{chap:kalmanfilter-sec:recursion}
The Kalman filter computes the estimate of the state at time $t$ in a recursive manner using the estimate at time $t-1$ and the measurement at $t$. We describe the Kalman filter algorithm for our problem below. 

\label{lemma:Kalman_for_sys_model}
\noindent
    For a sequence of scheduling decisions $\b{c}_0,...,\b{c}_t$ and corresponding observations $y_0,...,y_t$ the optimal MMSE estimate $\b{\hat{x}}_t$ at time $t$ for problem \eqref{eqn:general_prob} is given by
    \begin{equation}
        \b{\hat{x}}_{t} = (I - K_t\b{c}_t^T)\b{\hat{x}}_{t-1} + K_t y_t,
    \end{equation}
    where $K_t$ is the Kalman gain matrix given by
    \begin{equation}
        K_t = \frac{P_{t}\b{c}_t}{\b{c}_t^TP_{t}\b{c}_t},
    \end{equation}
    and the matrix $P_t$ satisfies the following recursion
    \begin{equation}
    \label{eqn:Kalman_cov_eq}
        P_{t+1} = P_{t} - \frac{P_{t}\b{c}_t \b{c}_t^TP_{t}}{\b{c}_t^TP_{t}\b{c}_t} + Q, \forall t.
    \end{equation}

    The initial estimate $\b{\hat{x}}_0$ and the matrix $P_0$ are assumed to be known.

    Given the best MMSE estimate of the state at $t-1$, the probability distribution of the state at $t$ is a Gaussian distribution with the MMSE estimate as the mean, and a certain covariance. This is because of the Gaussian nature of $\b{w}_t$ and the linearity of the process. In other words, $\b{x}_t$ conditioned on all the measurements until $t-1$ is a multivariate normal random variable of the form $\b{x}_t \sim \mathcal{N}(\b{\hat{x}}_{t-1}, P_{t-1})$. Intuitively, the next best MMSE estimate would be a weighted average of the current best MMSE estimate and the new measurement at time $t$. The Kalman filter algorithm gives us the optimal way to construct these weights. 
    
    Refer to Appendix \ref{appendix:kalman_derivation} for the detailed derivation of the recursions for problem \eqref{eqn:general_prob}.

Observe that the estimate at $t$ is a linear function of the estimate at time $t-1$ and the observation made at time $t$. Further, note that the matrix $P_t$ in the Kalman recursion has a special property - it is the covariance matrix of the state estimate at time $t$ before the measurement at time $t$. Moreover, it is related to the covariance matrix of the random vector $\b{x}_t - \b{\hat{x}}_t$ as follows:
\begin{equation}\label{eqn:covariance_relation}
    P_{t+1} = \mathbb{E}\bigg[ \big(\b{x}_t - \b{\hat{x}}_t\big)\big(\b{x}_t - \b{\hat{x}}_t\big)^T \bigg] + Q.
\end{equation}

From the above equation, we observe that the diagonal elements of the matrix $P_{t+1} - Q$ represent the expected squared error between the actual state and the estimate of the $i$th process at time $t$, i.e.
\begin{equation}
    \E[(x^i_t - \hat{x}^i_t)^2] = (P_{t+1} - Q)_{ii}.
\end{equation}

Thus, the expected monitoring error at time $t$, under a scheduling policy $\pi \in \Pi$, can be obtained by running the Kalman filter recursion for the same policy, i.e.
\begin{equation}\label{eqn:expec_error_P_t_equivalence}
\begin{aligned}
    \E_{\pi}\left[||\b{x}_t - \b{\hat{x}}_{t}||^2_2\right] &= \sum_{i=1}^M\E_{\pi}\left[(x^i_t - \hat{x}^i_t)^2\right] = \E_\pi[\tr{P_{t+1}}] - \tr{Q},
\end{aligned}
\end{equation}
where $\Pi$ is the class of all oblivious scheduling policies. The expectation on the right-hand side indicates expectation over stochastic policies which allow randomness in scheduling.

\subsection{Decoupling of Estimation and Scheduling}
From the discussions in the previous subsection, we can establish a helpful result that lets us decouple the estimation and scheduling problem. 
\begin{lemma}\label{lemma:decoupling}
    When restricted to oblivious policies, the estimation and scheduling problem in \eqref{eqn:general_prob} can be decoupled.
\end{lemma}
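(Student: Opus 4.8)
The plan is to leverage the two facts just established in Section~\ref{chap:kalmanfilter}. First, by the classical MMSE result cited after \eqref{eqn:cond_exp}, for \emph{any fixed} oblivious scheduling sequence the estimator that minimizes the monitoring error in \eqref{eqn:general_prob} is the conditional mean $\b{\hat{x}}_t=\E[\b{x}_t\mid y_0,\dots,y_t]$, computed by the Kalman recursion of Section~\ref{chap:kalmanfilter-sec:recursion}. Second, by \eqref{eqn:expec_error_P_t_equivalence}, plugging that estimator back in turns the objective into $\lim_{T\to\infty}\frac{1}{T}\sum_{t=1}^T\big(\E_\pi[\tr{P_{t+1}}]-\tr{Q}\big)$, with $P_t$ evolving via \eqref{eqn:Kalman_cov_eq}. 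Decoupling will follow once I observe that the recursion \eqref{eqn:Kalman_cov_eq} depends on the policy only through the scheduling vectors $\b{c}_t$ and is entirely insensitive to the observation \emph{values} $y_t$.

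Concretely, I would organize the argument in three steps. \textbf{Step 1 (estimation is solved first, independently of scheduling).} Fix an arbitrary oblivious policy. For each $t$ and each realization of the schedule, $\b{\hat{x}}_t=\E[\b{x}_t\mid y_0,\dots,y_t]$ is the \emph{pointwise} minimizer of $\E[\|\b{x}_t-\b{\hat{x}}_t\|_2^2]$, it is causal, and by linearity of \eqref{eqn:sys_model} and Gaussianity of $\b{w}_t$ it is realized by the Kalman recursion whose \emph{form} does not depend on which oblivious schedule is in force; only the inputs $\b{c}_t$ change. Since the time-average cost in \eqref{eqn:general_prob} is a limit of nonnegative averages of these per-slot errors, the single estimator that is optimal for every $t$ is also optimal for the full objective, for every schedule. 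Hence the inner minimization over $\{\eta_t\}$ can be carried out once and for all, yielding the Kalman filter, without reference to the scheduling variables.

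\textbf{Step 2 (the residual scheduling problem involves only the deterministic matrix recursion).} Substituting the Kalman estimator into \eqref{eqn:general_prob} and using \eqref{eqn:expec_error_P_t_equivalence}, the problem reduces, up to the constant $\tr{Q}$, to
\[
\min_{\{\b{c}_t\}\text{ oblivious}}\ \lim_{T\to\infty}\frac{1}{T}\sum_{t=1}^T\E\!\left[\tr{P_{t+1}}\right]\quad\text{subject to }\eqref{eqn:Kalman_cov_eq}\text{ and }\b{c}_t\in\{\b{e}_1,\dots,\b{e}_M\}.
\]
\textbf{Step 3 (the recursion ignores the data).} Inspecting \eqref{eqn:Kalman_cov_eq}, the map $(P_t,\b{c}_t)\mapsto P_{t+1}$ does not involve $y_t$; unrolling it shows $P_{t+1}$ is a deterministic function of $P_0$, $Q$ and $\b{c}_0,\dots,\b{c}_t$ alone. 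Therefore the scheduling subproblem is an optimization purely over the combinatorial sequence $\{\b{c}_t\}$ and the deterministic recursion \eqref{eqn:Kalman_cov_eq} — it mentions neither the estimator $\{\eta_t\}$ nor the realized observations — which is exactly what it means for estimation and scheduling to decouple.

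The only point requiring care — everything else is bookkeeping — is Step~1: justifying that the estimator may be optimized \emph{before} and \emph{independently of} the schedule. This is because $\b{\hat{x}}_t=\E[\b{x}_t\mid y_0,\dots,y_t]$ minimizes $\E[\|\b{x}_t-\b{\hat{x}}_t\|_2^2]$ simultaneously for every $t$, so there is no cross-slot trade-off, and no trade-off against the choice of schedule, that a different estimator could exploit. A minor related subtlety in the randomized-schedule case is that the monitor knows the realized $\b{c}_t$ (it is the scheduler's own output), so conditioning on $y_0,\dots,y_t$ \emph{and} the realized schedule still gives the Kalman filter of Section~\ref{chap:kalmanfilter-sec:recursion}; the expectation in \eqref{eqn:expec_error_P_t_equivalence} is then only over the scheduling randomness, and the displayed scheduling problem above remains purely a statement about the recursion \eqref{eqn:Kalman_cov_eq}.
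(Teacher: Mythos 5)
Your proposal is correct and takes essentially the same route as the paper: for any oblivious policy the scheduling decision at time $t$ is determined by information available before $t$, so the MMSE estimator is the Kalman filter regardless of the schedule, and the residual scheduling problem depends only on the covariance recursion \eqref{eqn:Kalman_cov_eq}, which is insensitive to the realized observation values. Your Steps 2--3 merely make explicit the reduction to \eqref{eqn:kalman_prob_prior} that the paper states immediately after the lemma, and your closing remark about conditioning on the realized schedule matches the paper's ``schedule known a priori'' argument.
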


Note that Lemma \ref{lemma:decoupling} requires the policy to be \textit{oblivious}. This is because equation \eqref{eqn:cond_exp} depends on $\pi(t)$, and if $\pi(t)$ depended on instantaneous error \textit{(non-oblivious policy)}, the scheduling and estimation problems could be coupled. For example, consider a scheduling policy which schedules the sensor which has maximum instantaneous error, i.e., $\pi(t) = \argmax_i |x^i_t - \hat{x}^i_t|$ at time $t$. Then, at time $t$, we know that the instantaneous errors of all the other sensors is bounded, i.e., $|x^{i}_t - \hat{x}^{i}_t| \leq |x^{\pi(t)}_t - \hat{x}^{\pi(t)}_t|$ for all $i$. The conditional expectation in \eqref{eqn:cond_exp} would exploit this information, hence coupling the scheduling decisions and the estimate $\hat{\b{x}}_t$. We now provide a simple proof for Lemma \ref{lemma:decoupling}.

\begin{proof}
     Considering only oblivious policies, the scheduling policy at $t$ is independent of  $\b{x}_t - \b{\hat{x}}_t$. It depends on the scheduling decisions and observations until $t-1$. In other words, the scheduling decision  $\pi(t)$ is fully determined at time $t-1$. We know that the optimal estimate at time $t$ is
    \begin{equation}\label{eqn:mmse}
        \b{\hat{x}}_t = \E[\b{x}_{t}|x^{\pi(t)}_t, x^{\pi(t-1)}_{t-1}, x^{\pi(t-2)}_{t-2},...,x^{\pi(0)}_{0}].
    \end{equation}
   Since $\pi(t)$ is decided \textit{before} time $t$, the sequence $\{\b{c}_k\}_{k=0}^{t}$ becomes known before time $t$, at time $t-1$. Once the measurement sequence is known \textit{a priori}, the MMSE takes the form of the Kalman Filter. Since this is true for any time $t$, the optimal estimator for oblivious policies is the Kalman Filter.  
\end{proof}
Observe that from the \textbf{orthogonality principle of the MMSE estimator}, all the information we can obtain about the process at time $k=t$ from the measurements from time $k=0$ to $k=t-1$ has been extracted by the MMSE estimator and is encoded in $\b{\hat{x}}_{t-1}$ and $P_{t-1}$.
 Even though oblivious policies can consider the realization of the measurements until $t-1$ to make the scheduling decision, there is no loss in optimality even if we do not consider them. 
 Observe that the recursion of $P_t$ as described in \eqref{eqn:Kalman_cov_eq} does not depend on the realization of the measurements, but only the scheduling decisions $\b{c}_t$. From the decoupling argument in Lemma \ref{lemma:decoupling} and the discussion above, we obtain the following equivalent optimization problem.
\noindent
    For the class of oblivious scheduling policies $\Pi$, the optimization problem \eqref{eqn:general_prob} is equivalent to
    \begin{equation}
    \begin{aligned}
    \min_{\pi \in \Pi} \quad & \lim_{T \to \infty} \frac{1}{T} \sum_{t=1}^T \E[\tr{P_t}]\\
    \textrm{s.t.}  \quad & P_{t+1} = P_t - \frac{P_t\b{c}_t\b{c}_t^TP_t}{\b{c}_t^TP_t\b{c}_t} + Q, \forall t,\\
      &\b{c}_t \in \{\b{e}_1,...,\b{e}_M \}, \forall t,
    \end{aligned}\label{eqn:kalman_prob_prior}
    \end{equation}
where $\pi$ denotes a scheduling policy, that involves choosing $\b{c}_t, \forall t \in \mathbb{N}$.


\subsection*{Remarks on Scheduling Policies}
Note that the Kalman recursion causes the mean estimation error $\tr{P_t}$ to depend on the entire history of scheduling decisions until time $t$. From Kalman filtering theory, we know that for optimal reconstruction of $\b{\hat{x}}_t$, we require information from all measurements up to time $t$. To solve (\ref{eqn:kalman_prob_prior}) for the finite horizon (i.e., finite $T$), the optimal schedule would be obtained from a dynamic program. However, this approach suffers from the curse of dimensionality, thus making it intractable to solve directly. 

One would naturally think of a policy which is greedy, i.e., schedule the sensor which minimizes the  error (i.e., $\tr{P_{t+1}}$) at each time step would be sufficiently close to optimal. However we show via simulations in chapter \ref{chap:simulations} that such a greedy policy is highly suboptimal. This is due to its myopic nature. 

It is interesting to note that $P_{t}$ does not converge when $t$ goes to infinity. This is because $\b{c}_t$ is a rank 1 matrix, and so, because of the dynamics of $P_t$, the quantity $P_{t+1} - P_t$ cannot be bounded in any normed sense. The long-term evolution of $P_t$ does not show any clear regularity. This is because the optimal MMSE estimator $\b{\hat{x}}_t = \E[\b{x}_{t}|x^{\pi(t)}_t, x^{\pi(t-1)}_{t-1}, x^{\pi(t-1)}_{t-1}, x^{\pi(t-2)}_{t-2},...,x^{\pi(0)}_{0}]$ cannot be truncated to a conditional expectation with finite history when the sensors are correlated, i.e. the entire sequence of scheduling decisions is necessary to construct the MMSE estimator at time $t$. This dependence on the entire scheduling history again makes the problem intractable. We solve this dilemma in the next section by designing lower and upper bounds on the monitoring error under any oblivious scheduling policy, using the AoIs of the sensors. This allows us to gain insight into the structure of the problem and propose efficient approximate solutions. 

	\section{Monitoring Error and Age-of-Information (AoI)}\label{chap:aoi-error-relation}
In this section, we derive lower and upper bounds for the monitoring error of each sensor using its AoI. 
Throughout, we assume that the covariance matrix $Q$ is invertible. This implies that $Q$ must also be strictly positive definite, i.e. $Q \succ 0$. The assumption is fairly general - it says that no sensor measurement can be viewed as a linear combination of other processes in the system. In other words, each of the $M$ processes has some noise innovation that is novel to itself. We will study the special case when $Q$ is non-invertible separately in chapter \ref{chap:scaling}.

 Let the age of information (AoI) of sensor $i$ be denoted by $h_i(t)$. Then, the AoI of source $i$ is defined as the time elapsed since the sensor $i$ was last scheduled. It evolves as follows 
\begin{equation}
\label{eqn:AoI_ev}
    h_i(t) = \begin{cases}
        h_i(t-1) + 1, \text{ if } \b{c}_t \neq \b{e}_i\\
        0, \text{ otherwise.}
    \end{cases}
\end{equation}
Further, let us denote the $ij^{th}$ element of the error covariance matrix $P_t$ as $p_{ij}(t)$, and the $ij^{th}$ element of the noise increment covariance matrix $Q$ as $q_{ij}$. Recall from our discussion in Section \ref{chap:kalmanfilter-sec:recursion} 
that the expected error in monitoring the $i$th process is related to the $i$th diagonal element of $P_{t+1}$, i.e.
\begin{equation}
    \E\left[(x^i_t - \hat{x}^i_t)^2\right] = p_{ii}(t+1) - q_{ii}.
\end{equation}

We will use this relationship between the monitoring error and elements of $P_t$ to upper bound the estimation error for each process with its weighted AoI. If we can successfully bound $p_{ii}(t)$, then we can bound the estimation error of the $i^{th}$ process.

\subsection{Bounds on Monitoring Error}
\begin{theorem}\label{thm:upper_bound}
The estimation error for process $i$ under a scheduling policy $\pi$, can be upper bounded using the AoI of the same process under the same scheduling policy as follows
\begin{align}\label{eqn:thm_upper_bound}
    \E_{\pi}\left[(x^i_t - \hat{x}^i_t)^2\right] \leq q_{ii}\E_\pi[h^\pi_i(t)].
\end{align}
Here $q_{ii}$ is the $i$th diagonal element of the matrix $Q$, and hence the noise variance of the $i^{th}$ process.
\end{theorem}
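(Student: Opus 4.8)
The plan is to prove the pointwise (sample-path) inequality $p_{ii}(t+1)\le \big(h_i(t)+1\big)q_{ii}$ for every realization of the scheduling decisions, and then take expectations: combined with the identity $\E_\pi[(x^i_t-\hat x^i_t)^2]=\E_\pi[p_{ii}(t+1)]-q_{ii}$ that follows from \eqref{eqn:expec_error_P_t_equivalence}, this gives $\E_\pi[(x^i_t-\hat x^i_t)^2]\le q_{ii}\big(\E_\pi[h_i(t)]+1\big)-q_{ii}=q_{ii}\E_\pi[h^\pi_i(t)]$, which is exactly \eqref{eqn:thm_upper_bound}. The whole argument reduces to tracking the single scalar sequence $p_{ii}(t)$ through the Kalman covariance recursion \eqref{eqn:Kalman_cov_eq}.

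First I would extract two elementary facts by reading off the $(i,i)$ entry of \eqref{eqn:Kalman_cov_eq}. Using symmetry of $P_t$, the $(i,i)$ entry of the subtracted rank-one term equals $(P_t\b{c}_t)_i^2 / (\b{c}_t^TP_t\b{c}_t)$. \textbf{(a) Reset.} If $\b{c}_t=\b{e}_i$, then $(P_t\b{e}_i)_i=p_{ii}(t)=\b{e}_i^TP_t\b{e}_i$, so this term equals $p_{ii}(t)^2/p_{ii}(t)=p_{ii}(t)$ and therefore $p_{ii}(t+1)=p_{ii}(t)-p_{ii}(t)+q_{ii}=q_{ii}$. \textbf{(b) Bounded growth.} For any $\b{c}_t$, the subtracted term is a square divided by $\b{c}_t^TP_t\b{c}_t>0$ (positivity holds because $P_t\succeq Q\succ0$ for $t\ge1$, the bracketed Kalman update being a conditional covariance and hence positive semidefinite), so it is nonnegative and $p_{ii}(t+1)\le p_{ii}(t)+q_{ii}$.

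Next I would unroll these two facts over the AoI window. Fix $t$ and set $k=h_i(t)$. By \eqref{eqn:AoI_ev}, if $k\ge1$ then $\b{c}_{t-k}=\b{e}_i$ and $\b{c}_s\ne\b{e}_i$ for $s=t-k+1,\dots,t$. Fact (a) gives $p_{ii}(t-k+1)=q_{ii}$, and applying fact (b) across the $k$ steps $s=t-k+1,\dots,t$ yields $p_{ii}(t+1)\le p_{ii}(t-k+1)+kq_{ii}=(k+1)q_{ii}$. If $k=0$ then $\b{c}_t=\b{e}_i$ and fact (a) directly gives $p_{ii}(t+1)=q_{ii}=(k+1)q_{ii}$. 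In both cases $p_{ii}(t+1)\le(h_i(t)+1)q_{ii}$ on every sample path, so $p_{ii}(t+1)-q_{ii}\le h_i(t)q_{ii}$; taking expectations over the (possibly randomized) policy $\pi$ completes the proof, and in fact establishes the stronger sample-path bound $(x^i_t-\hat x^i_t)^2\le q_{ii}h_i(t)$ in conditional-mean form.

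Since this is in essence a bookkeeping argument, the only subtlety is the boundary case in which process $i$ has not been scheduled at all by time $t$ (so $h_i(t)=t$ and there is no reset time in $\{0,\dots,t\}$): iterating fact (b) from $t=0$ only gives $p_{ii}(t+1)\le p_{ii}(0)+(t+1)q_{ii}$, so one needs the mild normalization $p_{ii}(0)\le q_{ii}$ (e.g. $P_0=0$, or more generally $P_0\preceq Q$) to recover the stated bound in that regime; I would either adopt this as a standing assumption or remark that it is irrelevant for the time-average objective, affecting only a finite transient. A second minor point is the degenerate case $\b{c}_t^TP_t\b{c}_t=0$ in \eqref{eqn:Kalman_cov_eq}, which under $Q\succ0$ can occur only at $t=0$ with a singular $P_0$; it is handled by the standard convention that measuring a zero-variance coordinate leaves the covariance unchanged (up to the additive $Q$), after which facts (a) and (b) apply verbatim.
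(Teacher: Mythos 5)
Your proof is correct and follows essentially the same route as the paper's: read off the $(i,i)$ entry of the Kalman recursion \eqref{eqn:Kalman_cov_eq}, use nonnegativity of the subtracted rank-one term to get $p_{ii}(t+1)\le p_{ii}(t)+q_{ii}$, use the reset $p_{ii}(t_i+1)=q_{ii}$ when sensor $i$ is scheduled, telescope over the AoI window, and convert via $\E_\pi[(x^i_t-\hat x^i_t)^2]=\E_\pi[p_{ii}(t+1)]-q_{ii}$. Your extra remarks on the never-scheduled transient and the degenerate denominator are finer-grained than the paper's treatment but do not change the argument.
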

\begin{proof}
When sensor $i$ is scheduled, its error goes down to zero. For every subsequent time slot in which it is not scheduled, its expected error can grow by at most $q_{ii}$, which is the variance of the noise increment to process $i$. If there were other processes correlated with the $i$th process, then its expected error would have grown by something smaller than $q_{ii}$, since the base station would have received partial information about process $i$ when it received an update from a correlated sensor. Ignoring this reduction due to correlation, the expected error just grows as the AoI of the $i$th process (multiplied by the noise variance). The details are provided in Appendix \ref{appendix:thm_upper_bound_proof}.
\end{proof}
Note that we derived our upper bound by comparing the correlated case to the uncorrelated setting. Essentially, this upper bound tells us that the monitoring error of the $i^{th}$ source with correlation is no worse than when there is no correlation, a rather intuitive result. 


We next establish Theorem \ref{thm:lower_bound}, which states that the monitoring error of the $i^{th}$ sensor under any scheduling policy can be lower bounded by the product of the age of the $i^{th}$ sensor and an appropriate weight.
\begin{theorem}\label{thm:lower_bound}
The estimation error for process $i$ under a scheduling policy $\pi$, can be lower bounded using the AoI of the same process under the same scheduling policy as follows
\begin{align}\label{eqn:thm_lower_bound}
    \E_{\pi}\left[(x^i_t - \hat{x}^i_t)^2\right] \geq (q_{ii}-\b{q}_i^TQ_{-i}^{-1}\b{q}_i) \E_\pi[h^\pi_i(t)] \triangleq \Tilde{q}_{ii} \E_\pi[h^\pi_i(t)].
\end{align}
Here $q_{ii}$ is $i$th diagonal element of the matrix $Q$ i.e. the noise variance of the $i^{th}$ process, $\b{q}_i$ is the covariance of the noise of the $i^{th}$ process with the other processes, and $Q_{-i}$ is the noise covariance submatrix of the other processes.
\end{theorem}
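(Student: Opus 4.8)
The plan is to lower-bound the monitoring error of process $i$ by the error incurred by a \emph{genie-aided} estimator, and then compute the latter exactly using the Kalman covariance recursion. Since both $P_t$ and $h_i(t)$ depend only on the scheduling sequence $\b{c}_0,\dots,\b{c}_t$ (not on the observations), the cleanest route is to first prove the sample-path inequality $p_{ii}(t+1) - q_{ii} \geq \Tilde{q}_{ii}\, h_i(t)$ for every deterministic schedule and then take expectations over the policy's randomization to obtain \eqref{eqn:thm_lower_bound}; for an oblivious policy whose decisions depend on past observations one argues directly at the level of $\E_\pi$ by conditioning on the $\sigma$-algebra generated by the genie's information.

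The genie is defined as follows: in addition to the actual observations $y_0,\dots,y_t$, it is handed, at every time $s\le t$, the true values of all the \emph{other} coordinates $\b{x}_s^{-i}$ (and, to be safe about the initial condition, $\b{x}_0$). Let $P^g_t$ denote the error covariance produced by running the Kalman recursion for this enlarged observation model. Because the genie conditions on a superset of the monitor's information at every time step, and because the process is Gaussian (so that conditional covariances are deterministic given a fixed schedule), $P^g_t$ is dominated by $P_t$ in the positive-semidefinite order; in particular the genie's coordinate-$i$ error is at most $p_{ii}(t+1)-q_{ii}$. The key structural observation is that, since the genie knows $\b{x}_s^{-i}$ exactly at every $s$, its posterior covariance is always supported entirely on coordinate $i$ --- zero in the $(-i,-i)$ block and in the cross terms --- so propagating one step forward yields a prior covariance at time $s+1$ whose $(-i,-i)$ block is $Q_{-i}$, whose cross block is $\b{q}_i$, and whose $(i,i)$ entry is $\sigma_s^2+q_{ii}$, where $\sigma_s^2$ is the genie's posterior variance of $x_s^i$. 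Applying the Gaussian conditioning (Schur complement) formula for revealing $\b{x}_{s+1}^{-i}$, and noting that the posterior variance resets to $0$ whenever sensor $i$ is scheduled, gives the scalar recursion $\sigma_{s+1}^2 = 0$ if $\b{c}_{s+1}=\b{e}_i$ and $\sigma_{s+1}^2 = \sigma_s^2 + \big(q_{ii}-\b{q}_i^T Q_{-i}^{-1}\b{q}_i\big) = \sigma_s^2 + \Tilde{q}_{ii}$ otherwise. This is precisely the AoI recursion \eqref{eqn:AoI_ev} scaled by $\Tilde{q}_{ii}$, so $\sigma_t^2 = \Tilde{q}_{ii}\,h_i(t)$, and combining with the domination above yields $p_{ii}(t+1)-q_{ii}\ge \Tilde{q}_{ii}\,h_i(t)$; taking expectations gives the theorem. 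Here $Q\succ 0$ is used to guarantee that $Q_{-i}$ is invertible and that $\Tilde{q}_{ii}>0$, so the bound is nontrivial.

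I expect the main obstacle to be the bookkeeping behind the domination step for general observation-dependent oblivious policies, where the conditional covariance given the monitor's information is no longer a deterministic matrix. There one must work at the level of $\E_\pi$ and verify that, conditioned on the genie's $\sigma$-algebra, the residual increment $\sum_{s=t-h_i(t)}^{t-1} w_s^i$ is still Gaussian with variance $\Tilde{q}_{ii}\,h_i(t)$. This hinges on the fact that, for an oblivious policy, the scheduling decisions and the observations up to the last time $i$ was scheduled involve the $w^i$-increments only at \emph{earlier} times, which --- by independence of $\b{w}_s$ across time --- are independent of the increments accumulating within the current age window even after conditioning on the contemporaneous $w^{-i}_s$'s; everything else the genie knows ($\b{x}_0$, the policy's internal randomness, and the $\b{x}^{-i}$-coordinates) is likewise independent of those increments. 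Once this is in place, the Schur-complement computation above is essentially the statement that $\Tilde{q}_{ii}=q_{ii}-\b{q}_i^TQ_{-i}^{-1}\b{q}_i$ is the conditional variance of $w^i_s$ given $w^{-i}_s$, which accumulates additively over the age window.
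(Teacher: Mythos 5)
Your proposal is correct, but it proves the bound by a genuinely different route than the paper. The paper's proof (Appendix~\ref{appendix:thm_lower_bound_proof}) stays inside the Kalman/Riccati formalism: it relaxes the scheduling constraint so that, whenever $i$ is not scheduled, the scheduler may observe an \emph{arbitrary linear combination} of the other sensors, block-partitions $P_t$ and $Q$, bounds the one-step reduction of $p_{ii}(t)$ by a Rayleigh-quotient maximization (Lemma~\ref{lemma:lb_main_lem}) combined with $R_t \succeq Q_{-i}$ (Lemma~\ref{lemma:pt_geq_q}), and pins down the cross-covariance via the particular choice $\b{b}_t = R_t^{-1}\b{p}_1(t)$, yielding the per-slot inequality $p_{ii}(t+1) \geq q_{ii} + \Tilde{q}_{ii}h_i(t)$. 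You instead use a genie argument: hand the estimator all coordinates $\b{x}_s^{-i}$ exactly at every time, note that the monitor's estimator is measurable with respect to the genie's information (so the genie's MSE on coordinate $i$ can only be smaller --- for this scalar comparison you do not even need the full positive-semidefinite domination you invoke), and compute the genie's error exactly: since $\Tilde{q}_{ii} = q_{ii} - \b{q}_i^TQ_{-i}^{-1}\b{q}_i$ is precisely the conditional variance of $w^i_s$ given $w^{-i}_s$ and the increments are independent across time, the genie's posterior variance obeys $\sigma_{s+1}^2 = \sigma_s^2 + \Tilde{q}_{ii}$ off scheduling slots and resets to zero when $i$ is scheduled, i.e.\ it equals $\Tilde{q}_{ii}h_i(t)$. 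Your route is conceptually cleaner and more interpretable (the weight $\Tilde{q}_{ii}$ appears directly as a conditional variance rather than emerging from a matrix optimization), and your second paragraph correctly identifies and resolves the main subtlety for observation-dependent oblivious policies --- the within-window increments $w^i_s$ are independent of everything that determines the schedule and the genie's information up to time $t$, so the conditional variance accumulates additively even though $t_i$ and $h_i(t)$ are random; the paper handles this implicitly by working with the observation-independent recursion for $P_t$. What the paper's heavier machinery buys is a set of reusable structural lemmas ($P_t \succeq Q$, the Rayleigh-quotient and inverse-monotonicity facts) that reappear in the scaling results of Section~\ref{chap:scaling} and in the extension to diagonal $A$ matrices. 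One small piece of bookkeeping shared by both arguments and worth stating explicitly in yours: the slots before sensor $i$ is first scheduled, where the telescoping/reset must be anchored at the known initial covariance $P_0$ rather than at a scheduling instant.
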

\begin{proof}
To prove the lower bound, we calculate an upper bound for the maximum possible reduction in expected error due an update being received from sources other than $i$, in time-slots when $i$ is not scheduled. We do this by optimizing over the space of linear combinations of all sources other than $i$ and finding a linear combination that provides the maximum information regarding $i$. Of course, such a linear combination cannot actually be scheduled since we can only observe one dimension at a time. However, this leads to a lower bound since it overestimates the benefit due to correlation. The detailed proof of theorem \ref{thm:lower_bound} can be found in Appendix \ref{appendix:thm_lower_bound_proof}.
\end{proof}
Since $Q$ is positive definite, we know that $q_{ii} > 0, \forall i$ and $\Tilde{q}_{ii} = q_{ii} - \b{q}_i^TQ_{-i}^{-1}\b{q}_i > 0$ from the properties of positive definite matrices and their Schur complements \cite{boyd_block_nodate}. Thus, the lower bound derived above leads to a meaningful and non-trivial bound. It says that despite the presence of correlation, the expected error for any sensor under any oblivious policy is lower bounded by its corresponding weighted AoI. Thus, the expected error always lies between two weighted AoIs. In chapter \ref{chap:schedule_policy}, we will use this insight to design scheduling policies with constant factor optimality guarantees.

\subsection*{Discussion on Developed Bounds}
Note that in \eqref{eqn:thm_upper_bound}, the upper bound for source $i$ is the one-step variance of process $i$ times the age of sensor $i$. Whereas in \eqref{eqn:thm_lower_bound}, the weighting factor of the AoI of sensor $i$ is the maximum reduction in variance due to the correlation of source $i$ with the other sources subtracted from one-step variance of process $i$ itself. At a high level, while the upper bound is overly pessimistic regarding the benefits of correlation, the lower bound is overly optimistic. Moreover, in the special case when $Q$ is diagonal, i.e. the processes are uncorrelated, the two bounds match. 
\begin{figure}[!hbt]
    \centering
       
 \includegraphics[width=\linewidth]{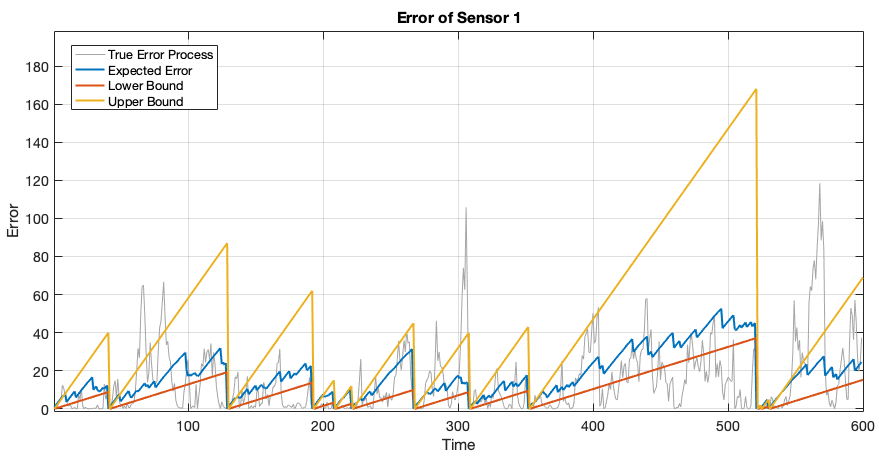}
  \caption{Upper and lower bounds demonstrated for $M = 20$ sensors. The scheduling process is the optimal stationary randomized schedule\cite{kadota_scheduling_2018}. The $Q$ matrix represents a highly correlated setting, where the diagonal elements are $q_{ii} = 1$, and the non-diagonal elements are $q_{ij} = 0.8$.}
  \label{fig:ub_lb_time_demo} 
\end{figure}

Figure \ref{fig:ub_lb_time_demo} demonstrates how the upper and lower bounds look for a setting with highly correlated sensors. Observe that the expected error process is fairly close to the lower bound obtained from Theorem \ref{thm:lower_bound}. But the lower bound is not achievable in general, because the ``maximum correlation" in a one-step update is overly optimistic given our communication constraints. Further, note that the true monitoring error is random, and oscillates around the expected error process. Throughout this work, we will focus on minimizing the expected error.

It is easy to verify that both the bounds are tight when $Q$ is diagonal. If sensor $i$ is uncorrelated with other sensors, then $\b{q}_i = 0$, and ${q}'_{ii} = q_{ii}$. Consequently, we have
\begin{equation*}
    \E_\pi[(x^i_t - \hat{x}^i_t)^2] = q_{ii}\E_\pi[h^\pi_i(t)].
\end{equation*}

        \section{Scaling of Optimal Monitoring Error}\label{chap:scaling}
In this section, we show that the estimation error of the optimal scheduling scheme for correlated sources is of the order of the square of the number of linearly independent processes being monitored. 

\subsection{The Invertible Case}\label{subsection:full_rank}
\begin{theorem}\label{thm:order_of_mag_full}
For an invertible covariance matrix $Q$, the monitoring error achieved by the optimal scheduling policy $\pi^*$ which minimizes (\ref{eqn:kalman_prob_prior}) satisfies
    \begin{equation}\label{eqn:p_opt_lb_and_ub}
    \begin{aligned}
        \frac{1}{2}\left(\sum_{i=1}^M\sqrt{\Tilde{q}_{ii}}\right)^2 - \frac{1}{2}\sum_{i=1}^M \Tilde{q}_{ii}\leq P_{OPT} \leq \left(\sum_{i=1}^M\sqrt{q_{ii}}\right)^2-\sum_{i=1}^Mq_{ii}.
    \end{aligned}
    \end{equation}
    Here $P_{OPT}$ is the monitoring error under the policy $\pi^*$.
\end{theorem}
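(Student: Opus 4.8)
The plan is to sandwich $P_{OPT}$ between two weighted-sum Age-of-Information (AoI) quantities using Theorems~\ref{thm:upper_bound} and~\ref{thm:lower_bound} together with \eqref{eqn:expec_error_P_t_equivalence}, and then to solve the resulting weighted-sum AoI scheduling problem, which admits a closed-form answer up to a constant factor. Throughout, write $\bar h_i^\pi \triangleq \lim_{T\to\infty}\frac1T\sum_{t=1}^T \E_\pi[h_i^\pi(t)]$ for the time-average AoI of sensor $i$ under an oblivious policy $\pi$, and note that invertibility of $Q$ guarantees $\Tilde{q}_{ii} = q_{ii} - \b{q}_i^T Q_{-i}^{-1}\b{q}_i > 0$, so the bound of \eqref{eqn:thm_lower_bound} is nonvacuous. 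Summing \eqref{eqn:thm_upper_bound} and \eqref{eqn:thm_lower_bound} over $i$ and using $\E_\pi[\|\b{x}_t - \b{\hat{x}}_t\|_2^2] = \sum_{i=1}^M \E_\pi[(x^i_t - \hat{x}^i_t)^2]$, one gets, for every $t$ and every $\pi\in\Pi$,
\begin{equation*}
\sum_{i=1}^M \Tilde{q}_{ii}\,\E_\pi[h_i^\pi(t)] \ \le\ \E_\pi\left[\|\b{x}_t - \b{\hat{x}}_t\|_2^2\right] \ \le\ \sum_{i=1}^M q_{ii}\,\E_\pi[h_i^\pi(t)].
\end{equation*}
Taking Ces\`aro means and letting $T\to\infty$ yields $\sum_i \Tilde{q}_{ii}\,\bar h_i^\pi \le P^\pi \le \sum_i q_{ii}\,\bar h_i^\pi$, where $P^\pi$ is the time-average monitoring error of $\pi$. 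Since $P_{OPT} = \inf_{\pi\in\Pi} P^\pi$, it remains to lower bound $\inf_{\pi\in\Pi}\sum_i \Tilde{q}_{ii}\,\bar h_i^\pi$ and to upper bound $\sum_i q_{ii}\,\bar h_i^{\pi_R}$ for a conveniently chosen feasible policy $\pi_R$.

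For the lower bound I would prove the universal inequality $\inf_{\pi\in\Pi}\sum_i w_i\,\bar h_i^\pi \ge \frac12\big(\sum_i \sqrt{w_i}\big)^2 - \frac12\sum_i w_i$ for arbitrary positive weights $\{w_i\}$, and then take $w_i = \Tilde{q}_{ii}$. The core estimate: if sensor $i$ is scheduled $n_i$ times over $T$ slots with inter-service gaps $\{d_k\}$, its accumulated AoI equals $\sum_k \frac{d_k(d_k-1)}{2} = \frac12\sum_k d_k^2 - \frac{T}{2}$ up to $O(1)$ boundary terms, and Cauchy--Schwarz gives $\sum_k d_k^2 \ge (\sum_k d_k)^2/n_i = T^2/n_i$, hence $\bar h_i^\pi \ge \frac{1}{2f_i} - \frac12$ with $f_i = n_i/T$ and $\sum_i f_i = 1$. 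A second Cauchy--Schwarz, $\big(\sum_i \sqrt{w_i}\big)^2 \le \big(\sum_i \frac{w_i}{f_i}\big)\big(\sum_i f_i\big) \le \sum_i \frac{w_i}{f_i}$, then gives $\sum_i w_i \bar h_i^\pi \ge \frac12\sum_i \frac{w_i}{f_i} - \frac12\sum_i w_i \ge \frac12\big(\sum_i\sqrt{w_i}\big)^2 - \frac12\sum_i w_i$. Making this uniform over all (randomized, history-dependent) oblivious policies requires the standard renewal/convexity argument for average AoI; I would invoke or adapt the universal AoI lower bound of \cite{kadota_scheduling_2018}. This yields the left-hand inequality of \eqref{eqn:p_opt_lb_and_ub}.

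For the upper bound I would exhibit the stationary randomized policy $\pi_R$ that schedules sensor $i$ independently in each slot with probability $f_i^\star = \sqrt{q_{ii}}\,/\,\sum_{j}\sqrt{q_{jj}}$ (these probabilities sum to one, so $\pi_R\in\Pi$). Under $\pi_R$ the AoI of sensor $i$ is an ergodic Markov chain with stationary mean $\bar h_i^{\pi_R} = \frac{1}{f_i^\star} - 1$, so
\begin{equation*}
\sum_{i=1}^M q_{ii}\,\bar h_i^{\pi_R} = \sum_{i=1}^M \frac{q_{ii}}{f_i^\star} - \sum_{i=1}^M q_{ii} = \Big(\sum_{i=1}^M \sqrt{q_{ii}}\Big)^2 - \sum_{i=1}^M q_{ii}.
\end{equation*}
Combining with $P_{OPT} \le P^{\pi_R} \le \sum_i q_{ii}\,\bar h_i^{\pi_R}$ gives the right-hand inequality of \eqref{eqn:p_opt_lb_and_ub}.

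I expect the lower-bound step to be the main obstacle: establishing the $\frac12$-factor AoI bound uniformly over \emph{all} oblivious (including randomized and non-stationary) policies, and rigorously justifying the exchange of $\lim_{T\to\infty}$ with the expectation in the definition of $P^\pi$, are the delicate points, whereas the two Cauchy--Schwarz inequalities and the Markov-chain computation for $\pi_R$ are routine. The factor-of-two gap between the two sides of \eqref{eqn:p_opt_lb_and_ub} is intrinsic to this route: the lower bound is calibrated against a perfectly-spaced schedule that generally cannot be realized under the one-sensor-per-slot constraint, while the upper bound uses the genuinely feasible i.i.d.\ randomized schedule, whose per-sensor AoI is about twice as large.
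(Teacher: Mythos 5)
Your proposal is correct and follows essentially the same route as the paper: both sandwich $P_{OPT}$ between $\sum_i \Tilde{q}_{ii}\bar h_i^\pi$ and $\sum_i q_{ii}\bar h_i^\pi$ via Theorems~\ref{thm:upper_bound} and~\ref{thm:lower_bound}, invoke the universal weighted-AoI lower bound of \cite{kadota_scheduling_2018} (whose Cauchy--Schwarz proof you sketch, including the correct handling of the age-starts-at-zero offset) for the left inequality, and use the stationary randomized policy with probabilities $\sqrt{q_{ii}}/\sum_j\sqrt{q_{jj}}$ for the right inequality. No substantive differences or gaps beyond what the paper itself delegates to the cited reference.
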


Theorem \ref{thm:order_of_mag_full} tells us that the monitoring error of the optimal schedule for the case of invertible $Q$ matrices roughly scales as $M^2$ (whenever $q_{ii}$ and $\Tilde{q}_{ii}$ are independent of $M$).
When $\text{rank}(Q) = M$, we can use the results of theorem \ref{thm:upper_bound} and \ref{thm:lower_bound} to bound performance of the optimal scheduling policy. 

\begin{proof}[Proof of Theorem \ref{thm:order_of_mag_full}]
Note that from theorem \ref{thm:upper_bound} and \ref{thm:lower_bound}, for any sequence of schduling decisions $\pi(t)$, we have for every $i = 1,..., M$,
\begin{equation}
    \Tilde{q}_{ii}h^\pi_i(t) \leq \E[(x^i_t -\hat{x}^i_t)^2|\pi] \leq q_{ii}h^\pi_i(t) \label{eqn:thm_aoi_lb_ub_res}
\end{equation}
where the notation is as defined in the previous section. Hence,
\begin{equation}
     \sum_{i=1}^M\Tilde{q}_{ii}h^\pi_i(t) \leq \sum_{i=1}^M\E[(x^i_t -\hat{x}^i_t)^2|\pi] \leq \sum_{i=1}^Mq_{ii}h^\pi_i(t).
\end{equation}
where we have made the dependence on the policy clear. Consequently, 
\begin{align}
    \lim_{T\to\infty}\frac{1}{T}\sum_{t=1}^T\sum_{i=1}^M \E[(x^i_t -\hat{x}^i_t)^2] &\geq \lim_{T\to\infty}\frac{1}{T}\sum_{t=1}^T\sum_{i=1}^M\E[\Tilde{q}_{ii}h^\pi_i(t)]\label{eqn:opt_full_rank_lb}\\
    \lim_{T\to\infty}\frac{1}{T}\sum_{t=1}^T\sum_{i=1}^M\E[(x^i_t -\hat{x}^i_t)^2] &\leq \lim_{T\to\infty}\frac{1}{T}\sum_{t=1}^T\sum_{i=1}^M\E[q_{ii}h^\pi_i(t)]\label{eqn:opt_full_rank_ub}.
\end{align}
Let $\pi^* \in \Pi$ be the optimal scheduling policy for problem (\ref{eqn:kalman_prob_prior}). To obtain a lower bound, we employ (\ref{eqn:opt_full_rank_lb}) to get
\begin{align}
    \min_{\pi \in \Pi}\lim_{T\to\infty}&\frac{1}{T}\sum_{t=1}^T\sum_{i=1}^N\E[(x^i_t -\hat{x}^i_t)^2]\nonumber\\
    &\geq \lim_{T\to\infty}\frac{1}{T}\sum_{t=1}^T\sum_{i=1}^M\E[\Tilde{q}_{ii}h^{\pi^*}_i(t)]\nonumber\\
    &\geq \min_{\pi' \in \Pi}\lim_{T \to \infty}\frac{1}{T}\sum_{t=1}^T\sum_{i=1}^M\E[\Tilde{q}_{ii}h^{\pi'}_i(t)]\nonumber\\
    &\stackrel{(a)}{=} \min_{\pi' \in \Pi}\lim_{T \to \infty}\frac{1}{T}\sum_{t=1}^T\sum_{i=1}^M\E[\Tilde{q}_{ii}(h^{\pi'}_i(t)+1)] - \sum_{i=1}^M \Tilde{q}_{ii}\nonumber\\
    &\stackrel{(b)}{\geq} \frac{1}{2}\left(\sum_{i=1}^M\sqrt{\Tilde{q}_{ii}}\right)^2+\frac{1}{2} \sum_{i=1}^M \Tilde{q}_{ii} -\sum_{i=1}^M\Tilde{q}_{ii}\nonumber\\
    &= \frac{1}{2}\left(\sum_{i=1}^M\sqrt{\Tilde{q}_{ii}}\right)^2-\frac{1}{2} \sum_{i=1}^M \Tilde{q}_{ii}\label{eqn:lb_final_full_rank},
\end{align}
where inequality $(b)$ comes from the universal lower bound of all weighted age of information \cite[Theorem 6]{kadota_scheduling_2018}. The algebraic manipulation in equality $(a)$ is required to employ \cite[Theorem 6]{kadota_scheduling_2018} as their work exploits the fact that AoI is lower bounded by 1.

\noindent
For the upper bound, we employ (\ref{eqn:opt_full_rank_lb}), and note that,
\begin{align}
     \min_{\pi \in \Pi}\lim_{T \to \infty}\frac{1}{T}\sum_{t=1}^T&\sum_{i=1}^M \E[(x^i_t -\hat{x}^i_t)^2]\nonumber\\
     &\leq \lim_{T \to \infty}\frac{1}{T}\sum_{t=1}^T\sum_{i=1}^M\E(x^i_t -\hat{x}^i_t)^2] && \forall \pi \in \Pi, \nonumber\\
     &\leq \lim_{T \to \infty}\frac{1}{T}\sum_{t=1}^T\sum_{i=1}^M\E[q_{ii}h^\pi_i(t)] &&\forall \pi \in \Pi. 
\end{align}
Therefore, if we choose the optimal stationary random policy on the RHS (as defined in \cite{kadota_scheduling_2018}), we get,
\begin{align}
    \min_{\pi \in \Pi}\lim_{T \to \infty}\frac{1}{T}\sum_{t=1}^T\sum_{i=1}^M\E[(x_t^i - \hat{x}^i_t)^2] \leq  \left(\sum_{i=1}^M \sqrt{q_{ii}}\right)^2 - \sum_{i=1}^Mq_{ii}.\label{eqn:ub_final_full_rank}
\end{align}
From equations (\ref{eqn:lb_final_full_rank}) and (\ref{eqn:ub_final_full_rank}), it is easy to see that,
\begin{align}
\frac{1}{2}\left(\sum_{i=1}^M\sqrt{\Tilde{q}_{ii}}\right)^2 - \frac{1}{2}\sum_{i=1}^M\Tilde{q}_{ii}\leq P_{OPT} \leq \left(\sum_{i=1}^M\sqrt{q_{ii}}\right)^2 - \sum_{i=1}^Mq_{ii}.
\end{align}
\end{proof}

\subsection{The Non-Invertible Case}\label{subsection:low_rank}
Now, we bound the performance of optimal policies when some sources are linearly dependent on other sources. We prove a rather intuitive result: the monitoring error of the optimal schedule roughly scales as the square of the dimensionality of the random processes. 
\begin{theorem}\label{thm:order_of_mag_low}
Given a covariance matrix $Q$ with rank $L < M$, the optimal scheduling policy $\pi^*$  which minimizes (\ref{eqn:kalman_prob_prior}) achieves
    \begin{equation}\label{eqn:p_opt_low_rank}
    \begin{aligned}
        \frac{L(L+1)}{2}\lambda_L(Q') - \tr{Q} \leq P_{OPT}
        \leq c\left(\left(\sum_{i=1}^L \sqrt{q'_{ii}}\right)^2 - \sum_{i=1}^Lq'_{ii}\right)
    \end{aligned}
    \end{equation}
    where $Q'$ is a rank $L$ submatrix contained in $Q$, $P_{OPT}$ is the estimation error of the policy $\pi^*$, and $\lambda_{L}(Q')$ is the $L^{th}$-largest eigenvalue of the matrix $Q'$, i.e. the smallest eigenvalue of $Q'$.
\end{theorem}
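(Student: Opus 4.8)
The plan is to reduce the rank-deficient system to a full-rank $L$-dimensional sub-problem and handle the two inequalities in \eqref{eqn:p_opt_low_rank} separately. Since $\text{rank}(Q)=L$, fix a set $S\subseteq\{1,\dots,M\}$ with $|S|=L$ whose principal submatrix $Q':=(Q)_{SS}$ is nonsingular (such an $S$ exists for any symmetric PSD $Q$; relabel so $S=\{1,\dots,L\}$, whence $Q'\succ 0$). I would first observe that every redundant coordinate is a.s.\ a fixed linear function of the essential ones: for $j\notin S$ one has $\var(w^j_t\mid\b{w}^S_t)=0$ by rank-deficiency, so $w^j_t=\b{a}_j^T\b{w}^S_t$ almost surely, where $\b{a}_j$ are the regression coefficients of $w^j_t$ on $\b{w}^S_t$; hence (up to a transient due to $P_0$, addressed below) $x^j_t-\hat{x}^j_t=\b{a}_j^T(\b{x}^S_t-\hat{\b{x}}^S_t)$. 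Extracting the $SS$-block of the Kalman recursion \eqref{eqn:Kalman_cov_eq} then shows $R_t:=(P_t)_{SS}$ obeys
\[
    R_{t+1}=R_t-\frac{R_t\b{v}_t\b{v}_t^TR_t}{\b{v}_t^TR_t\b{v}_t}+Q',\qquad
    \b{v}_t=\begin{cases}\b{e}_{\pi(t)}&\text{if }\pi(t)\in S,\\ \b{a}_{\pi(t)}&\text{if }\pi(t)\notin S,\end{cases}
\]
where the subtracted term is rank-one, positive semidefinite and $\preceq R_t$, so the update keeps $R_t\succ 0$ for $t\ge 1$. Both bounds then follow from comparing the $M$-dimensional problem to this $L$-dimensional one.

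For the lower bound I would apply Weyl's inequality and the rank-one eigenvalue-interlacing inequality to the recursion above: with $\mu:=\lambda_L(Q')$ one gets $\lambda_i(R_{t+1})\ge\lambda_{i+1}(R_t)+\mu$ for $i<L$ and $\lambda_L(R_{t+1})\ge\mu$. Chaining these gives $\lambda_{L-k+1}(R_t)\ge k\mu$ for every $t\ge k$, hence $\tr{R_t}\ge\mu\sum_{k=1}^Lk=\tfrac{L(L+1)}{2}\lambda_L(Q')$ for all $t\ge L$. Since $P_t\succeq 0$ implies $\tr{P_t}\ge\tr{(P_t)_{SS}}=\tr{R_t}$, and the monitoring error equals $\lim_{T\to\infty}\tfrac1T\sum_{t}\E[\tr{P_t}]-\tr{Q}$ by \eqref{eqn:expec_error_P_t_equivalence}, the finitely many terms with $t<L$ wash out in the time average and we obtain $P_{OPT}\ge\tfrac{L(L+1)}{2}\lambda_L(Q')-\tr{Q}$; taking the $S$ that maximizes $\lambda_L((Q)_{SS})$ sharpens this.

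For the upper bound I would use the policy that schedules only within $S$, running on those coordinates the optimal stationary randomized policy of the $L$-dimensional full-rank sub-system with covariance $Q'$. Then the $S$-coordinates evolve exactly as that sub-system, so the upper half of Theorem \ref{thm:order_of_mag_full} (equation \eqref{eqn:ub_final_full_rank}) bounds their time-average error by $\big(\sum_{i=1}^L\sqrt{q'_{ii}}\big)^2-\sum_{i=1}^Lq'_{ii}$, equivalently $\lim_{T\to\infty}\tfrac1T\sum_t\E[\tr{R_t}]\le\big(\sum_{i=1}^L\sqrt{q'_{ii}}\big)^2$. Each redundant coordinate contributes $\E[(x^j_t-\hat{x}^j_t)^2]=\b{a}_j^T(R_{t+1}-Q')\b{a}_j\le\|\b{a}_j\|_2^2\,\tr{R_{t+1}}$, so summing over $j\notin S$ the total time-average error is at most $\big(1+\sum_{j\notin S}\|\b{a}_j\|_2^2\big)\big(\sum_{i=1}^L\sqrt{q'_{ii}}\big)^2$; since $\big(\sum_{i=1}^L\sqrt{q'_{ii}}\big)^2-\sum_{i=1}^Lq'_{ii}=\sum_{i\neq i'}\sqrt{q'_{ii}q'_{i'i'}}>0$ (as $Q'\succ 0$ and $L\ge 2$), this is $\le c\big(\big(\sum_{i=1}^L\sqrt{q'_{ii}}\big)^2-\sum_{i=1}^Lq'_{ii}\big)$ for a constant $c$ depending only on $Q$, which is the right-hand side of \eqref{eqn:p_opt_low_rank}.

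The eigenvalue chaining and the algebraic block extraction are routine; the main obstacle is making the reduction to the sub-system fully rigorous, above all dispatching a general (possibly ``inconsistent'') $P_0$. I expect it to contribute only a bounded additive term to $\tr{P_t}$ — the residual variance of the time-invariant offsets $x^j_0-\b{a}_j^T\b{x}^S_0$ is non-increasing in $t$, hence bounded, and therefore vanishes in the time average, affecting neither the stated bounds nor the $\Theta(L^2)$ scaling — but this needs a careful argument, as does pinning down the precise constant $c$.
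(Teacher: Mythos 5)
Your proposal is correct and takes essentially the same route as the paper's proof: reduce to an $L$-dimensional full-rank subsystem (your $R_t=(P_t)_{SS}$ is exactly the paper's reduced covariance $P'_t$, with scheduling of a redundant sensor $j$ becoming the general vector $\b{a}_j$), obtain the lower bound by the same Weyl/rank-one interlacing chaining giving $\tr{R_t}\geq \frac{L(L+1)}{2}\lambda_L(Q')$, and obtain the upper bound by scheduling only the $L$ essential sensors under the stationary randomized policy and absorbing the redundant coordinates into a $Q$-dependent constant (the paper's $c=\lambda_1(H^TH)$ versus your $1+\sum_{j\notin S}\|\b{a}_j\|_2^2=\tr{H^TH}$). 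The only differences are cosmetic — the paper builds the reduced system explicitly via $\b{x}_t=H\b{x}'_t$ and a relaxation lemma rather than extracting the $SS$ block, and keeping $\tr{R_{t+1}-Q'}$ instead of $\tr{R_{t+1}}$ in your redundant-coordinate bound would remove the $L\geq 2$ absorption step — so no substantive gap remains beyond the initial-condition transient you already flag (which the paper likewise dismisses by taking $\b{x}_0$ known).
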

\begin{proof}
    The proof for this result involves three key steps. First, we lower bound the optimal monitoring error under a given covariance matrix $Q$ by considering a relaxation of the scheduling problem, in which the scheduler is allowed to schedule using $\b{c}_t \in \mathbb{R}^{M}$ instead of $\b{c}_t \in \{\b{e}_1,...,\b{e}_M \}$. Second, we show that the relaxed problem can then be converted to an equivalent \textit{full-rank} relaxed problem involving the monitoring of only $L$ sensors, instead of the original $M$. Finally, we show that the monitoring error for the full-rank relaxed problem involving $L$ sensors scales as $O(L^2)$. The detailed proof can be found in the Appendix \ref{appendix:thm_order_of_mag_low}.
\end{proof}

Even though the inequality \eqref{eqn:p_opt_low_rank} looks rather complicated, it is intuitive for most low-rank covariance matrices. Consider a family of covariance matrices where the non-zero eigenvalues are not too skewed, i.e., they are close to each other in magnitude. In this case, Theorem \ref{thm:order_of_mag_low} states that the optimal time average monitoring error scales roughly as $L^2$. Essentially, access to $L \leq M$ sensors (of which $M-L$ are linear combinations of the other sensors) will not give any order improvement in error performance. 
This scaling is similar to the result we obtained for the invertible case.

        \section{Design of Scheduling Policies}\label{chap:schedule_policy}
In this chapter, we use the bounds derived in section \ref{chap:aoi-error-relation} to propose Max-Weight style scheduling policies to minimize error with performance guarantees. We also establish that AoI is a good metric for error minimization even in correlated settings. Max-Weight style policies arise from the greedy one-step minimization of a well-chosen Lyapunov function. In the following subsections, we propose two Lyapunov functions and derive policies from the greedy minimization of their one-time-slot drift.

\subsection{Maximum Expected Error (MEE) policy}\label{chap:schedule_policy-section:mee}
In this section, we develop a scheduling policy that loosely schedules the sensor that has accumulated the maximum expected error until time $t$.
Recall from Chapter \ref{chap:kalmanfilter} that our original optimization problem \eqref{eqn:general_prob} is equivalent to the reformulated problem \eqref{eqn:kalman_prob_prior}. The objective function in \eqref{eqn:kalman_prob_prior} is the time average of the quantity $\tr{P_t}$. To optimize this quantity, we consider the following Lyapunov function
\begin{equation}
    L(t) \stackrel{\Delta}{=} \sum_{i=1}^M\frac{p_{ii}(t)}{\sqrt{q_{ii}}}.
\end{equation}
Note that the above Lyapunov function is a weighted sum of the expected errors of each process (before measurement).

Intuitively, in order to minimize the time average of $\tr{P_t}$, we would want $L(t)$ to be ``small" at all times. We define the Lyapunov drift under a policy $\pi$ as $\Delta_\pi(t) := L(t+1) - L(t)$. We define the state of the network at time $t$ to be the covariance matrix $P_t$. Notice that the expected Lyapunov drift, given the network state at time $t$, is
\begin{align*}
    \E[\Delta_\pi | P_t] &= \E[L(t+1) - L(t)| P_t]\\
    &= \sum_{i=1}^M\frac{\E\left[p_{ii}(t+1) - p_{ii}(t)| P_t\right]}{\sqrt{q_{ii}}}
\end{align*}

Using the Kalman recursion, we can upper bound the drift in each time-slot as follows
\begin{align}
    \E&[\Delta_\pi(t) | P_t]\nonumber\\
    &= \sum_{i=1}^M\frac{1}{\sqrt{q_{ii}}}\E\bigg[p_{ii}(t) - \sum_{l=1}^M\frac{p_{il}^2(t)}{p_{ll}(t)}u^\pi_{l}(t) + q_{ii} - p_{ii}(t) \bigg| P_t\bigg]\nonumber\\
    &= \sum_{i=1}^M\frac{1}{\sqrt{q_{ii}}}\E\left[q_{ii} - \sum_{l=1}^M\frac{p_{il}^2(t)}{p_{ll}(t)}u^\pi_{l}(t) \bigg| P_t\right]\nonumber\\
     &= \sum_{i=1}^M\sqrt{q_{ii}} - \sum_{i=1}^M\sum_{l=1}^M\frac{1}{\sqrt{q_{ii}}}\E\left[\frac{p_{il}^2(t)}{p_{ll}(t)}u^\pi_{l}(t) \bigg| P_t\right]\nonumber\\
    &\leq \sum_{i=1}^M\sqrt{q_{ii}} - \sum_{i=1}^M\frac{p_{ii}(t)}{\sqrt{q_{ii}}}\E[u^\pi_i(t) | P_t], \label{eqn:mee_drift_ub}
\end{align}
where the last inequality is obtained by only considering the terms when $l=i$ in the summation, as $p_{il}^2(t)/p_{ll}(t)$ is always positive.
Recall that $u_i(t) = 1$ for only one value $i \in \{1,...,M\}$, and $u_j(t) = 0$ for $i\neq j$. It turns out that the policy that greedily minimizes the upper bound of the above Lyapunov drift achieves certain performance guarantees. The upper bound \eqref{eqn:mee_drift_ub} is minimized when 
\begin{equation}
\boxed{
    \pi^{MEE}(t) = \argmax_i \left( \frac{p_{ii}}{\sqrt{q_{ii}}}\right).}
\end{equation}
We define this policy to be the \textbf{Maximum Expected Error (MEE) Policy}. We further define $P_{MEE}$ as the time average expected error when the MEE policy is employed, i.e.,
\begin{equation}
    P_{MEE} \stackrel{\Delta}{=} \lim_{T\to\infty}\frac{1}{T}\sum_{t=1}^T\E\left[||\b{x}_t - \hat{\b{x}}_t||_2^2\right]
\end{equation}
for the MEE policy. 
We now provide the performance guarantee for this policy.

\begin{theorem}[Performance guarantee for the MEE Policy]\label{thm:mee_policy}
The maximum expected error (MEE) scheduling policy is a constant factor away from optimality. In particular,
\begin{align}
\frac{P_{MEE}}{P_{OPT}} &\leq 2\frac{\left(\sum_{i=1}^M\sqrt{q_{ii}}\right)^2 - \sum_{i=1}^M q_{ii}}{ \left(\sum_{i=1}^M\sqrt{\Tilde{q}_{ii}}\right)^2 - \sum_{i=1}^M \Tilde{q}_{ii}}
\end{align}
where $P_{OPT}$ is the optimal time average expected error over all oblivious scheduling policies, and $\Tilde{q}_{ii} = q_{ii} - \b{q}_i^TQ_{-i}^{-1}\b{q}_i$ as described in the previous section.
\end{theorem}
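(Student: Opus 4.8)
The plan is to bound the ratio by combining a lower bound on $P_{OPT}$ that is already available from Theorem~\ref{thm:order_of_mag_full} with an upper bound on $P_{MEE}$ derived from the Lyapunov drift inequality \eqref{eqn:mee_drift_ub}. Theorem~\ref{thm:order_of_mag_full} gives $P_{OPT} \geq \tfrac12\big(\sum_{i=1}^M\sqrt{\Tilde{q}_{ii}}\big)^2 - \tfrac12\sum_{i=1}^M \Tilde{q}_{ii}$, so the only new work is to show $P_{MEE} \leq \big(\sum_{i=1}^M\sqrt{q_{ii}}\big)^2 - \sum_{i=1}^M q_{ii}$; dividing the two inequalities then produces exactly the stated factor of $2$ (the denominator is strictly positive since each $\Tilde{q}_{ii}>0$ and $M\ge 2$, so the division is legitimate).

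To obtain the upper bound on $P_{MEE}$, I would first specialize \eqref{eqn:mee_drift_ub} to $\pi=\pi^{MEE}$. Because the MEE rule is deterministic given $P_t$ and schedules $i^\star=\argmax_i p_{ii}(t)/\sqrt{q_{ii}}$, the weighted indicator sum on the right of \eqref{eqn:mee_drift_ub} collapses to $\max_i p_{ii}(t)/\sqrt{q_{ii}}$, giving
\begin{equation}
\E\big[\Delta_{\pi^{MEE}}(t)\mid P_t\big] \;\leq\; \sum_{i=1}^M\sqrt{q_{ii}} \;-\; \max_{i}\frac{p_{ii}(t)}{\sqrt{q_{ii}}}.
\end{equation}
The key step is then the elementary inequality $\tr{P_t}=\sum_{i=1}^M \sqrt{q_{ii}}\cdot\frac{p_{ii}(t)}{\sqrt{q_{ii}}} \leq \big(\sum_{j=1}^M\sqrt{q_{jj}}\big)\max_i\frac{p_{ii}(t)}{\sqrt{q_{ii}}}$, which lets me replace the $\max$ term by $\tr{P_t}/\sum_j\sqrt{q_{jj}}$ and turns the drift bound into a statement directly about the objective $\tr{P_t}$ of problem \eqref{eqn:kalman_prob_prior}.

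The remainder is a standard telescoping/Lyapunov argument: take total expectations, sum over $t=1,\dots,T$, discard the nonnegative terminal term $\E[L(T+1)]$, and divide by $T$ to get $\tfrac1T\sum_{t=1}^T\E[\tr{P_t}] \leq \big(\sum_i\sqrt{q_{ii}}\big)^2 + \E[L(1)]\big(\sum_j\sqrt{q_{jj}}\big)/T$; letting $T\to\infty$ removes the last term. Feeding this into \eqref{eqn:expec_error_P_t_equivalence} contributes the $-\tr{Q}=-\sum_i q_{ii}$ offset (the index shift $t\mapsto t+1$ being harmless under the Cesàro average), which gives the desired bound on $P_{MEE}$ and hence the theorem. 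The main obstacle is not any single computation but justifying the drift inequality and the telescoping uniformly for $t\ge1$: this is where I would invoke $Q\succ 0$, which forces $P_t\succeq Q\succ 0$ for every $t\ge 1$ (the posterior-covariance term in the Kalman recursion being positive semidefinite) so that the divisions by $p_{ll}(t)$ in \eqref{eqn:mee_drift_ub} are valid and $L(1)$ is a finite constant.
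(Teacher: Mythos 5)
Your proposal is correct and follows essentially the same route as the paper: the same Lyapunov function, the same drift bound \eqref{eqn:mee_drift_ub}, the same telescoping argument yielding $P_{MEE}\le\left(\sum_{i=1}^M\sqrt{q_{ii}}\right)^2-\sum_{i=1}^M q_{ii}$, and the same lower bound on $P_{OPT}$ from Theorem \ref{thm:order_of_mag_full}. The only cosmetic difference is that you justify the key intermediate inequality $\max_i p_{ii}(t)/\sqrt{q_{ii}}\ge \tr{P_t}/\sum_{j}\sqrt{q_{jj}}$ directly as a max-versus-weighted-average bound, whereas the paper arrives at the identical expression by noting that MEE minimizes the drift upper bound over all policies and then substituting the stationary randomized policy of \cite{kadota_scheduling_2018}.
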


\begin{proof}
The proof involves two parts. First, we upper bound $P_{MEE}$ by upper bounding the drift in every time-slot using the stationary randomized policy proposed in \cite{kadota_scheduling_2018}. To do so, we set the weights in the corresponding weighted-AoI problem to be the variances, i.e. $q_{ii}$. Second, we develop a lower bound for $P_{OPT}$ in Chapter \ref{chap:scaling}, Section \ref{subsection:full_rank}. Putting the two bounds together we get the result above. The detailed proof can be found in Appendix \ref{appendix:mee_proof}.
\end{proof}

\subsection{Maximum Weighted Age (MWA) policy}\label{chap:schedule_policy-section:mwa}
In this section, we develop a scheduling policy that schedules the sensor that has accumulated the maximum weighted age until time $t$,  and prove its performance guarantees. This policy ignores the effects of correlation for scheduling. Consider the Lyapunov function
\begin{equation}
    L(t) \stackrel{\Delta}{=} \sum_{i=1}^M\sqrt{q_{ii}}~h_i(t-1).
\end{equation}
The above Lyapunov function is a weighted sum of the ages of each process at time $t-1$, before the scheduling decision is made at time $t$. Intuitively, from the discussion in Chapter \ref{chap:aoi-error-relation}, in order to minimize the time average error, we would want the average age of all sources to be small and hence, $L(t)$ to be ``small" at all times. The Lyapunov drift for policy $\pi$ is given as $\Delta_\pi(t) \stackrel{\Delta}{=} L(t+1) - L(t)$. We define the state of the network at time $t$ to be the age vector $\b{h}_t = [h_1(t-1), h_2(t-1),...,h_M(t-1)]$. Using the evolution of AoI \eqref{eqn:AoI_ev}, we can calculate the expected drift as follows.
\begin{align}
    \E[\Delta_\pi(t) &| \b{h}_t] = \E[L(t+1) - L(t)| \b{h}_t]\nonumber\\
    &= \sum_{i=1}^M\sqrt{q_{ii}}~\E\left[h_{i}(t) - h_{i}(t-1)| \b{h}_t\right]\nonumber\\
    &= \sum_{i=1}^M\sqrt{q_{ii}}\bigg(1 - (h_{i}(t-1)+1)\E\left[u^\pi_i(t)| \b{h}_t\right] \bigg).\label{eqn:mwa_drift}
\end{align}
where the last equality is obtained from the dynamics of the age of information with the scheduling constraint in place.
Recall that $u_i(t) = 1$ for only one value $i \in \{1,...,M\}$, and $u_j(t) = 0$ for $i\neq j$. Equation \eqref{eqn:mwa_drift} is minimized when 
 \begin{equation}
    \boxed{\pi^{MWA}(t) = \argmax_i \bigg( \sqrt{q_{ii}} ~h_{i}(t-1) \bigg).}
\end{equation}
We define this policy to be the \textbf{Maximum Weighted Age (MWA) Policy}. We further define $P_{MWA}$ as the time average expected error when the MWA policy is employed, i.e.,
\begin{equation}
    P_{MWA} \stackrel{\Delta}{=} \lim_{T\to\infty}\frac{1}{T}\sum_{t=1}^T\E_{}\left[||\b{x}_t - \hat{\b{x}}_t||_2^2\right].
\end{equation}
We now provide performance guarantees for the MWA policy.

\begin{theorem}[Performance guarantee for the MWA Policy]\label{thm:mwa_policy}
The Maximum Weighted Age (MWA) scheduling policy is a constant factor away from optimality. In particular,
\begin{align}
\frac{P_{MWA}}{P_{OPT}} &\leq 2\frac{\left(\sum_{i=1}^M\sqrt{q_{ii}}\right)^2 - \sum_{i=1}^Mq_{ii}}{ \left(\sum_{i=1}^M\sqrt{\Tilde{q}_{ii}}\right)^2 - \sum_{i=1}^M \Tilde{q}_{ii}}
\end{align}
where $P_{OPT}$ is the optimal time average expected error over all oblivious scheduling policies, and $\Tilde{q}_{ii} = q_{ii} - \b{q}_i^TQ_{-i}^{-1}\b{q}_i$ as described in the previous section.
\end{theorem}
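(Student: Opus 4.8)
The plan is to reuse, essentially verbatim, the two-part template sketched for the MEE policy (Theorem~\ref{thm:mee_policy}): bound $P_{MWA}$ from above by a Lyapunov-drift comparison against a stationary randomized schedule, and bound $P_{OPT}$ from below using the AoI inequalities of Theorems~\ref{thm:upper_bound}--\ref{thm:lower_bound}, which are already assembled into the left-hand side of Theorem~\ref{thm:order_of_mag_full}. In fact the argument is almost identical to that for MEE, with $p_{ii}(t)/\sqrt{q_{ii}}$ replaced by $\sqrt{q_{ii}}\,h_i(t-1)$ and with the extra input that Theorem~\ref{thm:upper_bound} lets us pass from weighted age back to monitoring error.

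For the upper bound I would start from the exact one-slot conditional drift computed in \eqref{eqn:mwa_drift}. Since this quantity is affine in the scheduling probabilities $\E[u^\pi_i(t)\mid\b{h}_t]$ and the MWA rule is (up to the lower-order $+1$ correction noted below) its greedy minimizer, its drift is no larger than that of the stationary randomized policy of \cite{kadota_scheduling_2018} that picks sensor $i$ with the state-independent probability $\mu_i=\sqrt{q_{ii}}/S$, where $S:=\sum_{j}\sqrt{q_{jj}}$; substituting these probabilities gives $\E[\Delta_{MWA}(t)\mid\b{h}_t]\le S-\tfrac{1}{S}\sum_{i}q_{ii}\big(h_i(t-1)+1\big)$. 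Rearranging, taking expectations over the MWA age process, summing over $t=1,\dots,T$, telescoping $\sum_t\E[\Delta_{MWA}(t)]=\E[L(T+1)]-\E[L(1)]$, dividing by $T$, and using $L(\cdot)\ge 0$ together with $\E[L(1)]/T\to 0$, I obtain $\limsup_{T}\tfrac1T\sum_{t=1}^{T}\sum_{i}q_{ii}\E[h^{MWA}_i(t-1)]\le S^2-\sum_i q_{ii}=\big(\sum_i\sqrt{q_{ii}}\big)^2-\sum_i q_{ii}$. Then Theorem~\ref{thm:upper_bound} bounds the per-process error by $q_{ii}\E[h^\pi_i(t)]$, so summing over $i$ and time-averaging (the shift $h_i(t-1)\to h_i(t)$ is immaterial in the limit) yields $P_{MWA}\le\big(\sum_i\sqrt{q_{ii}}\big)^2-\sum_i q_{ii}$.

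For the lower bound I would simply invoke \eqref{eqn:lb_final_full_rank} from the proof of Theorem~\ref{thm:order_of_mag_full}, namely $P_{OPT}\ge \tfrac12\big(\sum_i\sqrt{\Tilde{q}_{ii}}\big)^2-\tfrac12\sum_i\Tilde{q}_{ii}$, which is strictly positive since each $\Tilde{q}_{ii}>0$ by the Schur-complement property of the positive definite $Q$. Dividing the two bounds gives exactly $P_{MWA}/P_{OPT}\le 2\big[(\sum_i\sqrt{q_{ii}})^2-\sum_i q_{ii}\big]\big/\big[(\sum_i\sqrt{\Tilde{q}_{ii}})^2-\sum_i\Tilde{q}_{ii}\big]$, which is the same constant obtained for MEE, as expected since both proofs share the randomized-policy upper bound and the $P_{OPT}$ lower bound.

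The main obstacle is the drift-comparison step: one has to justify that MWA's per-slot conditional drift never exceeds that of the randomized benchmark — this rests on the affine structure of \eqref{eqn:mwa_drift} and on the benchmark's scheduling probabilities being independent of $\b{h}_t$ — and one must pick the benchmark probabilities as $\mu_i\propto\sqrt{q_{ii}}$ so that the emerging constant collapses to precisely $(\sum_i\sqrt{q_{ii}})^2-\sum_i q_{ii}$ and lines up with the upper side of Theorem~\ref{thm:order_of_mag_full}. A minor point to be careful about is the gap between the boxed rule $\argmax_i\sqrt{q_{ii}}\,h_i(t-1)$ and the exact drift minimizer $\argmax_i\sqrt{q_{ii}}\,(h_i(t-1)+1)$; the discrepancy perturbs only lower-order terms and can be absorbed into the same constant, but it should be acknowledged. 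One should also either verify that the time-average error under MWA is well defined or carry $\limsup$ throughout, which is routine since the Max-Weight structure keeps the age process positive recurrent.
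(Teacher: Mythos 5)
Your proposal is correct and follows essentially the same route as the paper's proof: a one-slot drift comparison of MWA against the stationary randomized policy with probabilities proportional to $\sqrt{q_{ii}}$, telescoping to get the upper bound $\left(\sum_i\sqrt{q_{ii}}\right)^2-\sum_i q_{ii}$, and dividing by the $P_{OPT}$ lower bound from Section \ref{subsection:full_rank}. The only cosmetic difference is the order of operations—you bound the time-average weighted AoI first and then apply Theorem \ref{thm:upper_bound}, whereas the paper inserts $\E[p_{ii}(t)\mid \mathbf{h}_t]\le q_{ii}\left(h_i(t-1)+1\right)$ into the drift inequality before telescoping—and the $+1$ mismatch between the boxed rule and the exact drift minimizer that you flag is present in the paper's argument as well.
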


\begin{proof}
The proof is similar to that of Theorem \ref{thm:mee_policy}. We create an upper bound using the stationary randomized policy for a corresponding weighted-AoI setting and use the lower bound derived in Chapter \ref{chap:scaling}, Section \ref{subsection:full_rank}. The detailed proof is provided in \ref{appendix:mwa_proof}.
\end{proof}

The MWA policy ignores the off-diagonal elements of $Q$ and only uses the diagonal elements, which is the variance of each process $i$. Thus, the MWA policy also ignores the correlation and treats the processes as uncorrelated for the purposes of scheduling. Despite this, it obtains the same performance guarantee as the MEE policy, which requires the scheduler to know the Kalman covariance matrices.

Note, however, that the performance guarantees for the MWA policy rely on running the Kalman filter recursion for estimation, which exploits the correlation between the processes to get the best quality estimates. \textit{Thus, while correlation does not play a very important role in scheduling, it plays a crucial role in improving the quality of estimation.} In Chapter \ref{chap:simulations}, we study the performance of these algorithms empirically.

Another interesting observation is that when all the sensors are uncorrelated, then $\Tilde{q}_{ii} = q_{ii}$ and both the MEE and MWA policy become factor $2$ optimal. This is expected from the results of \cite{sun_age_2019}, as for the uncorrelated case, the problem defined in (\ref{eqn:kalman_prob_prior}) becomes a weighted-sum of AoI minimization problem, for which the MWA policy is known to be a factor of 2 optimal \cite{sun_age_2019}.



        \section{Extensions to General $A$ Matrices}\label{chap:extension}
In this chapter, we extend the results developed in chapters \ref{chap:aoi-error-relation}, \ref{chap:scaling}, and \ref{chap:schedule_policy} to a more relaxed system model. In chapter \ref{chap:intro}, we described the system model of a discrete Wiener process with correlated noise. Now, we extend this setting to a system with a diagonal system matrix $A$. Recall that the most general linear dynamical system with Gaussian innovations can be written as 
\begin{equation}
    \b{x}_{t+1} = A \b{x}_t + \b{w}_t.
\end{equation}
In equation \eqref{eqn:sys_model}, $A = I$. This is the discrete Wiener process. In this chapter we let $A$ be a diagonal matrix with the diagonal elements $a_{ii} \geq 1$. In particular, we will deal with systems of the form
\begin{equation}
    \b{x}_{t+1} = \begin{bmatrix}
        a_{11} & & \\
        & \ddots & \\
        & & a_{MM}
    \end{bmatrix} \b{x}_t + \b{w}_t,\label{eqn:sys_model_A}
\end{equation}
where $\b{w}_t$ is an iid zero mean Gaussian random vector such that $\E[\b{w}_t\b{w}_t^T] = Q$. Like in the previous setting, only one of the $M$ sensors can be requested to transmit state information in any given time-slot. Let $y_t \in \mathbb{R}$ denote the received sample at the monitor at time $t$. Then,
\begin{equation}
    y_t = \b{c}^T_t\b{x}_t,
\end{equation}
where $\b{c}_t = \b{e}_j = [0, ..., 1, 0, ..., 0]^T$ and $\b{e}_j$ is the $j^{th}$ standard basis vector in $\R^M$.

\subsection{Monitoring Error and Age-of-Information with Diagonal $A$ Matrices}
In this section, we extend the results of chapter \ref{chap:aoi-error-relation} to systems with diagonal $A$ matrices. In particular, we shall upper and lower bound the monitoring error in systems with diagonal $A$ matrices with the respective AoI of each sensor.  We will then use the developed bounds to design certain heuristic scheduling sensors to provide low monitoring error.
The intuition behind the developed bounds is straightforward and is very similar to the monitoring of a discrete Wiener process.

\begin{theorem}[Extension of Theorem \ref{thm:upper_bound}]\label{thm:ext_A_upper_bound}
The estimation error for process $i$, as described in \eqref{eqn:sys_model_A}, under a scheduling policy $\pi$, can be upper bounded using the AoI of the same process under the same scheduling policy as follows
\begin{align}\label{eqn:thm_upper_bound_A}
    \E_{\pi}\left[(x^i_t - \hat{x}^i_t)^2\right] &\leq q_{ii}\E_\pi\left[\sum_{k=0}^{h^\pi_i(t)}a^{2k}_{ii}\right]\nonumber\\
    &= \begin{cases}
       q_{ii}\E_\pi[h^\pi_i(t)]  & a_{ii} = 1\\
    q_{ii}\E_\pi\left[\frac{a_{ii}^{2h^\pi_i(t)} - 1}{a_{ii}^2 - 1}\right]  & a_{ii} > 1
    \end{cases}. 
\end{align}
Here $q_{ii}$ is the $i$th diagonal element of the matrix $Q$, and hence the noise variance of the $i^{th}$ process.
\end{theorem}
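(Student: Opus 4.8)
The plan is to mirror the proof of Theorem~\ref{thm:upper_bound}: argue \emph{pathwise} on the deterministic Kalman covariance recursion, showing that the $i$th diagonal entry of the error covariance is reset whenever sensor $i$ is scheduled and, in every slot it is not, grows by the multiplicative factor $a_{ii}^2$ plus an additive $q_{ii}$ (discarding the further variance reduction that correlated observations would give), so that it is dominated by a finite geometric sum in $a_{ii}^2$ whose length is the AoI; one then takes the expectation over the (possibly randomized, but oblivious) schedule. Because the policy is oblivious, conditioned on the sequence of scheduling decisions the error covariance equals the deterministic output of the Kalman recursion (this is the content of Lemma~\ref{lemma:decoupling} and the discussion around \eqref{eqn:covariance_relation}, now for a diagonal $A$), which is what legitimizes a pathwise argument with the schedule as the only remaining randomness.

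Concretely, for \eqref{eqn:sys_model_A} the Kalman recursion reads $P_{t+1}=A\Sigma_t A^T+Q$ with $\Sigma_t:=P_t-\frac{P_t\b{c}_t\b{c}_t^TP_t}{\b{c}_t^TP_t\b{c}_t}$, and, conditioned on the schedule, $(\Sigma_t)_{ii}=\E[(x^i_t-\hat x^i_t)^2\mid\b{c}_0,\dots,\b{c}_t]$ is the conditional monitoring error of process $i$, so that $(P_{t+1})_{ii}=a_{ii}^2(\Sigma_t)_{ii}+q_{ii}$. Two elementary facts drive the induction. First, the matrix subtracted in $\Sigma_t$ is rank-one positive semidefinite, so $\Sigma_t\preceq P_t$ and hence $(\Sigma_t)_{ii}\le(P_t)_{ii}$ --- this is the precise sense in which observing some other, possibly correlated, sensor can only help process $i$. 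Second, if $\b{c}_t=\b{e}_i$ then $(\Sigma_t)_{ii}=(P_t)_{ii}-(P_t)_{ii}^2/(P_t)_{ii}=0$.

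I would then prove by induction on $t$ the pathwise bound $(P_{t+1})_{ii}\le q_{ii}\sum_{k=0}^{h^\pi_i(t)}a_{ii}^{2k}$. If $\b{c}_t=\b{e}_i$, then $h^\pi_i(t)=0$ and $(P_{t+1})_{ii}=q_{ii}$, matching the right-hand side. If $\b{c}_t\neq\b{e}_i$, then $h^\pi_i(t)=h^\pi_i(t-1)+1$, and combining $(\Sigma_t)_{ii}\le(P_t)_{ii}$ with the induction hypothesis for $(P_t)_{ii}$ gives
\[
(P_{t+1})_{ii}\le a_{ii}^2(P_t)_{ii}+q_{ii}\le a_{ii}^2\,q_{ii}\sum_{k=0}^{h^\pi_i(t-1)}a_{ii}^{2k}+q_{ii}=q_{ii}\sum_{k=0}^{h^\pi_i(t)}a_{ii}^{2k}.
\]
Since $\E_\pi[(x^i_t-\hat x^i_t)^2]=\E_\pi[(\Sigma_t)_{ii}]=\E_\pi[((P_{t+1})_{ii}-q_{ii})/a_{ii}^2]$, dividing the pathwise bound by $a_{ii}^2$, subtracting $q_{ii}/a_{ii}^2$, and taking the expectation over the schedule yields $\E_\pi[(x^i_t-\hat x^i_t)^2]\le q_{ii}\,\E_\pi[\sum_{k=0}^{h^\pi_i(t)-1}a_{ii}^{2k}]$; evaluating the finite geometric series gives $q_{ii}\E_\pi[h^\pi_i(t)]$ when $a_{ii}=1$ and $q_{ii}\E_\pi[(a_{ii}^{2h^\pi_i(t)}-1)/(a_{ii}^2-1)]$ when $a_{ii}>1$, i.e.\ the two closed forms in the statement.

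The only genuinely delicate points I anticipate are: (i) invoking carefully the oblivious-policy fact that, conditioned on the schedule, the error covariance is exactly the deterministic Kalman $\Sigma_t$, so the schedule is the sole source of randomness in the final expectation; and (ii) the index bookkeeping between the bound on $(P_{t+1})_{ii}$ and the bound on the monitoring error, which carries one fewer term in the geometric sum --- this is precisely why the $a_{ii}=1$ specialization recovers $q_{ii}\E_\pi[h^\pi_i(t)]$ rather than $q_{ii}\E_\pi[h^\pi_i(t)+1]$. Everything else (the rank-one PSD subtraction, $\Sigma_t\preceq P_t$, the base case of the induction, and the geometric-series algebra) is routine, and the whole argument is a direct lift of the Appendix~\ref{appendix:thm_upper_bound_proof} proof of Theorem~\ref{thm:upper_bound} with the extra factor $a_{ii}^2$ inserted at each no-observation step.
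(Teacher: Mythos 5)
Your proof is correct and follows essentially the same route as the paper's appendix proof: write the diagonal entry of the modified Kalman recursion $p_{ii}(t+1)=a_{ii}^2p_{ii}(t)-\sum_{l}\tfrac{a_{ii}a_{ll}p_{il}^2(t)}{p_{ll}(t)}u_l(t)+q_{ii}$, drop the nonnegative correction terms, iterate the resulting inequality back to the last time sensor $i$ was scheduled (where the a posteriori error resets to zero), and convert $p_{ii}$ back to monitoring error. Your index bookkeeping is in fact slightly tighter than the paper's concluding step --- you obtain $q_{ii}\,\E_\pi\bigl[\sum_{k=0}^{h^\pi_i(t)-1}a_{ii}^{2k}\bigr]$, which is exactly what matches the closed-form cases in the theorem statement --- so the only real difference is presentational: a clean induction via the a posteriori covariance $\Sigma_t$ instead of the paper's looser ``summing'' of the recursion.
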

\noindent
The proof can be found in Appendix \ref{appendix:extension_upper_bound}.

\begin{theorem}[Extension of Theorem \ref{thm:lower_bound}]\label{thm:ext_A_lower_bound}
The estimation error for process $i$, as described in \eqref{eqn:sys_model_A}, under a scheduling policy $\pi$, can be lower bounded using the AoI of the same process under the same scheduling policy as follows
\begin{align}\label{eqn:thm_lower_bound_A}
    \E_{\pi}\left[(x^i_t - \hat{x}^i_t)^2\right] &\geq \Tilde{q}_{ii}\E_\pi\left[\sum_{k=0}^{h^\pi_i(t)}a^{2k}_{ii}\right]\nonumber\\ 
    &= \begin{cases}
       \Tilde{q}_{ii}\E_\pi[h^\pi_i(t)]  & a_{ii} = 1\\
    \Tilde{q}_{ii}\E_\pi\left[\frac{a_{ii}^{2h^\pi_i(t)} - 1}{a_{ii}^2 - 1}\right]  & a_{ii} > 1
    \end{cases}. 
\end{align}
Here $q_{ii}$ is $i$th diagonal element of the matrix $Q$ i.e. the noise variance of the $i^{th}$ process, $\b{q}_i$ is the covariance of the noise of the $i^{th}$ process with the other processes, and $Q_{-i}$ is the noise covariance submatrix of the other processes, and $\Tilde{q}_{ii} \triangleq q_{ii} - \b{q}_i^{T}Q_{-i}^{-1}\b{q}_i$.
\end{theorem}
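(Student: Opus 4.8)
The plan is to reproduce the argument behind Theorem~\ref{thm:lower_bound} essentially verbatim, the only difference being that under \eqref{eqn:sys_model_A} a one-slot elapse now scales the error of process $i$ by $a_{ii}^2$ rather than leaving it unchanged, which replaces a plain count of slots by a geometric sum $\sum_k a_{ii}^{2k}$. I would begin by fixing a process $i$, a time $t$, and an oblivious policy $\pi$, and recalling that the monitoring error of process $i$ is the conditional variance $\var(x^i_t \mid y_0,\ldots,y_t)$. Because $\pi$ is oblivious, the realized schedule is a function of past observations, so I can condition on it; it then suffices to lower bound the conditional error on the event $\{h^\pi_i(t)=h\}$ that sensor $i$ was last scheduled at time $t-h$, and afterwards take the expectation over $h^\pi_i(t)$. (If $i$ was never scheduled before $t$, I would instead use the known initial state; a nonzero $P_0$ only helps the estimator and hence cannot weaken a lower bound.)

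Next I would introduce a genie that reveals to the estimator, free of charge, the full trajectories $\{x^j_\ell : j\neq i,\ \ell\le t\}$ of all the other processes together with $\{x^i_\ell : \ell\le t-h\}$. By the law of total variance, enlarging the estimator's $\sigma$-algebra can only decrease the expected residual variance, so the resulting genie-aided error lower bounds $\E_\pi[(x^i_t-\hat x^i_t)^2]$; this relaxation is at least as generous as the ``best linear combination of the other sources'' relaxation used in the proof of Theorem~\ref{thm:lower_bound}, so nothing is lost by it. Unrolling \eqref{eqn:sys_model_A} gives $x^i_t = a_{ii}^{h}\,x^i_{t-h} + \sum_{k=0}^{h-1} a_{ii}^{k}\,w^i_{t-1-k}$, and since the genie knows $x^i_{t-h}$ exactly the residual uncertainty is that of $\sum_{k=0}^{h-1} a_{ii}^{k} w^i_{t-1-k}$ given $\{w^j_\ell : j\neq i\}$ (revealing the other trajectories is the same, modulo known initial conditions, as revealing all of their increments). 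The crucial point is that the increment vectors $\b{w}_\ell$ are independent across $\ell$, so the summands $w^i_{t-1-k}$ remain mutually independent conditioned on all the other-process noises, each being affected only by the same-slot increments $\{w^j_{t-1-k}:j\neq i\}$; hence the conditional variance of the sum is $\sum_{k=0}^{h-1} a_{ii}^{2k}$ times the conditional variance of a single $w^i_\ell$ given the remaining coordinates of $\b{w}_\ell$, which by the standard Gaussian (Schur-complement) formula equals $q_{ii}-\b{q}_i^T Q_{-i}^{-1}\b{q}_i = \Tilde q_{ii}$. Thus the genie-aided error on $\{h^\pi_i(t)=h\}$ is exactly $\Tilde q_{ii}\sum_{k=0}^{h-1}a_{ii}^{2k}$.

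To finish, I would take the expectation over $h^\pi_i(t)$ and evaluate the geometric series --- it is $h$ when $a_{ii}=1$ and $\frac{a_{ii}^{2h}-1}{a_{ii}^2-1}$ when $a_{ii}>1$ --- to recover \eqref{eqn:thm_lower_bound_A}. The step I expect to be the main obstacle is making the genie relaxation fully rigorous: verifying that handing over the other processes' trajectories really does decouple the estimation of $x^i_t$ into independent single-slot Gaussian problems (so that the conditional variance factorizes as claimed and the bound stays tight), together with a clean treatment of the boundary case in which sensor $i$ is never scheduled and of the role of $P_0$. Past that bookkeeping, the proof carries no idea not already present in Theorem~\ref{thm:lower_bound}.
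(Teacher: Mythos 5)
Your proposal is correct in substance, but it proves the bound by a genuinely different route than the paper. The paper's proof (Appendix E, mirroring Appendix C) stays entirely inside the Kalman covariance recursion: it relaxes the scheduling vector to an arbitrary linear combination supported off coordinate $i$, partitions $P_t$ and $Q$ into blocks, bounds the one-step reduction term $\max_{\b{b}}\,\b{b}^T\b{p}_1\b{p}_1^T\b{b}/\b{b}^TR_t\b{b}$ via Lemma \ref{lemma:lb_main_lem} together with $R_t \succeq Q_{-1}$, and then uses the freedom in choosing $\b{b}_t$ (namely $\b{b}_t = R_t^{-1}\b{p}_1(t)$) to pin the cross-covariance at $\b{q}_1$ before telescoping with the $a_{ii}^2$ growth factor. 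You instead replace the matrix-analytic relaxation by a genie/conditioning argument: reveal all other processes' trajectories and the last observed value of $x^i$, invoke the law of total variance to justify that extra information only lowers the MMSE, unroll the scalar dynamics, and use independence of $\b{w}_\ell$ across time plus the Gaussian Schur-complement formula to get residual variance exactly $\Tilde{q}_{ii}\sum_{k} a_{ii}^{2k}$. This is at least as generous a relaxation as the paper's (you hand over the full trajectories rather than one linear measurement per slot), yet yields the same constant $\Tilde{q}_{ii}$, and it is arguably more elementary and transparent; what it gives up is the reuse of the paper's covariance-recursion lemmas and the explicit handle on $P_t$, and it requires the measure-theoretic bookkeeping you flag (conditioning on the $\mathcal{F}_t$-measurable event $\{h^\pi_i(t)=h\}$ under an oblivious, possibly randomized policy, and the never-scheduled/$P_0$ boundary case), all of which goes through since each process's state depends dynamically only on its own increments. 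One small indexing remark: your computation yields $\Tilde{q}_{ii}\sum_{k=0}^{h-1}a_{ii}^{2k}$, which matches the case expressions $\Tilde{q}_{ii}h$ and $\Tilde{q}_{ii}\frac{a_{ii}^{2h}-1}{a_{ii}^2-1}$ in \eqref{eqn:thm_lower_bound_A} (and the Wiener-case Theorem \ref{thm:lower_bound}); the first line of \eqref{eqn:thm_lower_bound_A}, with the sum running to $k=h^\pi_i(t)$, is inconsistent with its own case expressions, so the discrepancy is a typo in the statement rather than a gap in your argument.
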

\noindent
The proof can be found in Appendix \ref{appendix:extension_lower_bound}.
\subsection{Scheduling Policy Design for Diagonal $A$ Matrix}\label{sec:extension_schedule}
In this section, we develop a Whittle-Index-like policy for monitoring systems with diagonal $A$ matrices. From the result of theorem \ref{thm:ext_A_upper_bound} (equation \eqref{eqn:thm_upper_bound_A}), we have the following relation for the monitoring error and age of information (for $a_{ii} > 1)$.
\begin{equation}
     \E_{\pi}\left[(x^i_t - \hat{x}^i_t)^2\right] \leq \E_\pi\left[q_{ii}\frac{a_{ii}^{2h^\pi_i(t)} - 1}{a_{ii}^2 - 1}\right]. 
\end{equation}
Summing over all sensors and over time, we get
\begin{align}
     \frac{1}{T}\sum_{t=1}^T\sum_{i=1}^M\E_{\pi}&\left[(x^i_t - \hat{x}^i_t)^2\right]\nonumber\\
     &\leq \frac{1}{T}\sum_{t=1}^T\sum_{i=1}^M\E_\pi\left[q_{ii}\frac{a_{ii}^{2h^\pi_i(t)} - 1}{a_{ii}^2 - 1}\right]. \label{eqn:ext_A_prob_penultimate}
\end{align}
Let us define $f_i(\cdot): \N\cup\{0\} \to [0,\infty)$ as follows:
\begin{equation}
    f_i(x) = q_{ii}\frac{a_{ii}^{2x} - 1}{a_{ii}^2 - 1}.
\end{equation}
Note that $f_i(x)$ is a non-decreasing function of $x$. Using this definition, we use \eqref{eqn:ext_A_prob_penultimate}, as $T \to \infty$ as
\begin{align}
     \lim_{T \to \infty}\frac{1}{T}\sum_{t=1}^T\sum_{i=1}^M&\E_{\pi}\left[(x^i_t - \hat{x}^i_t)^2\right]\nonumber\\ 
     &\leq \lim_{T \to \infty}\frac{1}{T}\sum_{t=1}^T\sum_{i=1}^M\E_\pi\left[f_{i}(h^\pi_i(t))\right]. \label{eqn:ext_A_upper_bound_problem}
\end{align}

In \cite{tripathi2019functionsofage, tripathi2019whittle}, the authors develop a Whittle-Index policy to minimize the time average sum of general functions of AoI. Note that in equation \eqref{eqn:ext_A_upper_bound_problem}, we have a time average sum of non-decreasing functions of age on the right-hand side. If we minimize the time average of the function of age, we get a policy that upper bounds the expected monitoring error. According to \cite{tripathi2019whittle}, the Whittle index policy for the RHS of the equation \eqref{eqn:ext_A_upper_bound_problem} is
\begin{align}
    \pi(t) &= \argmax_{i}\left\{ (h_i(t-1)+1)f_{i}(h_i(t-1) + 1) - \sum_{k=0}^{h_i(t-1)}f_{i}(k)\right\}\nonumber\\
    &= \argmax_{i}\bigg\{ (h_i(t-1)+1)q_{ii}\frac{a_{ii}^{2h_i(t-1) + 2} - 1}{a_{ii}^2 - 1}\nonumber\\ 
    &\qquad \qquad \qquad \qquad \qquad \qquad \qquad - \frac{q_{ii}}{a^2_{ii} - 1}\sum_{k=0}^{h_i(t-1)}(a_{ii}^{2k} - 1)\bigg\}
\end{align}
Hence, we propose the Whittle Index for Estimated Error (WI-EE) policy as follows:

\begin{framed}
\begin{align}
    \pi^{WI-EE}(t) &= \argmax_{i}\bigg\{ (h_i(t-1)+1)q_{ii}\frac{a_{ii}^{2h_i(t-1) + 2} - 1}{a_{ii}^2 - 1}\nonumber\\ 
    &\qquad \qquad - \frac{q_{ii}}{a^2_{ii} - 1}\sum_{k=0}^{h_i(t-1)}(a_{ii}^{2k} - 1)\bigg\}
\end{align}
\end{framed}
Note that when $a_{ii} = 1$ for all $i$, then we get back the Max-Weight for Age (MWA) policy (as defined in Section \ref{chap:schedule_policy-section:mwa}) from the WI-EE policy.
\subsection{Lower Bound on Optimal Monitoring Error}\label{sec:lb_extension}
In this section, we provide a universal lower bound (like in Section \ref{subsection:full_rank} and \ref{subsection:low_rank}) for systems with general $A$ matrices. For any $A$ matrix, we lower-bound the estimation independent of the scheduling policy using Weyl's inequalities. 
\begin{theorem}\label{thm:lower_bound_for_general_A}
    The time average estimation error for system in \eqref{eqn:intro_sys_model} (with $A \succ I$) for any scheduling policy is universally lower bounded as 
    \begin{align}\label{eqn:lower_bound_general_A}
        &\lim_{T \to \infty}\frac{1}{T}\sum_{t=1}^T\sum_{i=1}^M \E[(\b{x}_t^i - \hat{\b{x}}^i_t)^2] \geq \nonumber\\
         &\frac{\lambda_{M}(AA^T)^{M+1} - (M+1)\lambda_{M}(AA^T) - M}{\lambda_1(AA^T)(\lambda_{M}(AA^T) - 1)^2}\lambda_{M}(Q) -\frac{\tr{Q}}{\lambda_1(AA^T)},
    \end{align}
    where $\lambda_M(\cdot)$ denotes the smallest eigenvalue and $\lambda_1(\cdot)$ denotes the largest eigenvalue.
\end{theorem}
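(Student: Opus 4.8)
The plan is to bound the per-slot error $\E_\pi[\|\b{x}_t-\hat{\b{x}}_t\|_2^2]$ from below by a quantity that does not depend on the scheduling policy, and then take the Cesàro average over $t$; the bound will come purely from the algebraic structure of the Kalman recursion together with Weyl's (and Ostrowski's) eigenvalue inequalities. \textbf{First, reduce the error to a bound on $\tr{P_{t+1}}$.} For the model \eqref{eqn:sys_model_A} the Kalman recursion splits into a measurement step $M_t := P_t - \frac{P_t\b{c}_t\b{c}_t^TP_t}{\b{c}_t^TP_t\b{c}_t}$ and a prediction step $P_{t+1} = AM_tA^T + Q$, where $M_t\succeq 0$ is precisely the posterior error covariance $\E[(\b{x}_t-\hat{\b{x}}_t)(\b{x}_t-\hat{\b{x}}_t)^T]$ --- the natural lift of \eqref{eqn:covariance_relation} underlying Theorems \ref{thm:ext_A_upper_bound}--\ref{thm:ext_A_lower_bound}. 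Hence $\E_\pi[\|\b{x}_t-\hat{\b{x}}_t\|_2^2] = \E_\pi[\tr{M_t}]$, and since $\tr{AM_tA^T} = \tr{M_tA^TA}\leq\lambda_1(AA^T)\tr{M_t}$ by $M_t\succeq0$, we get $\tr{M_t}\geq(\tr{P_{t+1}}-\tr{Q})/\lambda_1(AA^T)$, which already supplies the additive $-\tr{Q}/\lambda_1(AA^T)$ term of \eqref{eqn:lower_bound_general_A}. It remains to lower bound $\tr{P_{t+1}}$, uniformly over all oblivious policies.

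\textbf{Next, set up an eigenvalue recursion for $P_t$.} Write $a := \lambda_M(AA^T)$, with $a>1$ since $A\succ I$. Three facts hold for \emph{every} admissible $\b{c}_t$: (i) $P_t = M_t + R_t$ with $R_t$ rank-one positive semidefinite, so Weyl's inequality gives $\lambda_i(M_t)\geq\lambda_{i+1}(P_t)$ for $i=1,\dots,M-1$ (one scalar measurement can zero out at most one eigenvalue); (ii) $A$ is invertible, so Ostrowski's theorem gives $\lambda_i(AM_tA^T)\geq a\,\lambda_i(M_t)$ (equivalently, with $M_t=N_tN_t^T$ the nonzero spectrum of $AM_tA^T$ equals that of $N_t^TA^TAN_t\succeq a\,N_t^TN_t$); (iii) $P_{t+1}=AM_tA^T+Q\succeq Q$, so Weyl gives $\lambda_i(P_{t+1})\geq\lambda_i(AM_tA^T)+\lambda_M(Q)$, and in particular $\lambda_M(P_{t+1})\geq\lambda_M(Q)$. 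Chaining (i)--(iii) gives, for $i=1,\dots,M-1$, the recursion $\lambda_i(P_{t+1}) \geq a\,\lambda_{i+1}(P_t) + \lambda_M(Q)$.

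\textbf{Finally, unroll, sum, and average.} Iterating this recursion downward in the eigenvalue index and backward in time, anchored by $\lambda_M(P_s)\geq\lambda_M(Q)$, gives $\lambda_i(P_t)\geq\lambda_M(Q)\bigl(a^{M-i}+\tfrac{a^{M-i}-1}{a-1}\bigr)$ for every $i$ and every $t$ past an initial transient of length $O(M)$. Summing over $i=1,\dots,M$ and collapsing the two geometric sums yields a closed-form lower bound on $\tr{P_t}$ whose numerator is a degree-$(M+1)$ polynomial in $a$ over $(a-1)^2$ --- the exact structure of the first term of \eqref{eqn:lower_bound_general_A}. Substituting into the reduction from the first step, taking expectations, and averaging over $t=1,\dots,T$ completes the proof; the finitely many transient slots are $o(1)$ in the Cesàro mean, and the argument is unchanged with $\liminf$ replacing $\lim$ if the limit is not known to exist.

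\textbf{Expected main obstacle.} The three linear-algebra facts are routine; care is needed in two places. First, one must confirm that $M_t$ is genuinely the posterior covariance and that the prediction step reads $P_{t+1}=AM_tA^T+Q$ for the diagonal-$A$ model, i.e.\ the correct generalization of \eqref{eqn:covariance_relation}. Second, the bookkeeping in the unrolling step: it is the \emph{downward} index shift $\lambda_i(M_t)\geq\lambda_{i+1}(P_t)$ that causes $M$ consecutive steps to accumulate the factor $a^{M+1}$, and the double geometric series must be summed carefully to land on exactly the constant appearing in \eqref{eqn:lower_bound_general_A}.
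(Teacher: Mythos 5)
Your proposal is correct and follows essentially the same route as the paper: the same reduction $\lambda_1(AA^T)\sum_{i=1}^M\E[(x^i_t-\hat{x}^i_t)^2] \geq \tr{P_{t+1}}-\tr{Q}$, and the same Weyl-plus-Ostrowski eigenvalue recursion $\lambda_i(P_{t+1}) \geq \lambda_M(AA^T)\,\lambda_{i+1}(P_t)+\lambda_M(Q)$ (the paper's Lemma \ref{lemma:eig_p_t_gen_A}), unrolled past an $O(M)$ transient, summed over $i$, and Cesàro-averaged; applying the rank-one Weyl step to the posterior covariance before conjugating by $A$ rather than after is only a cosmetic reordering of the same three inequalities. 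Note that your geometric sum actually produces $\frac{\lambda_M(AA^T)^{M+1}-(M+1)\lambda_M(AA^T)+M}{(\lambda_M(AA^T)-1)^2}\lambda_M(Q)$, which is slightly stronger than, and hence implies, the $-M$ version stated in \eqref{eqn:lower_bound_general_A}.
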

\begin{proof}[Proof of Theorem \ref{thm:lower_bound_for_general_A}]
    For this proof, we make use of the \textit{apriori} covariance matrix $P_{t|t-1}$ which we will denote as $P_t$ as shorthand, as done in other proofs. We establish the following lemma which will help us prove the result.

    \begin{lemma}\label{lemma:eig_p_t_gen_A}
    There exists a time $t_0$ after which the $i^{th}$ largest eigenvalue of $P_t$ denoted $\lambda_i(P_t)$ satisfies the following inequality
    \begin{align}
        \lambda_i(P_t) \geq \frac{\lambda_M(AA^T)^{M-i+1} - 1}{\lambda_M(AA^T) - 1}\lambda_M(Q)
    \end{align}
    for all time $t > t_0 > 0$.
    \end{lemma}
    The proof involves using Weyl's inequalities and multiplicative matrix perturbation results. The detailed proof of this lemma can be found in Appendix \ref{appendix:eigenvalue_inequality}. We now relate the expected monitoring error to the trace of the \textit{apriori} covariance matrix. For the case with general $A$ matrices, equation \eqref{eqn:covariance_relation} holds true with some modification. The Kalman filter recursion gives us the following relation,
    \begin{equation}
        P_{t+1} = A\E\left[(\b{x}_t - \hat{\b{x}}_t)(\b{x}_t - \hat{\b{x}}_t)^T\right]A^T + Q.
    \end{equation}
    From the above inequality, it is immediate that
    \begin{equation}
       \tr{A^TA\E\left[(\b{x}_t - \hat{\b{x}}_t)(\b{x}_t - \hat{\b{x}}_t)^T\right]} = \tr{P_{t+1}} - \tr{Q}.
    \end{equation}
    Note that, the LHS is bounded above as 
    \begin{align}
         \text{tr}\bigg(A^TA\E&\left[(\b{x}_t - \hat{\b{x}}_t)(\b{x}_t - \hat{\b{x}}_t)^T\right]\bigg)\nonumber\\
         &\leq \lambda_1(A^TA)\tr{\E\left[(\b{x}_t - \hat{\b{x}}_t)(\b{x}_t - \hat{\b{x}}_t)^T\right]} \nonumber\\
         &= \lambda_1(AA^T)\sum_{i=1}^M\E[(x^i_t - \hat{x}^i_t)^2].
    \end{align}
Consequently,
    \begin{align}
        \lambda_{1}(AA^T)&\sum_{i=1}^M\E[(x^i_t - \hat{x}^i_t)^2] \nonumber\\
        &\geq \tr{P_{t+1}} - \tr{Q} \nonumber\\
        &= \sum_{i=1}^M \lambda_i(P_{t+1}) - \tr{Q}\nonumber\\
        &\geq \sum_{i=1}^M\frac{\lambda_M(AA^T)^{M-i+1} - 1}{\lambda_M(AA^T) - 1}\lambda_M(Q) - \tr{Q}.
    \end{align}
    Simplifying the last inequality gives us the result of the theorem.
\end{proof}
The established lower bound tells us that no scheduling policy can give us a better scaling result than exponential in $M$ (if the smallest and largest eigenvalues of $A$ and the smallest eigenvalue of $Q$ do not scale with $M$).

However, this lower bound may not be achievable in general, since there are a lot of relaxations for deriving the bound.

        \section{Simulations}\label{chap:simulations}
In this chapter, we provide numerical results comparing the long-term average normalized estimation error of different scheduling policies. For our simulations, we vary the number of sensors $M$ from $10$ to $150$. We consider both symmetric and asymmetric correlation structures. In subsection \ref{subsection:symm_cov_matrices}, all the sensors have identical variances and degrees of correlation, whereas in subsection \ref{subsection:asymm_cov_matrices}, we let the sensors have different variances and degrees of correlation. 
\subsection{Simulations with Discrete Wiener Process}
\subsubsection{Identical Variances}\label{subsection:symm_cov_matrices}
We first look at covariance matrices $Q$ where $q_{ii} = 1$, and $q_{ij} = \rho, i\neq j$. This is a covariance matrix with degree of correlation $\rho$.
\begin{figure}[H]
    \centering
    \includegraphics[width=1\linewidth]{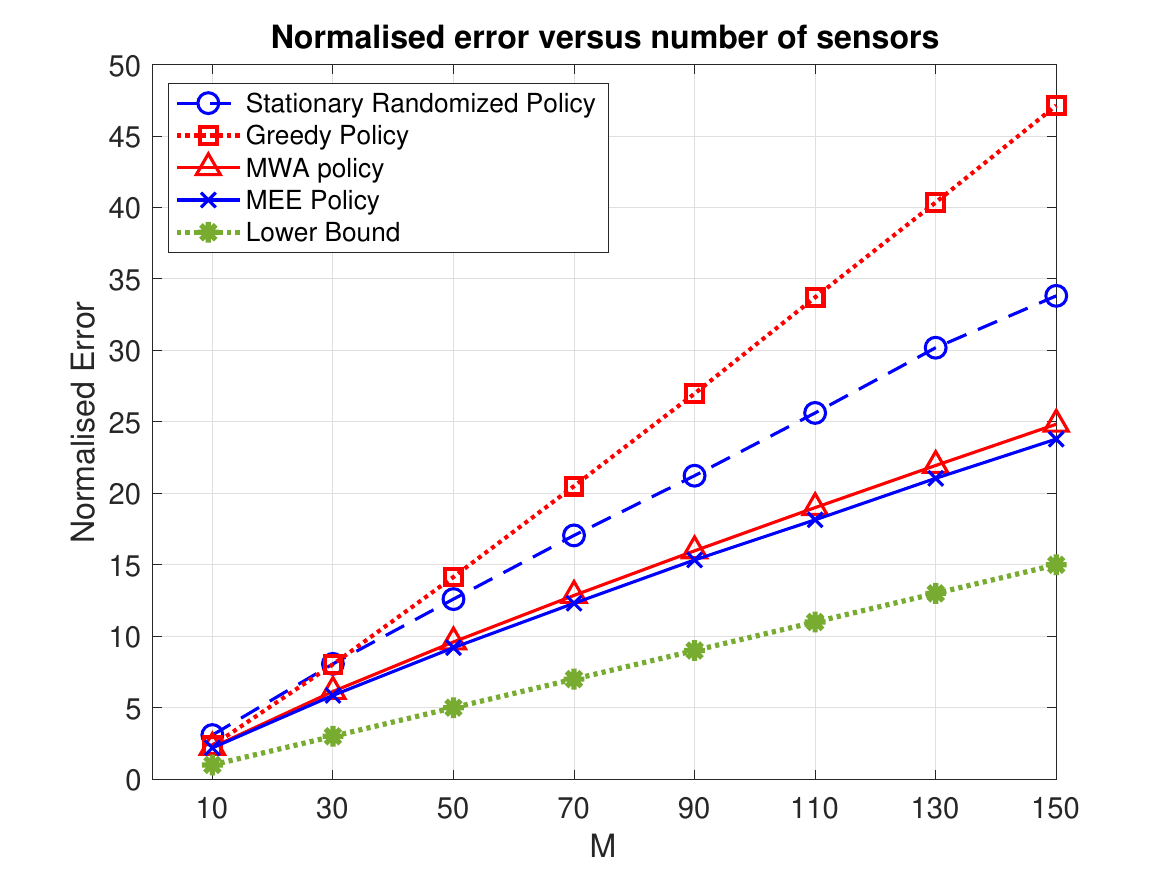}
    \caption{Comparison of long-term average expected error of scheduling policies for the correlation structure defined in section \ref{subsection:symm_cov_matrices} with $\rho = 0.8$}
    \label{fig:1rhorho}
\end{figure}
Figure \ref{fig:1rhorho} compares the empirical time average expected error of our proposed scheduling policies. We run the simulation for 20000 time steps and plot normalized error, i.e., $P_{\pi}/M$ for each scheduling policy $\pi$. We see that the MWA and MEE policies are close to the lower bound in \eqref{eqn:p_opt_lb_and_ub} and significantly outperform the greedy and the stationary randomized policy (as described in \cite[Section IV C]{kadota_scheduling_2018}) with process variances $q_{ii}$ as weights.

\subsubsection{Non-Identical Variances}\label{subsection:asymm_cov_matrices}
Next we consider covariance matrices $Q$ with the following correlation structure:
\begin{equation}
    \begin{aligned}
        Q
    &= \left[
    \begin{array}{ccc;{2pt/2pt}ccc} 
        1 &  \hdots & \rho  & 10\rho &  \hdots & 10\rho\\
        \vdots &  \ddots & \vdots & \vdots  & \ddots & \vdots \\
        \rho &  \hdots & 1  & 10\rho & \hdots & 10\rho\\
        \hdashline[2pt/2pt]
        10\rho &  \hdots & 10\rho & 100& \hdots & 100\rho\\
        \vdots &  \ddots & \vdots & \vdots  & \ddots & \vdots \\
        10\rho &  \hdots & 10\rho & 100\rho& \hdots & 100
    \end{array}
    \right]
    \end{aligned}
\end{equation}
where $M/2$ sensors have their variance as 1 and the other $M/2$ sensors have variance as 100.
More compactly, 
\begin{equation}
    Q = \begin{bmatrix}
        R & 10\rho\b{1}\b{1}^T\\
        10\rho\b{1}\b{1}^T & 100 R
    \end{bmatrix},
\end{equation}
where
    $R \in \R^{M/2 \times M/2}$ is a symmetric positive definite matrix, such that the diagonal elements $r_{ii} = 1$ and the off-diagonal elements $r_{ij} = \rho < 1$, and $\rho$ is the correlation coefficient independent of $M$, and
    $\b{1}$ denotes a $M/2$ length column vector of ones. 

Note that we get a family of ``well-behaved" symmetric positive definite correlation matrices $Q$ as we vary the parameter $M$. Further, the largest diagonal element of $Q$ is bounded by $100$, i.e., For any $M$, $\max_{i \in \{1,...,M\}} q_{ii} = 100$. The above correlation structure helps us study the performance of the proposed algorithms in the case of high correlation with vastly different variances. 
\begin{figure}[h]
    \centering
    \includegraphics[width=1\linewidth]{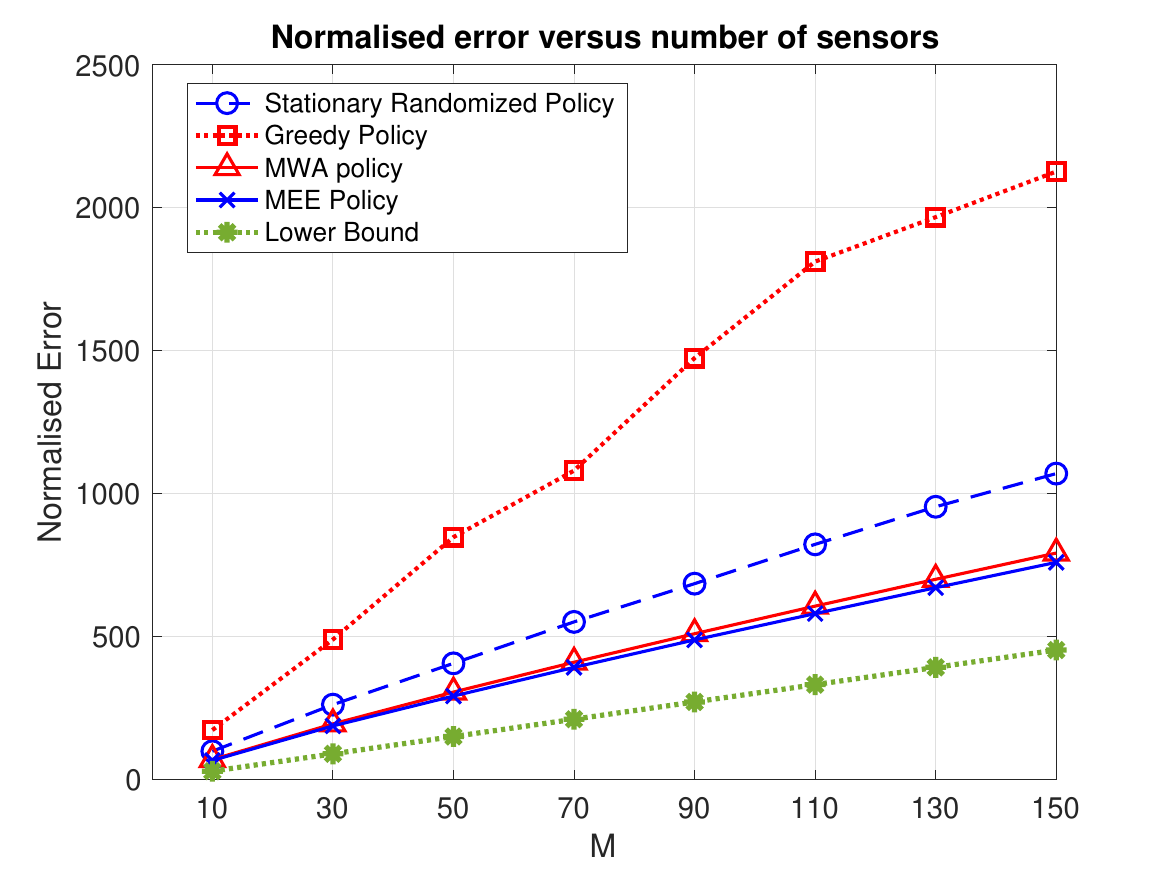}
    \caption{Comparison of long term average expected error of scheduling policies for the correlation structure defined in section \ref{subsection:asymm_cov_matrices} with $\rho = 0.8$}
    \label{fig:1rho_10rho}
\end{figure}
Figure \ref{fig:1rho_10rho} compares the empirical time average expected error of our proposed scheduling policies. We run the simulation for 20000 time steps and plot the normalized error, i.e., $P_{\pi}/M$ for each scheduling policy $\pi$.

From Figures \ref{fig:1rhorho} and \ref{fig:1rho_10rho}, we see that MEE and MWA significantly outperform the stationary randomized (SR) policy with process variances $q_{ii}$ as weights (see \cite{kadota_scheduling_2018} for more details on the SR policy). 
Secondly, we see that MWA and MEE are consistently better than the greedy scheduling policy, which involves greedily minimizing $\tr{P_t}$ at every time step. Moreover, the greedy policy displays inconsistent behavior. It performs even worse than the stationary randomized policy.

Finally, we see that the MWA policy and MEE policies have approximately the same empirical performance. This further strengthens our argument about AoI being a good heuristic for scheduling, even for correlated sensors. Nevertheless, we do observe that the MEE policy is marginally better than the MWA policy. This is because the  policy uses $p_{ii}(t)$ for scheduling, which exploits correlation in scheduling. However, the benefit of using correlation information in scheduling is marginal. 

\subsubsection{Performance vs Degree of Correlation}\label{subsection:row_changing}
Finally, we explore how the lower bound and upper bounds behave when the degree of correlation varies.  We look at covariance matrices $Q$ where $q_{ii} = 1$, and $q_{ij} = \rho, \forall i\neq j$. We fix the size of $Q$ to be $100 \times 100$ and vary $\rho$ from $0$ to $0.9$.

\begin{figure}[htbp]
    \centering
    \includegraphics[width=1\linewidth]{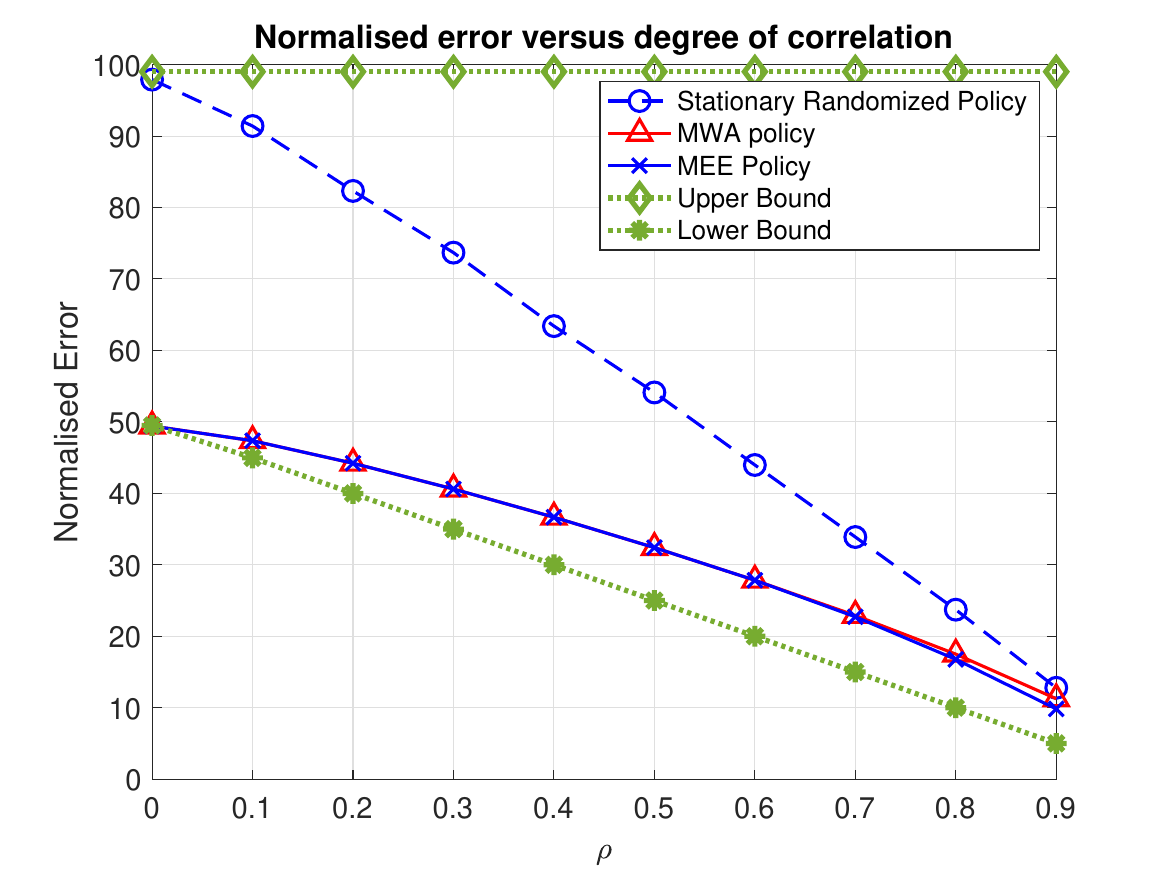}
    \caption{Comparison of long term average expected error of scheduling policies for the correlation structure defined in subsection \ref{subsection:row_changing}.}
    \label{fig:rho_comparison}
\end{figure}

The simulation for this comparison is plotted in Figure \ref{fig:rho_comparison}. In this figure, we see that the lower bound and upper bound are a factor of 2 apart, as expected from \cite{sun_age_2019}. As $\rho$ increases, the lower bound decreases but the upper bound does not change. This is because $q_{ii}$ is unaffected as $\rho$ increases, where as $\Tilde{q}_{ii}$ (as defined in Section \ref{chap:aoi-error-relation}) decreases. As $\rho$ gets closer to $1$, the lower bound approaches $0$. Also, we see that both MWA and MEE policies have identical performance, and are very close to the lower bound for $\rho = 0$. We see that MWA and MEE perform equally well for fairly low degrees of correlation. The separation in performance between MWA and MEE can only be observed at high degrees of correlation, i.e., $\rho = 0.8$ and $\rho=0.9$.

Lastly, we observe that the performance of the stationary randomized policy is very close to the upper bound for zero correlation. This is to be expected, as we derive the upper bound using the SR policy applied to a zero correlation setting. As $\rho$ increases, the error performance of the SR policy also improves because the Kalman estimator exploits the correlation to get better estimates.



\subsection{Simulations with diagonal $A$ matrices}
In this section, we compare the Whittle Index for Expected Error (WI-EE) policy from section \ref{sec:extension_schedule} with the lower bound developed in section \ref{sec:lb_extension}. In this section, we consider a system with the system matrix $A = 2I$. The covariance matrix is constructed in the same way as in subsection \ref{subsection:symm_cov_matrices}. 

\begin{figure}[H]
    \centering
    \includegraphics[scale=0.6]{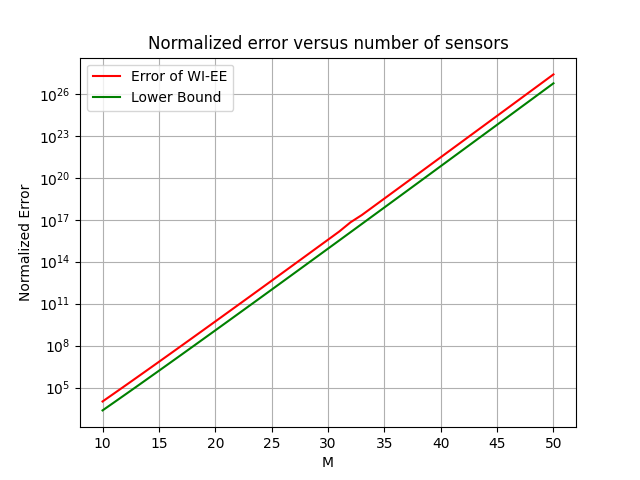}
    \caption{Comparison of the normalized error of WI-EE policy with the lower bound when $\rho = 0.8$.}
    \label{fig:WIEE_perf}
\end{figure}

Observe that the performance of the WI-EE is only a constant away from the lower bound in the log scale. This suggests that the WI-EE scales with $M$ in the same way as the lower bound does. The constant factor difference is $\sim 0.64$ in the log scale, which suggests that for this set of system settings, the WI-EE policy is roughly a factor of 4.4 away from the lower bound.
        \section{Conclusion}\label{chap:conc}
In this work, we successfully captured the importance of AoI as a metric for remote estimation and monitoring, even in the presence of correlation between sources in the network. From the bounds developed in Section \ref{chap:aoi-error-relation} and the empirical results in Section \ref{chap:simulations}, we show that the asymptotic estimation error does not benefit significantly from the presence of correlation. From the results in Section \ref{chap:scaling}, we establish that correlation will not lead to any order of magnitude improvement in the estimation error, and it is enough to look at AoI as a metric for scheduling policies. Finally, we also extend these results to more general classes of system matrices $A$.

        \bibliographystyle{IEEEtran}
        \bibliography{references}

        \appendices
        \section{Derivation of the Kalman Filter Recursion for the System Model}\label{appendix:kalman_derivation}
A general time-varying linear dynamic system driven by Gaussian noise is given as follows:
\begin{align}
    \b{x}_{t+1} &= F_t\b{x}_t + \b{w}_t \label{eqn:app_sys_dyn}\\ 
    \b{y}_t &= H_t\b{x}_t + \b{v}_t.\label{eqn:app_meas}
\end{align}
Here, $\b{x}_t \in \R^d$ is the state of the system and $\b{w}_t \in \R^d$ is the process innovation. $\b{w}_t$ is a multivariate normal random vector and $\b{w}_t \sim \mathcal{N}(0, Q_t)$. $F_t \in \R^{d \times d}$ is the matrix that captures the dynamics of the system at time $t$. $\b{y}_t \in \R^n$ is the measurement vector. It describes the measurement accessible to the estimator at time $t$. $\b{v}_t \in \R^n$ is the measurement noise and $\b{v}_t \sim \mathcal{N}(0, R_t)$. $H_t \in \R^{n \times d}$ is the measurement matrix. \\
The Kalman filtering algorithm gives us the optimal way to find the optimal MMSE estimate of the state $\b{x}_t$ using measurements, given the system and measurement matrices $F_t$ and $H_t$. The algorithm has two main steps 1) Prediction step and 2) Update step.\\
Define $\hat{\b{x}}_{t|t-1}$ be the state estimate given measurements only up to time $t-1$. This is the \textit{a priori}
 state estimate. Then, the \textit{a priori} covariance of this estimate is $P_{t|t-1} = \E\left[(\b{x}_t - \hat{\b{x}}_{t|t-1})(\b{x}_t - \hat{\b{x}}_{t|t-1})^T\right]$.\\
 After the measurement at time $t$, the Kalman filter algorithm gives us a method to compute a weighting/gain factor to update the state estimate. This is the Kalman gain $K_t$. Define the state estimate after measurement, i.e., the \textit{a posteriori} state estimate to be $\hat{\b{x}}_{t|t}$. Correspondingly, the covariance after measurement, or the \textit{a posteriori} covariance is $P_{t|t}= \E\left[(\b{x}_t - \hat{\b{x}}_{t|t})(\b{x}_t - \hat{\b{x}}_{t|t})^T\right]$.

\begin{enumerate}
    \item \textbf{Prediction Step}:
    \begin{itemize}
        \item Predict the \textit{a priori} state estimate:
        \begin{equation}
            \hat{\b{x}}_{t|t-1} = F_{t-1}\hat{\b{x}}_{t-1|t-1}.
        \end{equation}
        \item Predict the \textit{a priori} covariance:
        \begin{equation}
            P_{t|t-1} = F_{t-1}P_{t-1|t-1}F_{t-1}^T + Q_{t-1}.
        \end{equation}
    \end{itemize}
    \item \textbf{Update Step}:
    \begin{itemize}
        \item Compute the Kalman gain: 
        \begin{equation}
            K_{t} = P_{t|t-1}H_{t}^T(H_{t}P_{t|t-1}H_{t}^T + R_{t})^{-1}.
        \end{equation}
        \item Update the \textit{a posteriori} state estimate: 
        \begin{equation}
            \hat{\b{x}}_{t|t} = (I-K_tH_t)\hat{\b{x}}_{t|t-1} + K_t\b{y}_t.
        \end{equation}
        \item Update the \textit{a posteriori} covariance:
        \begin{equation}
            P_{t|t} = (I - K_tH_t)P_{t|t-1}.
        \end{equation}
    \end{itemize}
\end{enumerate}
\noindent
For more details, refer to \cite{kalman_new_1960, pei_elementary_2019}. 
\noindent
For the system model in \eqref{eqn:sys_model}, i.e.,
\begin{align*}
    \b{x}_{t+1} &= \b{x}_t + \b{w}_t\\
    y_t &= \b{c}_t^T\b{x}_t
\end{align*}
the Kalman filtering algorithm looks as follows:
\begin{enumerate}
    \item \textbf{Prediction Step}:
    \begin{itemize}
        \item Predict the \textit{a priori} state estimate:
        \begin{equation}
            \hat{\b{x}}_{t|t-1} = \hat{\b{x}}_{t-1|t-1}.
        \end{equation}
        \item Predict the \textit{a priori} covariance:
        \begin{equation}
            P_{t|t-1} = P_{t-1|t-1} + Q.
        \end{equation}
    \end{itemize}
    \item \textbf{Update Step}:
    \begin{itemize}
        \item Compute the Kalman gain: 
        \begin{equation}
            K_{t} = \frac{P_{t|t-1}\b{c}_{t}}{\b{c}_t^TP_{t|t-1}\b{c}_t}.
        \end{equation}
        \item Update the \textit{a posteriori} state estimate: 
        \begin{equation}
            \hat{\b{x}}_{t|t} = (I-K_t\b{c}_t^T)\hat{\b{x}}_{t|t-1} + K_ty_t.
        \end{equation}
        \item Update the \textit{a posteriori} covariance:
        \begin{equation}
            P_{t|t} = (I - K_t\b{c}_t^T)P_{t|t-1}.
        \end{equation}
    \end{itemize}
\end{enumerate}
In the main work, we avoid the double subscript, and when we refer to the ``state estimate", we refer to the \textit{a posteriori} state estimate. Therefore, $\hat{\b{x}}_t$ refers to $\hat{\b{x}}_{t|t}$. Notice that the recursion for the \textit{a priori} covariance matrix is more elegant than that for the \textit{a posteriori} covariance.
The recursion for $P_{t|t-1}$ is given as 
\begin{align}
    P_{t+1|t} &= P_{t|t} + Q \nonumber\\
    &= P_{t|t-1} - K_t\b{c}_t^TP_{t|t-1} + Q \nonumber\\
    &=  P_{t|t-1} - \frac{P_{t|t-1}\b{c}_t\b{c}_t^TP_{t|t-1}}{\b{c}_t^T  P_{t|t-1}\b{c}_t} + Q.
\end{align}
The matrix $P_t$ in \ref{lemma:Kalman_for_sys_model} is the \textit{a priori} covariance matrix $P_{t|t-1}$ described above. And hence, it follows the recursion,
\begin{align}
    P_{t+1} = P_{t} - \frac{P_{t}\b{c}_t\b{c}_t^TP_{t}}{\b{c}_t^T  P_{t}\b{c}_t} + Q.
\end{align}
Since, $P_{t+1|t} = P_{t|t} + Q$, we also have the relation, 
\begin{align}
    \E\left[(\b{x}_t - \hat{\b{x}}_t)(\b{x}_t - \hat{\b{x}}_t)^T\right] = P_{t+1} - Q.
\end{align}
Consequently, comparing the matrices elementwise, we get,
\begin{align}
    \E\left[(x^i_t - \hat{x}^i_t)^2\right] = (P_{t+1} - Q)_{ii}.
\end{align}

\section{Proof of Theorem \ref{thm:upper_bound} }\label{appendix:thm_upper_bound_proof}
Define $u_i(t)$ as the Boolean variable which tells us if $i$ has been scheduled, i.e., $u_i(t) = 1$, if $i$ is scheduled in timeslot $t$, and $u_i(t) = 0$ otherwise. 
We can rewrite the matrix equation \eqref{eqn:Kalman_cov_eq}, i.e. the Kalman recursion involving $P_t$ at time $t$, for each element of the matrix $P_t$ as follows
\begin{equation}
\label{eqn:Kalman_pt_elementwise}
    p_{ij}(t+1) = p_{ij}(t) - \sum_{l=1}^M \frac{p_{il}(t)p_{jl}(t)}{p_{ll}(t)}u_l(t) + q_{ij}
\end{equation}
Now, let's consider \eqref{eqn:Kalman_pt_elementwise} but for the special case of diagonal elements of $P_t$.
\begin{align}
    p_{ii}(t+1) &= p_{ii}(t) - \sum_{l=1}^M \frac{p_{il}(t)^2}{p_{ll}(t)}u_l(t) + q_{ii} \label{eqn:diag_elem}
\end{align}
By definition, since $P_t$ and $Q$ are covariance matrices, they must be positive semi-definite, i.e. $P_t \succeq 0$ and $Q \succeq 0$. This, in turn, implies that $p_{ii}(t) \geq 0$ and $q_{ii} \geq 0$ for all $i$. So, note that $p^2_{il}(t)/p_{ll}(t) \geq 0$. Hence, 
\begin{equation}
    p_{ii}(t+1) \leq p_{ii}(t) + q_{ii}.
\end{equation}
and so,
\begin{equation}\label{eqn:telescopic_thm_1}
    p_{ii}(t+1) - p_{ii}(t) \leq q_{ii}.
\end{equation}

This is a recursive relationship. 
Note that if sensor $i$ is scheduled at time $t_i$, i.e. $\b{c}_{t_i} = \b{e}_i$, then $p_{ij}(t_i+1) = p_{ji}(t_i+1) = q_{ij}$ and $p_{ii}(t_i+1) = q_{ii}$. In other words, if we observe sensor $i$ at time $t_i$, then it's expected error in the following time-slot goes down to the variance in the $i$th dimension of the one-step noise increment. Further, the covariance between $i$ and any other source $j$ reduces to the covariance between the $i$th and $j$th dimension of the one-step noise increment.

Therefore, summing \eqref{eqn:telescopic_thm_1} until the last time  $i$ was scheduled, we get
\begin{align}
    p_{ii}(t+1) - p_{ii}(t_i+1) \leq (t - t_i)\cdot q_{ii}
\end{align}
But, $p_{ii}(t_i + 1) = q_{ii}$. Also note that $t-t_i$ is essentially the age of sensor $i$, $h_i(t)$. Consequently,
\begin{align}\label{eqn:pii_ub}
    p_{ii}(t+1) \leq (h_i(t)+1)\cdot q_{ii}
\end{align}
and,
\begin{align}
   \E[(x^i_t - \hat{x}^i_t)^2] \leq q_{ii}\E[h_i(t)].
\end{align}

\section{Proof of Theorem \ref{thm:lower_bound} }\label{appendix:thm_lower_bound_proof}
To prove theorem \ref{thm:lower_bound}, we need to establish certain lemmas. 
\begin{lemma}\label{lemma:pt_geq_q}
If $P_0 \succ 0, Q \succ 0$, then $P_t \succeq Q$, $\forall t$. 
\end{lemma}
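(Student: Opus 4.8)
\textbf{Proof plan for Lemma~\ref{lemma:pt_geq_q}.}

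The plan is to prove the statement by induction on $t$, using the Kalman covariance recursion \eqref{eqn:Kalman_cov_eq}. The base case is immediate: $P_0 \succ 0 \succeq Q$ is \emph{not} what we want; rather we want $P_0 \succeq Q$, which is not assumed. So the first thing I would do is observe that even if $P_0 \not\succeq Q$, we have $P_1 = P_0 - \frac{P_0\b{c}_0\b{c}_0^TP_0}{\b{c}_0^TP_0\b{c}_0} + Q \succeq Q$, because the middle term is subtracted from $P_0 \succeq 0$ and the remaining matrix $P_0 - \frac{P_0\b{c}_0\b{c}_0^TP_0}{\b{c}_0^TP_0\b{c}_0}$ is the \textit{a posteriori} covariance $P_{0|0} = (I - K_0\b{c}_0^T)P_0$, which is positive semidefinite (it is a genuine covariance matrix of an estimation error). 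Hence it suffices to show the inductive step: if $P_t \succeq Q$ then $P_{t+1} \succeq Q$, and then conclude for all $t \geq 1$; for $t=0$ the claim $P_0 \succeq Q$ may need the slightly stronger hypothesis or we simply state the result for $t \geq 1$ (inspecting how the lemma is used downstream in Appendix~\ref{appendix:thm_lower_bound_proof} should confirm $t\geq 1$ is enough, since errors are evaluated via $P_{t+1}$).

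For the inductive step, write $P_{t+1} - Q = P_t - \frac{P_t \b{c}_t \b{c}_t^T P_t}{\b{c}_t^T P_t \b{c}_t}$, and I must show this is positive semidefinite given $P_t \succeq Q \succ 0$. The key step is to recognize the right-hand side as $P_{t|t}$, the posterior covariance after a rank-one update, and to show more strongly that $P_{t|t} \succeq 0$ always, but that alone gives $P_{t+1} \succeq Q$ directly --- wait, that is exactly what we need, since $P_{t+1} = P_{t|t} + Q$. So actually the induction hypothesis $P_t \succeq Q$ is not even required for $P_{t+1} \succeq Q$; what we genuinely need is just $P_{t|t} \succeq 0$, i.e.\ that subtracting the rank-one term never breaks positive semidefiniteness. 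The cleanest way to see this: for any vector $\b{z}$, by Cauchy--Schwarz applied to the inner product $\langle \b{a}, \b{b}\rangle := \b{a}^T P_t \b{b}$ (which is a genuine semi-inner product since $P_t \succeq 0$), we have $(\b{z}^T P_t \b{c}_t)^2 \leq (\b{z}^T P_t \b{z})(\b{c}_t^T P_t \b{c}_t)$, hence $\b{z}^T P_t \b{z} - \frac{(\b{z}^T P_t \b{c}_t)^2}{\b{c}_t^T P_t \b{c}_t} \geq 0$. This holds provided $\b{c}_t^T P_t \b{c}_t > 0$, which is guaranteed because $P_t \succeq Q \succ 0$ (here the induction hypothesis, or at least $P_t \succ 0$, is used to avoid division by zero). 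Therefore $P_{t|t} \succeq 0$ and $P_{t+1} = P_{t|t} + Q \succeq Q$.

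The main obstacle, such as it is, is bookkeeping around the base case and the genuinely needed hypothesis: one must be careful that $P_0 \succ 0$ (plus $Q \succ 0$) is exactly what keeps $\b{c}_t^T P_t \b{c}_t$ strictly positive at every step, which requires first establishing $P_t \succ 0$ for all $t$ as a parallel (easier) induction --- indeed $P_{t+1} = P_{t|t} + Q \succeq Q \succ 0$ once $P_{t|t}\succeq 0$ is known, closing that loop. So I would structure the proof as: (i) show $\b{c}^T P \b{c} > 0$ whenever $P \succ 0$; (ii) show the rank-one downdate $P - \frac{P\b{c}\b{c}^TP}{\b{c}^TP\b{c}} \succeq 0$ via the Cauchy--Schwarz argument above; (iii) conclude by induction that $P_t \succ 0$ for all $t$ (from $P_0\succ 0$, $Q\succ 0$) and simultaneously $P_{t+1}\succeq Q$ for all $t\geq 0$, which gives $P_t \succeq Q$ for all $t\geq 1$; and if $P_0\succeq Q$ is additionally wanted it holds trivially only under that extra assumption, otherwise the statement is understood for $t\geq 1$. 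The only mild subtlety is making sure the ``$P_{t|t}$ is a covariance matrix hence PSD'' shortcut and the explicit Cauchy--Schwarz computation agree --- I would present the explicit computation since it is self-contained and does not rely on re-deriving the probabilistic interpretation.
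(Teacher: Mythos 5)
Your proof is correct, and while it follows the same overall induction on the recursion \eqref{eqn:Kalman_cov_eq}, the key step is handled differently from the paper. The paper writes $P_{t+1}-Q = P_t\left(I - \frac{\b{c}_t\b{c}_t^TP_t}{\b{c}_t^TP_t\b{c}_t}\right)$, computes the eigenvalues of the (non-symmetric) factor $I - \frac{\b{c}_t\b{c}_t^TP_t}{\b{c}_t^TP_t\b{c}_t}$ — showing it has eigenvalue $0$ on $\b{c}_t$ and eigenvalue $1$ on an $(M-1)$-dimensional eigenspace of the form $P_t^{-1}\b{v}$ with $\b{v}\perp\b{c}_t$ — and then concludes that the product with $P_t$ is positive semidefinite. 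Your route instead shows directly that the symmetric matrix $P_t - \frac{P_t\b{c}_t\b{c}_t^TP_t}{\b{c}_t^TP_t\b{c}_t}$ is PSD via Cauchy--Schwarz in the semi-inner product $\langle \b{a},\b{b}\rangle = \b{a}^TP_t\b{b}$, which is more elementary and self-contained: it avoids reasoning about eigenvalues of a non-symmetric matrix and the delicate claim that a product of two PSD factors is PSD (which is not true in general and in the paper's argument really rests on the product happening to be symmetric). Your observation that the inductive hypothesis $P_t\succeq Q$ is not actually needed — only $P_t\succ 0$ (equivalently $\b{c}_t^TP_t\b{c}_t>0$) to justify the division, after which $P_{t+1}=P_{t|t}+Q\succeq Q\succ 0$ closes the loop — is a genuine streamlining of the paper's induction, which carries $P_t\succeq Q$ as the hypothesis even though it only uses positivity. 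Finally, you are right that the statement as written ($\forall t$, including $t=0$) only follows for $t\geq 1$ under the stated hypotheses; the paper's own proof establishes exactly $P_t\succeq Q$ for $t\geq 1$ (its base case is $P_1\succeq Q$), and this is all that is used downstream in Appendix \ref{appendix:thm_lower_bound_proof}, so your bookkeeping matches how the lemma is actually applied.
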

\begin{proof}
Assume that $P_{t} \succeq Q \succ 0$. From (\ref{eqn:kalman_prob_prior}), we have
\begin{align*}
    P_{t+1} &= P_t - \frac{P_t\b{c}_t\b{c}_t^TP_t}{\b{c}_t^TP_t\b{c}_t} + Q\\
   \implies  P_{t+1} - Q &= P_t\left(I - \frac{\b{c}_t\b{c}_t^TP_t}{\b{c}_t^TP_t\b{c}_t}\right).
\end{align*}
Consider matrix $I - \frac{\b{c}_t\b{c}_t^TP_t}{\b{c}_t^TP_t\b{c}_t}$.
\begin{align*}
    \left(I - \frac{\b{c}_t\b{c}_t^TP_t}{\b{c}_t^TP_t\b{c}_t}\right)\b{c}_t = \b{c}_t - \frac{\b{c}_t(\b{c}_t^TP_t\b{c}_t)}{\b{c}_t^TP_t\b{c}_t} = \b{c}_t - \b{c}_t = 0.
\end{align*}
So, $0$ is an eigenvalue of $I - \frac{\b{c}_t\b{c}_t^TP_t}{\b{c}_t^TP_t\b{c}_t}$. Since $\b{c}_t$ is a vector in $R^{M}$, there exist $M-1$ orthogonal vectors to $\b{c}_t$ which form an orthogonal basis with $\b{c}_t$ for $R^{M}$. Let $\mathcal{V}$ denote the basis of these vectors such that $\b{v} \in \mathcal{V} \implies \b{c}_t^T\b{v} = 0$ and $\b{v}_1, \b{v}_2 \in \mathcal{V} \implies \b{v}_1^T\b{v}_2 = 0$. Since, $P_t \succ 0$, we have that $P_t^{-1}$ exists. Now, for any $\b{v} \in \mathcal{V}$,
\begin{align*}
    \left(I - \frac{\b{c}_t\b{c}_t^TP_t}{\b{c}_t^TP_t\b{c}_t}\right)(P_t^{-1}v) &= P_t^{-1}\b{v} -  \frac{\b{c}_t\b{c}_t^TP_tP_t^{-1}\b{v}}{\b{c}_t^TP_t\b{c}_t}\\ &=  P_t^{-1}\b{v} -  \frac{\b{c}_t\b{c}_t^T\b{v}}{\b{c}_t^TP_t\b{c}_t}\\
    &= P_t^{-1}\b{v}
\end{align*}
Since this is true for all $M-1$ unique vectors in $V$, we infer that the eigenspace of eigenvalue $1$ of $I -  \frac{\b{c}_t\b{c}_t^TP_t}{\b{c}_t^TP_t\b{c}_t}$ is $\R^{M-1}$. We have now found all $M$ eigenvalues of $I -  \frac{\b{c}_t\b{c}_t^TP_t}{\b{c}_t^TP_t\b{c}_t}$. The maximum eigenvalue is $1$ and the minimum eigenvalue is $0$. From this, we have $I -  \frac{\b{c}_t\b{c}_t^TP_t}{\b{c}_t^TP_t\b{c}_t} \succeq 0$. Therefore, $P_t\left(I -  \frac{\b{c}_t\b{c}_t^TP_t}{\b{c}_t^TP_t\b{c}_t}\right) \succeq 0$. Hence, $P_{t+1} - Q \succeq 0 \implies P_{t+1} \succeq Q$.\\
Now, note that if $P_0 \succ 0$, then $P_1 = P_0 + P_0\b{c}_0\b{c}_0^TP_0/\b{c}_0^TP_0\b{c}_0 + Q \succeq Q$ from the same analysis as above. This proves the induction hypothesis.
\end{proof}
\begin{lemma}\label{lemma:lb_main_lem} For $A \succeq B \succ 0$, $A, B \in \R^{n\times n}$, and any vector $\b{p} \in \R^n$, we have
    \begin{align*}
        \max_{\b{b} \in \R^{M-1}} \frac{\b{b}^T\b{p}\b{p}^T\b{b}}{\b{b}^TA\b{b}} &= \b{p}^TA^{-1}\b{p} \leq \b{p}^TB^{-1}\b{p}
    \end{align*}
\end{lemma}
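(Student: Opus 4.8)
The plan is to prove the two assertions separately: the variational identity $\max_{\b{b}\neq\b{0}} \frac{(\b{b}^T\b{p})^2}{\b{b}^TA\b{b}} = \b{p}^TA^{-1}\b{p}$, and the antitonicity of matrix inversion on positive definite matrices, $A \succeq B \succ 0 \implies B^{-1} \succeq A^{-1}$, from which the inequality $\b{p}^TA^{-1}\b{p} \leq \b{p}^TB^{-1}\b{p}$ follows immediately by evaluating the positive semidefinite form $B^{-1}-A^{-1}$ at $\b{p}$. (The label $\b{b}\in\R^{M-1}$ in the statement should be read as $\b{b}$ ranging over the Euclidean space whose dimension matches the size of $A$ and $B$; the argument is dimension-agnostic.)

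For the variational identity, I would first dispose of the trivial case $\b{p}=\b{0}$, where both sides vanish. For $\b{p}\neq\b{0}$, since $A\succ 0$ it has a positive definite square root $A^{1/2}$ with inverse $A^{-1/2}$. Writing $\b{b}^T\b{p} = (A^{1/2}\b{b})^T(A^{-1/2}\b{p})$ and applying the Cauchy--Schwarz inequality gives $(\b{b}^T\b{p})^2 \leq \|A^{1/2}\b{b}\|_2^2\,\|A^{-1/2}\b{p}\|_2^2 = (\b{b}^TA\b{b})(\b{p}^TA^{-1}\b{p})$, so for every $\b{b}\neq\b{0}$ the ratio is at most $\b{p}^TA^{-1}\b{p}$. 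Equality in Cauchy--Schwarz holds when $A^{1/2}\b{b}$ is parallel to $A^{-1/2}\b{p}$, i.e.\ at $\b{b}=A^{-1}\b{p}$, so the supremum is attained and equals $\b{p}^TA^{-1}\b{p}$.

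For the inequality $\b{p}^TA^{-1}\b{p}\le\b{p}^TB^{-1}\b{p}$, the cleanest route is a congruence argument. From $A\succeq B\succ 0$, conjugating by $B^{-1/2}$ gives $B^{-1/2}AB^{-1/2}\succeq I$, hence every eigenvalue of the symmetric positive definite matrix $M:=B^{-1/2}AB^{-1/2}$ is at least $1$, so every eigenvalue of $M^{-1}=B^{1/2}A^{-1}B^{1/2}$ is at most $1$, i.e.\ $B^{1/2}A^{-1}B^{1/2}\preceq I$. Conjugating back by $B^{-1/2}$ yields $A^{-1}\preceq B^{-1}$; evaluating the quadratic form of $B^{-1}-A^{-1}\succeq 0$ at $\b{p}$ gives the claim.

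I do not anticipate a genuine obstacle. The only points requiring care are (i) that $A\succ 0$ so that $A^{-1}$ and $A^{\pm 1/2}$ exist --- in the intended application this holds since the relevant matrix dominates $Q\succ 0$ by Lemma~\ref{lemma:pt_geq_q} --- and (ii) that the maximum is over $\b{b}\neq\b{0}$ and is genuinely attained, not merely approached, which the explicit maximizer $\b{b}=A^{-1}\b{p}$ settles.
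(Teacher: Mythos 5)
Your proof is correct, and it differs from the paper's in both halves, though it stays in the same family of square-root arguments. For the variational identity the paper factors $A = D^TD$, changes variables $\b{b}' = D\b{b}$, and identifies the maximum of the resulting Rayleigh quotient with the largest eigenvalue of the rank-one matrix $D^{-T}\b{p}\b{p}^TD^{-1}$, namely $\b{p}^TA^{-1}\b{p}$; you instead write $\b{b}^T\b{p} = (A^{1/2}\b{b})^T(A^{-1/2}\b{p})$ and apply Cauchy--Schwarz, exhibiting the maximizer $\b{b} = A^{-1}\b{p}$ explicitly. Your route is more elementary and makes attainment of the maximum transparent, while the paper's eigenvalue computation is the natural one if the rank-one numerator were replaced by a general quadratic form. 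For the inequality $\b{p}^TA^{-1}\b{p} \le \b{p}^TB^{-1}\b{p}$, you rederive the operator antitonicity $A \succeq B \succ 0 \Rightarrow B^{-1} \succeq A^{-1}$ by conjugating with $B^{-1/2}$, inverting $B^{-1/2}AB^{-1/2} \succeq I$, and conjugating back; the paper invokes its Appendix result (Lemma \ref{lemma:inv_psd}), whose proof conjugates by $B^{-1/2}$ as well but then passes through the cyclic-invariance of eigenvalues to reach $A^{1/2}B^{-1}A^{1/2} \succeq I$ before undoing the congruence with $A^{-1/2}$. The two derivations are interchangeable; yours is marginally shorter since it never switches the roles of $A$ and $B$. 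You also correctly flag the statement's notational slip ($\b{b} \in \R^{M-1}$ versus $A \in \R^{n\times n}$) and the need for $A \succ 0$, which indeed follows from $A \succeq B \succ 0$, so there is no gap.
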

\begin{proof}
    Since $A$ is positive definite, $A$ is invertible. We can also factor it as $A = D^TD$, where $D$ is also invertible. Using this,
    \begin{align}
        \frac{\b{b}^T\b{p}\b{p}^T\b{b}}{\b{b}^TA\b{b}} = \frac{\b{b}^T\b{p}\b{p}^T\b{b}}{\b{b}^TD^TD\b{b}}  
    \end{align}
    Let $\b{b}' = D\b{b}$. $\b{b}'$ can be any vector in $\R^{M-1}$ because $D$ is invertible. Hence,
    \begin{align}
        \frac{\b{b}^T\b{p}\b{p}^T\b{b}}{\b{b}^TA\b{b}} = \frac{\b{b}'^TD^{-T}\b{p}\b{p}^TD^{-1}\b{b}'}{\b{b}'^T\b{b}'},
    \end{align}
and,
    \begin{align}
        \max_{\b{b} \in \R^n}\frac{\b{b}^T\b{p}\b{p}^T\b{b}}{\b{b}^TA\b{b}} &= \max_{\b{b}' \in \R^n}\frac{\b{b}'^TD^{-T}\b{p}\b{p}^TD^{-1}\b{b}'}{\b{b}'^T\b{b}'}\nonumber\\
        &= \lambda_1(D^{-T}\b{p}\b{p}^TD^{-1}).  
    \end{align}
    Here, $\lambda_1(.)$ denotes the largest eigenvalue of a matrix. Note that $D^{-T}\b{p}\b{p}^TD^{-1}$ is a rank one matrix, as it is the outer product of two vectors in $\R^{M-1}$. Hence, the matrix has at least $M-2$ zero eigenvalues, and the largest (the only positive) eigenvalue is $\b{p}^TD^{-1}D^{-T}\b{p}$. But,
    \begin{equation}
       \b{p}^TD^{-1}D^{-T}\b{p} = \b{p}^T(D^TD)^{-1}\b{p}= \b{p}^TA^{-1}\b{p}.
    \end{equation}
    Since $A \succeq B \succ 0$, we have $B^{-1} \succeq A^{-1}$ (Lemma \ref{lemma:inv_psd} in Appendix \ref{appendix:inv_psd}). And hence,
     \begin{align}
        \max_{\b{b} \in \R^n}\frac{\b{b}^T\b{p}\b{p}^T\b{b}}{\b{b}^TA\b{b}}= \b{p}^TA^{-1}\b{p} 
        \leq \b{p}^TB^{-1}\b{p}.
    \end{align}
    
\end{proof}
With the help of the above lemmas, we are ready to prove Theorem \ref{thm:lower_bound}.
\begin{proof}[Proof of Theorem \ref{thm:lower_bound}]
For any policy $\pi$, define $\mathcal{T}^i_{\pi} = \{t \in \N | u_{i}(t) = 1\}$.
For a lower bound, let us relax the problem to include all linear combinations of other nodes when $i$ is not scheduled, i.e., $\b{c}_t \in \R^{M-1}$ and $c_{t,i} = 0$, $\forall t \notin \mathcal{T}^i_{\pi}$, and $\forall t \in \mathcal{T}^i_{\pi}$, $c^{\pi}_t = \b{e}_i$. For notational ease, assume $i=1$ (the following analysis will hold for all $i$). With this relaxation, we proceed to obtain a lower bound to the estimation error of $x^1_t$ in this setting.

To understand the covariance evolution, let us split up $P_t$ and $Q$ into blocks as follows.
\begin{equation}\label{eqn:P_partition}
    P_t = \begin{bmatrix}
    p_{11}(t) & \b{p}^T_1(t)\\
    \b{p}_1(t) & R_t
    \end{bmatrix}
\end{equation}

\begin{equation}\label{eqn:Q_partition}
    Q = \begin{bmatrix}
    q_{11} & \b{q}^T_1\\
    \b{q}_1 & Q_{-1}
    \end{bmatrix}
\end{equation}
Let us understand what these blocks represent. $p_{11}(t)$ represents the variance of estimation error $x^1_t - \hat{x}^1_t$. $\b{p}_1(t)$ represents the covariance of $(x^1_t - \hat{x}^1_t)$ with the other error processes $(x^j_t - \hat{x}^j_t)$. $R_t$ represents the covariance matrix of the remaining $M-1$ sensors.\\
Similarly, $q_{11}$ represents the noise variance of $w^1_t$. $\b{q}_1$ represents the covariance of $w^1_t$ with the other noise terms  $w^j_t$. $Q_{-1}$ represents the covariance matrix of the remaining $M-1$ noise terms.

We split up the matrix into the above blocks to understand the best error variance we can achieve for sensor 1, without scheduling sensor 1. Let $t'_1$ denote the last time sensor 1 was scheduled, and $t_1$ denote the last time sensor 1 was scheduled prior to $t'_1$. For any $t'_1 > t > t_1$, we have $\b{c}_t = [0 \; \b{b}_t^T]^T$. And so, we can rewrite the matrix evolution 
\begin{equation*}
    P_{t+1} = P_t - \frac{P_t\b{c}_t\b{c}_t^TP_t}{\b{c}_t^TP_t\b{c}_t} + Q
\end{equation*}
in terms of its respective blocks,
\begin{equation}\label{eqn:diag_evol}
        p_{11}(t+1) = p_{11}(t) - \frac{\b{b}_t^T(\b{p}_1(t)\b{p}_1(t)^T)\b{b}_t}{\b{b}_t^TR_t\b{b}_t} + q_{11},
\end{equation}
\begin{equation}\label{eqn:pvec_evol}
        \b{p}_{1}(t+1) = \b{p}_{1}(t) - \frac{R_t\b{b}_t\b{b}_t^T\b{p}_1(t)}{\b{b}_t^TR_t\b{b}_t} + \b{q}_1,
\end{equation}
\begin{equation}
        R_{t+1} = R_t - \frac{R_t\b{b}_t\b{b}_t^TR_t}{\b{b}_t^TR_t\b{b}_t} + Q_{-1}.
\end{equation}
Note that the evolution of $R_t$ is very similar to the evolution of $P_t$, and hence, from the same analysis as in Lemma \ref{lemma:pt_geq_q}, we conclude that $R_t \succeq Q_{-1} \succ 0$. We now focus on (\ref{eqn:diag_evol}), and lower bound it. Clearly, 
\begin{equation*}
    p_{11}(t+1) \geq p_{11}(t) + q_{11} - \max_{\b{b} \in \R^{M-1}} \frac{\b{b}^T\b{p}_1(t)\b{p}_1^T(t) \b{b}}{\b{b}^TR_t\b{b}} 
\end{equation*}

Using Lemma \ref{lemma:lb_main_lem}, we get a lower bound on the one step increment in estimation error, i.e.,
\begin{equation}
     p_{11}(t+1) \geq p_{11}(t) + q_{11} - \b{p}_1^T(t)Q_{-1}^{-1}\b{p}_1(t)
\end{equation}
\begin{equation}
         \implies  p_{11}(t+1) - p_{11}(t)\geq q_{11} - \b{p}_1^T(t)Q_{-1}^{-1}\b{p}_1(t).
\end{equation}
Using the telescopic nature of the left-hand side, we have,
\begin{align*}
     p_{11}(t+1) &- p_{11}(t_1+1)\\ 
     &\geq (t - t_1) q_{11} - \sum_{k=t_1+1}^{t} \b{p}_1^T(k)Q_{-1}^{-1}\b{p}_1(k).
\end{align*}
Since $t_1$ was a time at which sensor 1 was scheduled, $p_{11}(t_1+1) = q_{ii}$ and hence, for $t'_1 > t > t_1$,
\begin{equation}\label{eqn:lb_almost}
    p_{11}(t+1) \geq (t - t_1 + 1) q_{11} - \sum_{k=t_1+1}^{t} \b{p}_1^T(k)Q_{-1}^{-1}\b{p}_1(k).
\end{equation}
Note that left-hand side of (\ref{eqn:lb_almost}) is independent of the choices of $\b{b}_k$ since time $t_1$, whereas the right-hand side depends on the choices of $\b{b}_k$ as seen in the summation term. Also, the evolution of $\b{p}_1(t)$ and $R_t$ is unaffected by the values of $p_{11}(t)$. This tells us that the bound is true regardless of the choices of $\b{b}_k$ made since time $t_1$. We now focus on the evolution of $\b{p}_{k}$. From (\ref{eqn:pvec_evol}), we have
\begin{equation}
    \b{p}_{1}(t+1) = \b{p}_{1}(t) - \frac{R_t\b{b}_t\b{b}_t^T\b{p}_1(t)}{\b{b}_t^TR_t\b{b}_t} + \b{q}_1.
\end{equation}
 Since the LHS of the bound in (\ref{eqn:lb_almost}) is independent of the choices of $b_k$, choose $\b{b}_t = R_t^{-1}\b{p}(t)$. On substituting $b_{t} = R^{-1}_{t}\b{p}(t)$, we have 
\begin{align}
    \b{p}_1(t+1) &= \b{p}_1(t) + \b{q}_1 \nonumber \\
    & \;\;\; - \frac{R_t(R_t^{-1}\b{p}_1(t))(\b{p}^T_1(t)R_t^{-1})\b{p}_1(t)}{\b{p}_1^T(t)R_t^{-1}R_tR_t^{-1}\b{p}_1^T(t)} \nonumber \\ 
    &= \b{p}_1(t) - \b{p}_1(t) + \b{q}_1\nonumber \\ 
    &= \b{q}_1.
\end{align}
At time $t_1$, we know that $p_{1j}(t) = q_{1j}$, $\forall j$. Hence, $\b{p}(t_1) = \b{q}_1$. On using the above mentioned choices of $\b{b}_k$ in (\ref{eqn:lb_almost}), we have
\begin{equation}
    p_{11}(t+1) \geq  (t - t_1 + 1) q_{11} - (t-t_1) \b{q}_1^TQ_{-1}^{-1}\b{q}_1.
\end{equation}
But note that $h_1(t) = t-t_1$ is the age of sensor 1. Hence, we can generalize the lower bound to all $t$ by using the age process.
\begin{equation*}
    p_{11}(t+1) \geq  q_{11} + h_1(t)(q_{11} -  \b{q}_1^TQ_{-1}^{-1}\b{q}_1).
\end{equation*}

Note that all our above analysis does not depend on choosing to analyze with sensor 1. The analysis holds for all sensors, and hence,
\begin{equation}
    p_{ii}(t+1) \geq  q_{ii} + q'_{ii}h_i(t),
\end{equation}
where we define $q'_{ii} = q_{ii} - \b{q}_i^TQ_{-i}^{-1}\b{q}_i$.
Since $Q \succ 0$, we have $q_{ii} > 0$ and $q'_{ii} = q_{ii} - \b{q}_i^TQ_{-i}^{-1}\b{q}_i > 0$ from the properties of positive definite matrices and their Schur complements \cite{boyd_block_nodate}. Consequently, we get
\begin{equation}
    \E[(x^i_t - \hat{x}^i_t)^2] = p_{ii}(t+1) - q_{ii} \geq q'_{ii}h_i(t).
\end{equation}

\end{proof}

\section{Proof of Theorem \ref{thm:ext_A_upper_bound} }\label{appendix:extension_upper_bound}
The proof for the upper bound on every timestep for the case of diagonal $A$ matrices is very similar to the discrete Wiener process. Writing the elementwise evolution of the diagonal \textit{apriori} covariance matrix. Note that for the case with $A$ matrices, equation \eqref{eqn:Kalman_cov_eq} gets modified to
\begin{equation}
    P_{t+1} = AP_tA^T - \frac{AP_t\b{c}_t\b{c}_t^TP_tA^T}{\b{c}_t^TP_t\b{c}_t} + Q.
\end{equation}
So, the diagonal element update becomes 
\begin{align}
    p_{ii}(t+1) &= a_{ii}^2 p_{ii}(t) - \sum_{l=1}^M \frac{a_{ii}a_{ll}p_{il}(t)^2}{p_{ll}(t)}u_l(t) + q_{ii} \label{eqn:diag_elem_A}
\end{align}
Hence, 
\begin{equation}
    p_{ii}(t+1) \leq a_{ii}^2p_{ii}(t) + q_{ii}.
\end{equation}
Therefore, summing until the last time  $i$ was scheduled, we get
\begin{align}
    p_{ii}(t+1) - p_{ii}(t_i+1) \leq \sum_{k=t_i}^t a_{ii}^{2(k-t_i)}\cdot q_{ii}
\end{align}
But, $p_{ii}(t_i) = 0$. Also note that $t-t_i$ is essentially the age of sensor $i$, $h_i(t)$. Consequently,
\begin{align}
   \E[(x^i_t - \hat{x}^i_t)^2] \leq q_{ii}\E\left[\sum_{k=0}^{h_i(t)}a_{ii}^{2k}\right].
\end{align}

\section{Proof of Theorem \ref{thm:ext_A_lower_bound} }\label{appendix:extension_lower_bound}
The proof for the lower bound on every timestep for the case of diagonal $A$ matrices is very similar to the discrete Wiener process. 
We consider the same partitions of the $P_t$ matrix as defined in \eqref{eqn:P_partition} and \eqref{eqn:Q_partition}, along with $\b{c}_t = [0 \;\;\b{b}_t]^T$. Without loss in generality, consider only $i=1$ (as in Appendix \ref{appendix:thm_lower_bound_proof}). Then, the respective evolution of the partitions is 
\begin{equation}
    p_{11}(t+1) = a^2_{11}p_{11}(t) - a_{11}^2\frac{\b{b}^T\b{p_1}(t)\b{p_1}^T(t)\b{b}}{\b{b}^TR_t\b{b}} + q_{11} 
\end{equation}
\begin{equation}
    \b{p}_1(t+1) = a_{11}\begin{bmatrix}
        a_{22} & & \\
        & \ddots & \\
        & & a_{MM}
    \end{bmatrix}\left(I - \frac{R_t\b{b}\b{b}^T}{\b{b}^TR_t\b{b}}\right)\b{p_1}(t) + \b{q}_{1} 
\end{equation}
And,
\begin{align}
    &R_{t+1}\nonumber\\ 
    &= \begin{bmatrix}
        a_{22} & & \\
        & \ddots & \\
        & & a_{MM}
    \end{bmatrix}\left(R_t - \frac{R_t\b{b}\b{b}^TR_t}{\b{b}^TR_t\b{b}}\right) \begin{bmatrix}
        a_{22} & & \\
        & \ddots & \\
        & & a_{MM}
    \end{bmatrix}\nonumber\\
    &\qquad+ Q_{-1}.
\end{align}
From Lemma \ref{lemma:lb_main_lem}, 
\begin{align}\label{eqn:ind_step}
    p_{11}(t+1) &\geq a^2_{11}p_{11}(t) - a_{11}^2\b{p}^T_1(t)R^{-1}_1\b{p}_1(t) + q_{11} \nonumber\\
    &\geq a^2_{11}p_{11}(t) - a_{11}^2\b{p}^T_1(t)Q^{-1}_{=1}\b{p}_1(t) + q_{11}.
\end{align}
Note that we can apply \eqref{eqn:ind_step} inductively till the last time sensor 1 was scheduled. Let $t_1$ be the last time sensor 1 was scheduled. Then,
\begin{align}
    p_{11}(t) &\geq a_{ii}^{2(h_{i}(t) + 1)}p_{11}(t_1)\nonumber\\
    &\qquad - \sum_{k=1}^{h_{ii}(t)+1}a_{11}^{2k} \b{p_1}^T(k)Q_{-1}^{-1}\b{p_1}(k) + q_{11}\sum_{k=0}^{h_{ii}(t)}a_{11}^{2k}.
\end{align}
This lower bound is true for all trajectories of $\b{p}_1(k)$ as the trajectory of $\b{p}_{1}(k)$ does not depend on the evolution of $p_{11}(k)$. Hence, if we choose a $\B{b}$ every step such that $\b{p}_{1}(k) = \b{q}_1$ at every step, we get
\begin{align}
    p_{11}(t) \geq q_{11}\sum_{k=0}^{h_{ii}(t)+1}a_{11}^{2k} - \b{q}_1^TQ_{-1}^{-1}\b{q}_1\sum_{k=1}^{h_{ii}(t)+1}a_{11}^{2k}.
\end{align}
Note that,
\begin{align}
    \E[(x^i_t - \hat{x}^i_t)^2]  = \frac{p_{ii}(t) - q_{ii}}{a_{ii}^2} \geq \sum_{i=1}^M\left(q_{ii}-\b{q}_i^TQ_{-1}^{-i}\b{q}_i\right)\sum_{k=0}^{h_{ii}(t)}a_{11}^{2k}.
\end{align}
Finally, on summing $\sum_{i=1}^M\E[(x^i_t - \hat{x}^i_t)^2]$, we get the desired result.

\section{Proof of Theorem \ref{thm:mee_policy}}\label{appendix:mee_proof}
\begin{proof}
For all $\pi \in \Pi$, we have 
\begin{align}
     \E[\Delta_{\pi^{MEE}}(t) | P_t] &\leq \sum_{i=1}^M\sqrt{q_{ii}} - \sum_{i=1}^M\frac{p_{ii}(t)}{\sqrt{q_{ii}}}\E\left[u^{\pi^{MEE}}_i(t) \big| P_t\right] \nonumber \\
     &\leq \sum_{i=1}^M\sqrt{q_{ii}} - \sum_{i=1}^M\frac{p_{ii}(t)}{\sqrt{q_{ii}}}\E\left[u^\pi_i(t) \big| P_t\right]. \label{eqn:drift_mee_ineq}
\end{align}
This is because the MEE policy minimizes the upper bound of the Lyapunov drift in \eqref{eqn:mee_drift_ub}.
Let $\pi^{SR}$ be the optimal stationary random policy for the expected weighted sum of age of information (EWSAoI) with $q_{ii}$ as the weights (refer to \cite[Section IV C]{kadota_scheduling_2018}).
Let us choose $\pi^{SR}$ in the RHS of inequality (\ref{eqn:drift_mee_ineq}). Recall that the optimal stationary random policy schedules sensor $i$ with probability $\sqrt{q_{ii}}/\sum_{j=1}^M \sqrt{q_{jj}}$\cite[Section IV C]{kadota_scheduling_2018}. Hence,
\begin{align}
    \E[\Delta_{\pi^{MEE}}(t)|P_t] \leq \sum_{i=1}^M\sqrt{q_{ii}} - \sum_{i=1}^M\frac{p_{ii}(t)}{\sqrt{q_{ii}}}\frac{\sqrt{q_{ii}}}{\sum_{j=1}^M\sqrt{q_{jj}}}.
\end{align}
We take the expectation on both sides and employ the law of iterated expectation to obtain,
\begin{align}
    \E[\Delta_{\pi^{MEE}}(t)] \leq \sum_{i=1}^M\sqrt{q_{ii}} - \frac{\sum_{i=1}^M\E[p_{ii}(t)]}{\sum_{i=1}^M\sqrt{q_{ii}}}.
\end{align}
From the telescopic nature of $\Delta_{\pi^{MEE}}(t)$, we obtain the following inequality on summing the above expression from $t=1$ to $t=T$.
\begin{align}
    \E[L(T+1)] - \E[L(1)] \leq T\sum_{i=1}^M\sqrt{q_{ii}} - \frac{\sum_{t=1}^{T}\sum_{i=1}^M\E[p_{ii}(t)]}{\sum_{i=1}^M\sqrt{q_{ii}}}.
\end{align}
On rearrangement of terms, we end up with
\begin{align}
    \frac{1}{T}\sum_{t=1}^{T}\sum_{i=1}^M\E[p_{ii}(t)] \leq \left(\sum_{i=1}^M\sqrt{q_{ii}}\right)^2+\left(\sum_{i=1}^{M}\sqrt{q_{ii}}\right)\frac{\E[L(1)]}{T},
\end{align}
where the last inequality is true because $L(T+1)$ is non-negative. Note that $L(1)$ is a finite quantity, and hence, as $T \to \infty$, we get,
\begin{equation}
     \lim_{T \to \infty}\frac{1}{T}\sum_{t=1}^T\sum_{i=1}^M\E[p_{ii}(t)] \leq \left(\sum_{i=1}^M\sqrt{q_{ii}}\right)^2.
\end{equation}
Therefore,
\begin{align}
    P_{MEE} &= \lim_{T \to \infty}\frac{1}{T}\sum_{t=0}^{T-1}\E\left[||\b{x}_t - \hat{\b{x}}_t||^2_2\right]\nonumber\\
    &=\frac{1}{T}\sum_{t=1}^T\sum_{i=1}^M\E[p_{ii}(t)] - \tr{Q}\nonumber\\
    &\leq \left(\sum_{i=1}^M\sqrt{q_{ii}}\right)^2 - \sum_{i=1}^M q_{ii}.
\end{align}
It also turns out that using the age-based lower bound developed in Section \ref{chap:aoi-error-relation}, we can develop a lower bound for $P_{OPT}$. This is discussed in detail in Section \ref{chap:scaling}. The result arises from the result of \cite[Theorem 6]{kadota_scheduling_2018}. However, we just state and use the result here.
The optimal estimation error $P_{OPT}$ is lower bounded as
\begin{equation}
    P_{OPT} \geq \frac{1}{2}\left(\sum_{i=1}^M \sqrt{q'_{ii}}\right)^2 - \frac{1}{2}\sum_{i=1}^Mq'_{ii}.
\end{equation}
Therefore, the ratio, $P_{MEE}/P_{OPT}$ can be upper bounded by
\begin{equation}
    \frac{P_{MEE}}{P_{OPT}} \leq 2\frac{\left(\sum_{i=1}^M\sqrt{q_{ii}}\right)^2 - \sum_{i=1}^M q_{ii}}{ \left(\sum_{i=1}^M\sqrt{q'_{ii}}\right)^2 - \sum_{i=1}^M q'_{ii}}.
\end{equation}
\end{proof}

\section{Proof of Theorem \ref{thm:mwa_policy}}\label{appendix:mwa_proof}
\begin{proof}
    For all $\pi \in \Pi$,
\begin{align}\label{eqn:drift_mwa_inequality}
 \E&[\Delta_{\pi^{MWA}}(t) | \b{h}_t]\nonumber\\ 
 &= \sum_{i=1}^M\sqrt{q_{ii}} - \sum_{i=1}^M\sqrt{q_{ii}}(h_{i}(t-1)+1)\E\left[u^{\pi^{MWA}}_i(t)\bigg| \b{h}_t\right]\nonumber\\
 &\leq \sum_{i=1}^M\sqrt{q_{ii}} - \sum_{i=1}^M\sqrt{q_{ii}}(h_i(t-1)+1)\E\left[u^{\pi}_i(t)\bigg| \b{h}_t\right].
\end{align}
This is because the MWA policy minimizes the Lyapunov drift in \eqref{eqn:mwa_drift}.
Let $\pi^{SR}$ be the optimal stationary random policy for the expected weighted sum of age of information (EWSAoI) with $q_{ii}$ as the weights (refer to \cite[Section IV C]{kadota_scheduling_2018}).
Let us choose $\pi^{SR}$ in the RHS of inequality (\ref{eqn:mwa_drift}). Note that the optimal stationary random policy schedules sensor $i$ with probability $\sqrt{q_{ii}}/\sum_{j=1}^M \sqrt{q_{jj}}$\cite[Section IV C]{kadota_scheduling_2018}. Hence,
\begin{align}
    \E[\Delta_{\pi^{MWA}}(t)|\b{h}_t] \leq \sum_{i=1}^M\sqrt{q_{ii}} - \frac{\sum_{i=1}^Mq_{ii}(h_i(t-1)+1)}{\sum_{i=1}^M\sqrt{q_{ii}}}.
\end{align}
Using the result from equation \eqref{eqn:pii_ub} in theorem \ref{thm:upper_bound}, we get
\begin{align}
    \E[\Delta_{\pi^{MWA}}(t)|\b{h}_t] \leq \sum_{i=1}^M\sqrt{q_{ii}} - \frac{\sum_{i=1}^M\E[p_{ii}(t)|\b{h}_t]}{\sum_{i=1}^M\sqrt{q_{ii}}}.
\end{align}
We take the expectation on both sides and employ the law of iterated expectation to obtain,
\begin{align}
    \E[\Delta_{\pi^{MWA}}(t)] \leq \sum_{i=1}^M\sqrt{q_{ii}} - \frac{\sum_{i=1}^M\E[p_{ii}(t)]}{\sum_{i=1}^M\sqrt{q_{ii}}}.
\end{align}
From the telescopic nature of $\Delta_{\pi^{MWA}}(t)$, we obtain the following inequality on summing the above expression from $t=1$ to $t=T$.
\begin{align}
    \E[L(T+1)] - \E[L(1)] \leq T\sum_{i=1}^M\sqrt{q_{ii}} - \frac{\sum_{t=1}^{T}\sum_{i=1}^M\E[p_{ii}(t)]}{\sum_{i=1}^M\sqrt{q_{ii}}}.
\end{align}
On rearrangement of terms, we end up with
\begin{align}
    \frac{1}{T}&\sum_{t=1}^{T}\sum_{i=1}^M\E[p_{ii}(t)] \leq \left(\sum_{i=1}^M\sqrt{q_{ii}}\right)^2+\left(\sum_{i=1}^{M}\sqrt{q_{ii}}\right)\frac{\E[L(1)]}{T},
\end{align}
where the last inequality is true because $L(T+1)$ is non-negative. Note that $L(1)$ is a finite quantity, and hence, as $T \to \infty$, we get,
\begin{equation}
    \lim_{T \to \infty}\frac{1}{T}\sum_{t=1}^T\sum_{i=1}^M\E[p_{ii}(t)] \leq \left(\sum_{i=1}^M\sqrt{q_{ii}}\right)^2.
\end{equation}
Therefore,
\begin{align}
    P_{MWA} &= \lim_{T \to \infty}\frac{1}{T}\sum_{t=0}^{T-1}\E\left[||\b{x}_t - \hat{\b{x}}_t||^2_2\right]\nonumber\\
    &=\frac{1}{T}\sum_{t=1}^T\sum_{i=1}^M\E[p_{ii}(t)] - \tr{Q}\nonumber\\
    &\leq \left(\sum_{i=1}^M\sqrt{q_{ii}}\right)^2 - \sum_{i=1}^M q_{ii}.
\end{align}
As discussed before, we use the age-based lower bound for $P_{OPT}$. This is discussed in detail in Section \ref{chap:scaling}. We only state and use the result here.
The optimal estimation error $P_{OPT}$ is lower bounded as
\begin{equation}
    P_{OPT} \geq \frac{1}{2}\left(\sum_{i=1}^M \sqrt{q'_{ii}}\right)^2 - \frac{1}{2}\sum_{i=1}^Mq'_{ii}.
\end{equation}
Therefore, the ratio, $P_{MWA}/P_{OPT}$ can be upper bounded as
\begin{equation}
    \frac{P_{MWA}}{P_{OPT}} \leq 2\frac{\left(\sum_{i=1}^M\sqrt{q_{ii}}\right)^2-\sum_{i=1}^Mq_{ii}}{ \left(\sum_{i=1}^M\sqrt{q'_{ii}}\right)^2 - \sum_{i=1}^M q'_{ii}}.
\end{equation}
\end{proof}

\section{Proof of Theorem \ref{thm:order_of_mag_low}}\label{appendix:thm_order_of_mag_low}
\noindent
To prove theorem \ref{thm:order_of_mag_low}, we will use the following lemmas. 
\begin{lemma}\label{lemma:low_rank_full_rank_lb}
The optimal scheduling problem in (\ref{eqn:kalman_prob_prior}) with non-full rank $Q$ ($\text{rank}(Q) = L < M$) is lower bounded by the relaxed tracking problem
    \begin{equation*}
    \begin{aligned}
    P^* := \min_{\pi \in \Pi} \quad & \lim_{T \to \infty} \frac{1}{T} \sum_{t=1}^T \E[\tr{P'_t}]\\
    \textrm{s.t.}  \quad & P'_{t+1} = P'_t - \frac{P'_t\b{\Tilde{c}}_t\b{\Tilde{c}}_t^TP'_t}{\b{\Tilde{c}}_t^TP'_t\b{\Tilde{c}}_t} + Q'\\
      &\b{\Tilde{c}}_t \in \R^L,
    \end{aligned}
    \end{equation*}
    where $P'_t \succ 0$ and $P'_t \in \R^{L \times L}$, $L < M$ and $Q'$ is full rank in $\R ^{L \times L}$.
\end{lemma}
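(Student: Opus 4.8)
The plan is to reach the bound through two successive relaxations, each of which can only lower the optimal time‑average value, ending exactly at the claimed $L$‑dimensional full‑rank problem.

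\textbf{Step 1: relax the scheduling alphabet.} I would first relax $\b{c}_t\in\{\b{e}_1,\dots,\b{e}_M\}$ to $\b{c}_t\in\R^M$, letting the monitor observe an arbitrary scalar linear functional $\b{c}_t^T\b{x}_t$ of the state. The dynamics remain linear with scalar Gaussian observations, so for any fixed oblivious sequence $\{\b{c}_t\}$ the optimal estimator is still the Kalman filter and the a~priori covariance still obeys $P_{t+1}=P_t-\frac{P_t\b{c}_t\b{c}_t^TP_t}{\b{c}_t^TP_t\b{c}_t}+Q$, with the convention $P_{t+1}=P_t+Q$ whenever $\b{c}_t^TP_t\b{c}_t=0$ (the observation is then deterministic given the past, hence useless). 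Since $\{\b{e}_1,\dots,\b{e}_M\}\subset\R^M$, this relaxed problem has a larger feasible set, so its optimum is at most that of \eqref{eqn:kalman_prob_prior}.

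\textbf{Step 2: collapse the degenerate subspace with a genie.} Diagonalize $Q=U\Lambda U^T$ with $U\in\R^{M\times L}$ having orthonormal columns spanning the range of $Q$ and $\Lambda\succ 0$ in $\R^{L\times L}$; set $Q':=\Lambda$ and let $U_\perp$ complete $U$ to an orthonormal basis. Because $U_\perp^TQU_\perp=0$, the increments satisfy $U_\perp^T\b{w}_t=0$ almost surely, so $U_\perp^T\b{x}_t=U_\perp^T\b{x}_0$ is constant in $t$. Now reveal $U_\perp^T\b{x}_0$ to the monitor for free; this can only decrease the optimum, since any estimator from Step~1 remains usable. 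Once $U_\perp^T\b{x}_t$ is known for all $t$, observing $\b{c}_t^T\b{x}_t$ carries exactly the information of $(U^T\b{c}_t)^T(U^T\b{x}_t)$, and writing $\tilde{\b{y}}_t:=U^T\b{x}_t\in\R^L$ and $\tilde{\b{c}}_t:=U^T\b{c}_t$, the genie‑aided problem is precisely the tracking of the $L$‑dimensional Wiener process $\tilde{\b{y}}_{t+1}=\tilde{\b{y}}_t+U^T\b{w}_t$ (with full‑rank increment covariance $Q'$) from scalar observations $\tilde{\b{c}}_t^T\tilde{\b{y}}_t$, where $\tilde{\b{c}}_t$ ranges over all of $\R^L$ since $U^T$ maps $\R^M$ onto $\R^L$. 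As $\b{x}_t-\hat{\b{x}}_t=U(\tilde{\b{y}}_t-\hat{\tilde{\b{y}}}_t)$ and $U$ is an isometry, the per‑slot squared errors agree, and applying \eqref{eqn:covariance_relation} on both sides together with $\tr{Q'}=\tr{Q}$ shows the objective of the genie‑aided problem equals $\lim_{T\to\infty}\frac1T\sum_{t=1}^T\E[\tr{P'_t}]$ for the reduced recursion $P'_{t+1}=P'_t-\frac{P'_t\tilde{\b{c}}_t\tilde{\b{c}}_t^TP'_t}{\tilde{\b{c}}_t^TP'_t\tilde{\b{c}}_t}+Q'$ in the statement. A harmless one‑step prefix (observe a $\tilde{\b{c}}_0$ with $\tilde{\b{c}}_0^TP'_0\tilde{\b{c}}_0=0$) makes $P'_1=P'_0+Q'\succ 0$, and the argument of Lemma~\ref{lemma:pt_geq_q} applied in $\R^L$ then gives $P'_t\succ 0$ thereafter, while the prefix leaves the time average unchanged. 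Chaining the inequalities yields $(\text{opt of }\eqref{eqn:kalman_prob_prior})\ge(\text{opt of Step 1})\ge(\text{genie opt})=P^*$, which is the claim.

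\textbf{Main obstacle.} The delicate point is Step~2: arguing that revealing $U_\perp^T\b{x}_0$ makes the $M$‑dimensional relaxed problem \emph{exactly} the $L$‑dimensional one rather than merely an upper bound for it — i.e., that the admissible observation functionals on the reduced state exhaust $\R^L$, that the two objectives line up after accounting for the a~priori/a~posteriori shift and for $\tr{Q}=\tr{Q'}$, and that no information is lost in passing to the reduced Kalman recursion. The side issues — the degenerate denominators $\b{c}_t^TP_t\b{c}_t=0$ and the role of $P_0$, which is no longer tied to a positive‑definite $Q$ — are routine but must be dispatched carefully via the convention and prefix arguments above.
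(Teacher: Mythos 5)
Your argument is correct, but it takes a genuinely different route from the paper. The paper's proof never diagonalizes $Q$: it picks $L$ linearly independent coordinates, shows that for $i$ in the dependent set the increments satisfy $w^i_t=\sum_{j}\alpha^i_j w^j_t$ almost surely (uncorrelated jointly Gaussian $\Rightarrow$ independent, hence the residual is a.s.\ zero), writes $\b{x}_t=H\b{x}'_t$ with $H=[I;\;\alpha]$, and observes that (i) $\tr{P_t}\ge\tr{P'_t}$ simply by discarding the error terms of the dependent coordinates, and (ii) every admissible schedule $\b{e}_i$ in the original problem maps to some vector in $\R^L$ (either a standard basis vector or the row $[\alpha^i_1\,\dots\,\alpha^i_L]^T$), which is why the reduced problem must be relaxed to $\tilde{\b{c}}_t\in\R^L$; its $Q'$ is then the $L\times L$ principal submatrix of $Q$. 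You instead relax to $\b{c}_t\in\R^M$ first, use the spectral decomposition $Q=U\Lambda U^T$, genie-reveal the a.s.\ constant null-space component $U_\perp^T\b{x}_0$, and reduce isometrically to an $L$-dimensional problem with $Q'=\Lambda$. What each buys: your isometric reduction gives an exact per-slot error equality (rather than the paper's one-sided trace inequality) and needs the identity $\tr{Q'}=\tr{Q}$ to translate between error and a priori-trace objectives, which you correctly invoke; the paper's coordinate-selection avoids the genie and keeps $Q'$ as a literal submatrix of $Q$, which is how $Q'$ is described in Theorem~\ref{thm:order_of_mag_low} ("a rank $L$ submatrix contained in $Q$"). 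Your choice $Q'=\Lambda$ still satisfies the lemma as stated ($Q'$ full rank in $\R^{L\times L}$) and in fact feeds a bound at least as strong downstream, since $\lambda_L(\Lambda)=\lambda_L(Q)\ge\lambda_L(Q_{\mathrm{sub}})$ by eigenvalue interlacing; just be aware that the constants quoted in Theorem~\ref{thm:order_of_mag_low} would then refer to the nonzero spectrum of $Q$ rather than to a submatrix. The side issues you flag (zero-denominator convention, the initial covariance, the one-step prefix to force $P'_t\succ0$) are handled at least as carefully as in the paper.
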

\begin{proof}
    In Appendix \ref{appendix:low_rank_full_rank_lb}.
\end{proof}
\begin{lemma}\label{lemma:low_rank_full_rank_ub}
There exists a constant $c > 0$ such that
the optimal scheduling problem in (\ref{eqn:kalman_prob_prior}) with non-full rank $Q$ (rank$(Q) = L < M$)
is upper bounded by $c\cdot P^*$, where 
    \begin{equation*}
    \begin{aligned}
    P^* := \min_{\pi \in \Pi} \quad & \lim_{T \to \infty} \frac{1}{T} \sum_{t=1}^T \E[\tr{P'_t}]\\
    \textrm{s.t.}  \quad & P'_{t+1} = P'_t - \frac{P'_t\b{\Tilde{c}}_t\b{\Tilde{c}}_t^TP'_t}{\b{\Tilde{c}}_t^TP'_t\b{\Tilde{c}}_t} + Q'\\
      &\b{\Tilde{c}}_t = \b{e}_j, j \in \{1,...,L\},
    \end{aligned}
    \end{equation*}
    and $P'_t \succ 0$, $P'_t \in \R^{L \times L}$, $L < M$, $Q'$ is a full-rank $L \times L$ submatrix of $Q$.
\end{lemma}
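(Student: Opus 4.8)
The plan is to exploit the rank-$L$ structure of $Q$ to reduce the $M$-sensor problem to the $L$-sensor problem in the statement, at the cost of a multiplicative constant. First I would relabel the sensors so that the first $L$ coordinates correspond to an $L\times L$ principal submatrix $Q'$ of $Q$ with $\text{rank}(Q')=L$; such a submatrix exists since $Q\succeq 0$ has rank $L$. Write $S=\{1,\dots,L\}$. Because $\text{rank}(Q)=L$, for each $j\notin S$ the column $Q\b{e}_j$ lies in the column span of $\{Q\b{e}_k:k\in S\}$, so $Q\b{e}_j=\sum_{k\in S}\alpha_{jk}Q\b{e}_k$ for some scalars $\alpha_{jk}$. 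A direct computation of $\var\big(w^j_t-\sum_{k\in S}\alpha_{jk}w^k_t\big)$ using these relations shows it equals $0$, hence $w^j_t=\sum_{k\in S}\alpha_{jk}w^k_t$ almost surely for every $t$ and every $j\notin S$. Telescoping the dynamics then gives $x^j_t-x^j_0=\sum_{k\in S}\alpha_{jk}(x^k_t-x^k_0)$, so each dependent coordinate is a fixed linear combination of the independent coordinates plus a constant offset that is known whenever $\hat{\b{x}}_0=\b{x}_0$ (equivalently $P_0=0$).

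Since the MMSE/Kalman estimator is linear, the same identity holds for the estimation errors: under any oblivious policy, $x^j_t-\hat{x}^j_t=\sum_{k\in S}\alpha_{jk}(x^k_t-\hat{x}^k_t)$ for $j\notin S$. Cauchy--Schwarz gives $(x^j_t-\hat{x}^j_t)^2\leq L\big(\sum_{k\in S}\alpha_{jk}^2\big)\sum_{k\in S}(x^k_t-\hat{x}^k_t)^2$, so summing over all $M$ coordinates,
\begin{equation*}
\sum_{i=1}^M(x^i_t-\hat{x}^i_t)^2\;\leq\;c_1\sum_{k\in S}(x^k_t-\hat{x}^k_t)^2,\qquad c_1:=1+L\sum_{j\notin S}\sum_{k\in S}\alpha_{jk}^2.
\end{equation*}

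Next I would take an optimal policy $\pi^*$ for the reduced $L$-sensor problem (observation vectors $\tilde{\b{c}}_t=\b{e}_j$, $j\in S$, noise covariance $Q'$) and lift it to a policy $\pi'$ for the original problem that never schedules a dependent sensor and schedules among $S$ exactly as $\pi^*$. Restricting the $M$-dimensional Kalman recursion run under $\pi'$ to the block indexed by $S$ reproduces exactly the reduced recursion in the statement of the lemma, because the coordinates in $S$ form a self-contained Wiener sub-process with noise covariance $Q'$ and every observation is of a coordinate in $S$; consequently the time-average of $\sum_{k\in S}\E_{\pi'}[(x^k_t-\hat{x}^k_t)^2]$ equals $P^*-\tr{Q'}$. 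Taking expectations and time averages in the displayed inequality and upper bounding the value of \eqref{eqn:kalman_prob_prior} by the performance of the feasible policy $\pi'$ yields $\min_{\pi\in\Pi}\lim_{T\to\infty}\frac{1}{T}\sum_{t=1}^T\E[\tr{P_t}]\leq c_1(P^*-\tr{Q'})+\tr{Q}\leq c\,P^*$ for any $c\geq\max\{c_1,\tr{Q}/\tr{Q'}\}$, which is the claim.

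The step requiring the most care, and the main obstacle, is the exactness of the error identity $x^j_t-\hat{x}^j_t=\sum_{k\in S}\alpha_{jk}(x^k_t-\hat{x}^k_t)$. It rests on (i) the vanishing-variance argument that turns the column-span relation for $Q$ into the almost-sure identity for the noise increments, and (ii) a benign initialization assumption --- $\hat{\b{x}}_0=\b{x}_0$, or a $P_0$ respecting the same low-rank structure --- so that the constant offset $x^j_0-\sum_{k\in S}\alpha_{jk}x^k_0$ is known and does not leak into $\hat{x}^j_t$; without this the dependent coordinates would carry an irreducible bias and the reduction would break. A secondary, purely bookkeeping, issue is reconciling the trace-of-a-priori-covariance objectives in \eqref{eqn:kalman_prob_prior} and in the definition of $P^*$ with the monitoring-error quantities $\sum_i(x^i_t-\hat{x}^i_t)^2$ and $\tr{P'_t}-\tr{Q'}$; this only shifts things by the additive constants $\tr{Q}$ and $\tr{Q'}$, which are absorbed into the final multiplicative constant $c$.
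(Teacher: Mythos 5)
Your proposal is correct and follows essentially the same route as the paper: restrict attention to policies that never schedule the linearly dependent sensors, note that the errors of those sensors are exact linear combinations of the errors of the $L$ independent sensors (whose Kalman recursion is exactly the reduced full-rank problem), and absorb the resulting blow-up into a multiplicative constant. The only differences are cosmetic --- the paper packages the error relation as $P_t = HP'_tH^T$ with $H=[I;\alpha]$ and takes $c=\lambda_1(H^TH)$, whereas you obtain a (looser) constant coordinatewise via Cauchy--Schwarz and track the additive $\tr{Q}$, $\tr{Q'}$ bookkeeping explicitly; the initialization caveat you flag is likewise assumed in the paper's argument ($P_0=HP'_0H^T$).
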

\begin{proof}
    In Appendix \ref{appendix:low_rank_full_rank_ub}.
\end{proof}

\begin{lemma}\label{lemma:full_rank_relaxed_lb}
Consider a relaxation of problem (\ref{eqn:kalman_prob_prior}) where $Q$ is a full rank matrix. If
\begin{equation*}
\begin{aligned}
 P^* := \min_{\pi \in \Pi} \quad & \lim_{T \to \infty} \frac{1}{T} \sum_{t=1}^T \E[tr(P_t)]\\
\textrm{s.t.}  \quad & P_{t+1} = P_t - \frac{P_t\b{c}_t\b{c}_t^TP_t}{\b{c}_t^TP_t\b{c}_t} + Q\\
  &\b{c}_t \in \R^M,
\end{aligned}\label{eqn:kalman_prob_prior_relaxed_1}
\end{equation*}
then,
\begin{equation*}
    P^* \geq \frac{M(M+1)}{2}\lambda_M(Q)
\end{equation*}
where $\lambda_M(Q)$ denotes the smallest eigenvalue of $Q$.
\end{lemma}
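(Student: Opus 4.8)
The plan is to track the ordered eigenvalues of $P_t$ under the relaxed recursion and show that the $j$-th largest eigenvalue is forced to be at least $(M-j+1)\lambda_M(Q)$ at every sufficiently late time, \emph{pathwise and for any choice} of $\{\b{c}_t\}$; summing over $j$ then gives $\tr{P_t}\ge\frac{M(M+1)}{2}\lambda_M(Q)$, which immediately yields the bound on the time average.

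First I would isolate the a posteriori covariance $P_{t|t}:=P_t-\frac{P_t\b{c}_t\b{c}_t^TP_t}{\b{c}_t^TP_t\b{c}_t}$, so that $P_{t+1}=P_{t|t}+Q$. Reusing the computation in the proof of Lemma \ref{lemma:pt_geq_q}, $P_{t|t}\succeq 0$, and since $P_{t|t}\b{c}_t=0$ it is singular; hence $\lambda_M(P_{t|t})=0$ and $\mathrm{rank}(P_{t|t})\le M-1$. Because $P_t\succ 0$ in the relaxed problem, $P_t\b{c}_t\ne 0$, so $\frac{P_t\b{c}_t\b{c}_t^TP_t}{\b{c}_t^TP_t\b{c}_t}$ is a genuine rank-one positive semidefinite matrix, and Cauchy interlacing for rank-one downdates gives $\lambda_i(P_{t|t})\ge\lambda_{i+1}(P_t)$ for $i=1,\dots,M-1$. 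Combining this with Weyl's inequality $\lambda_i(P_{t|t}+Q)\ge\lambda_i(P_{t|t})+\lambda_M(Q)$, and writing $q:=\lambda_M(Q)$, I get the two key recursions $\lambda_M(P_{t+1})\ge q$ and $\lambda_i(P_{t+1})\ge\lambda_{i+1}(P_t)+q$ for $i\le M-1$.

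From here a short induction on $k$ establishes $\lambda_{M-k}(P_t)\ge(k+1)q$ for all $t\ge k+1$: the case $k=0$ is the first recursion, and the inductive step applies the second recursion with $i=M-k-1$, using $t-1\ge k+1$ to invoke the hypothesis. In particular, for every $t\ge M$ the bound $\lambda_{M-k}(P_t)\ge(k+1)q$ holds simultaneously for all $k=0,\dots,M-1$, so reindexing, $\tr{P_t}=\sum_{k=0}^{M-1}\lambda_{M-k}(P_t)\ge q\sum_{k=0}^{M-1}(k+1)=\frac{M(M+1)}{2}\lambda_M(Q)$.

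Finally, since this inequality is deterministic and independent of the schedule, $\E[\tr{P_t}]\ge\frac{M(M+1)}{2}\lambda_M(Q)$ for every $t\ge M$; the finitely many terms with $t<M$ do not affect the Cesàro limit, so $P^*=\min_{\pi\in\Pi}\lim_{T\to\infty}\frac1T\sum_{t=1}^T\E[\tr{P_t}]\ge\frac{M(M+1)}{2}\lambda_M(Q)$. I expect the only delicate step to be the eigenvalue bookkeeping — lining up the indices in the interlacing inequality with those in Weyl's inequality and verifying that the induction really transfers the ``growth by $q$ per rank level, at a cost of one time step'' correctly; the rest is routine.
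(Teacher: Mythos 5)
Your proof is correct and follows essentially the same route as the paper's: both treat $P_{t+1}$ as a rank-one downdate plus $Q$, use Weyl/interlacing to get $\lambda_i(P_{t+1})\ge\lambda_{i+1}(P_t)+\lambda_M(Q)$, and induct to obtain $\lambda_i(P_t)\ge(M-i+1)\lambda_M(Q)$ after finitely many steps before summing and taking the Cesàro limit. Your bookkeeping is in fact slightly cleaner, and your final count $\sum_{k=0}^{M-1}(k+1)=\frac{M(M+1)}{2}$ is the correct one (the paper's intermediate expression $M(M-1)\lambda_M(Q)/2$ is a typo for the stated bound).
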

\begin{proof}
    In Appendix \ref{appendix:full_rank_relaxed_lb}.
\end{proof}
\noindent
Equipped with the above lemmas, we now proceed to prove theorem \ref{thm:order_of_mag_low}. 
\begin{proof}[Proof of Theorem \ref{thm:order_of_mag_low}]
Using Lemma \ref{lemma:low_rank_full_rank_lb} and equation \eqref{eqn:expec_error_P_t_equivalence}, it is evident that
\begin{align}
    P_{OPT} \geq P'^* - \tr{Q},
\end{align}
where, 
\begin{equation}
\begin{aligned}
 P'^* := \min_{\pi \in \Pi} \quad & \lim_{T \to \infty} \frac{1}{T} \sum_{t=1}^T \E[\tr{P'_t}]\\
\textrm{s.t.}  \quad & P'_{t+1} = P'_t - \frac{P'_t\b{c}_t\b{c}_t^TP'_t}{\b{c}_t^TP'_t\b{c}_t} + Q'\\
  &\b{c}_t \in \R^L,
\end{aligned}\label{eqn:kalman_prob_prior_relaxed_2}
\end{equation}
and $Q'$ is a full-rank submatrix of $Q$. 
Using the result of Lemma \ref{lemma:full_rank_relaxed_lb}, we have 
\begin{align}
    P'^* \geq \frac{L(L+1)}{2}\lambda_{L}(Q').
\end{align}
Since $Q'$ is full rank, $\lambda_L(Q') > 0$.
Using, Lemma \ref{lemma:low_rank_full_rank_ub} and equation \ref{eqn:expec_error_P_t_equivalence}, it is evident that,
\begin{align}
    P_{OPT} \leq c\left(P'^\dag - \sum_{i=1}^LQ'_{ii}\right),
\end{align}
where,
\begin{equation}
\begin{aligned}
 P'^\dag := \min_{\pi \in \Pi} \quad & \lim_{T \to \infty} \frac{1}{T} \sum_{t=1}^T \E[\tr{P'_t}]\\
\textrm{s.t.}  \quad & P'_{t+1} = P'_t - \frac{P'_t\b{c}_t\b{c}_t^TP'_t}{\b{c}_t^TP'_t\b{c}_t} + Q'\\
  &\b{c}_t = \b{e}_j, j \in \{1,...,L\},
\end{aligned}\label{eqn:kalman_prob_prior_relaxed_3}
\end{equation}
and $Q'$ is a full-rank submatrix of $Q$. Using inequality (\ref{eqn:ub_final_full_rank}) in subsection \ref{subsection:full_rank}, we know that $P'^\dag$ is upper bounded as
\begin{equation}
    P'^\dag \leq \left(\sum_{i=1}^L\sqrt{Q'_{ii}}\right)^2
\end{equation}
where $Q'_{ii}$ are the diagonal elements of $Q'$.

\end{proof}

\section{Proof of Lemma \ref{lemma:low_rank_full_rank_lb}}\label{appendix:low_rank_full_rank_lb}
\begin{proof}
Consider the system model (\ref{eqn:sys_model}) $\b{x}_{t+1} = \b{x}_t + \b{w}_t$, where $\b{w}_t \sim \mathcal{N}(0, Q)$, and $\text{rank}(Q) = L < M$. Since there are only $L$ independent rows in $Q$,  we can choose $M-L$ unique row indices and write them as a linear combination of the other rows. Let this index set be $\mathcal{I} = \{i_1,...,i_{M-L}\}$. Now, note that for any $i \in \mathcal{I}$ and for all $k = 1,...,M$, we have linear coeeficients $\alpha^i_j$ such that
    \begin{equation*}
        \E[w^k_tw^i_t] = \E\left[w_t^k\left(\sum_{j\notin \mathcal{I}}\alpha^i_jw^j_t\right)\right].
    \end{equation*}
    Rewriting this, we get for all $k$,
    \begin{equation*}
        \E\left[w^k_t\left(w^i_t - \sum_{j\notin \mathcal{I}}\alpha^i_jw^j_t\right)\right] = 0.
    \end{equation*}
    Since $w^j_t$ are jointly Gaussian, uncorrelatedness implies independence. Hence, $w^k_t \perp \left(w^i_t - \sum_{j\notin \mathcal{I}} \alpha^i_jw^j_t\right)$ for all $k$. But, this is only possible if 
    \begin{equation*}
       w^i_t - \sum_{j\notin \mathcal{I}} \alpha^i_jw^j_t = 0
    \end{equation*} almost surely. This is because the sum $w^i_t - \sum_{j\notin \mathcal{I}} \alpha^i_jw^j_t$ clearly depends on at least one $w^k_t$. Hence, there exists $ w^i_t = \sum_{j\notin \mathcal{I}} \alpha^i_jw^j_t$.
    From the above result, we note that, for $i \notin \mathcal{I}$,
    \begin{equation*}
        x^i_{t+1} - x^i_t = w^i_t
    \end{equation*}
    and, for $i \in \mathcal{I}$,
    \begin{align*}
        x^i_{t+1} - x^i_t  &= w^i_t = \sum_{j\notin \mathcal{I}}\alpha^i_jw^j_t = \sum_{j\notin \mathcal{I}}\alpha^i_j(x^j_{t+1} - x^j_t).
    \end{align*}
    Since $x_0 = 0$ without loss in generality, by summing the above telescopic sequence from 0 to $t-1$, we get,
    \begin{equation*}
        x^i_t = \sum_{j\notin \mathcal{I}}\alpha^i_jx^j_t
    \end{equation*}
    almost surely. Assume without loss in generality $i_1=L+1, i_2 = L+2,..., i_{M-L} = M$. For notational simplicity, define matrix $\alpha$ as
    \begin{equation}
        \alpha := \begin{bmatrix}
        \alpha^{i_1}_1 & \alpha^{i_1}_2 & \hdots & \alpha^{i_1}_L\\
        \alpha^{i_2}_1 & \alpha^{i_2}_2 & \hdots & \alpha^{i_2}_L\\
        \vdots & \vdots&\ddots & \vdots\\
        \alpha^{i_{M-L}}_1 & \alpha^{i_{M-L}}_2 & \hdots & \alpha^{i_{M-L}}_L
        \end{bmatrix}
    \end{equation}
    Consequently, we define a reduced state $\b{x}'_t$ with $L$ dimensions, such that $x'^j_t = x^j_t$ for $j \in \{1,...,L\}$, and a reduced state innovation, $v_t$, where $v^j_t = w^j_t$ for $j \in \{1, ..., L\}$. Rewriting, 
    \begin{align}\label{eqn:mat_mult_for_state_reduc}
        x_t &= \begin{bmatrix}
        I\\
        \alpha
        \end{bmatrix} \b{x}'_t = H\b{x}'_t\\
        w_t &= \begin{bmatrix}
        I\\
        \alpha
        \end{bmatrix} v_t = Hv_t.
    \end{align}
    The system dynamics can now be represented with the reduced order system 
    \begin{equation}\label{eqn:reduced_state_eqn}
        \b{x}'_{t+1} = \b{x}'_{t} + \b{v}_t.
    \end{equation}
    This is because we can multiply (\ref{eqn:reduced_state_eqn}) with $A$ to get the original system. 
    \begin{equation*}
    \b{x}_{t+1} = H\b{x}'_{t+1} = H\b{x}'_t + H\b{v}_t = \b{x}_t + \b{w}_t    
    \end{equation*}
    Also, note that $P_t = \E[(\b{x}_t-\b{\hat{x}}_t)(\b{x}_t -\b{\hat{x}}_t)^T]$. 
    Consequently, we define $P'_t := \E[(\b{x}'_t-\hat{\b{x}}'_t )(\b{x}'_t -\hat{\b{x}}'_t)^T]$, 
    and $Q'= \E[\b{v}_t\b{v}_t^T]$. Note that $Q = \E[\b{w}_t\b{w}_t^T] = \E[(H\b{v}_t)(\b{v}_t^TH^T)] = HQ'H^T$.\\\\
\noindent
Since the first $L$ dimensions of $x_t$ and $\Tilde{x}_t$ are the same, 
\begin{align}
    \tr{P_t} = \sum_{i=1}^M \E[(x^i_t)^2] &\geq \sum_{i=1}^L \E[(x^i_t)^2] = \sum_{i=1}^L \E[(x'^i_t)^2] = \tr{P'_t}.\label{eqn:trace_inequality}
\end{align}
In the original scheduling problem, note that scheduling $x^i_t$ when $i \notin \mathcal{I}$, is the same as scheduling $\Tilde{x}^i_t$ or in other words, the scheduling vector $\b{\Tilde{c}}_t = \b{e}_i$. But scheduling $x^i_t$ when $i \in \mathcal{I}$, is equivalent to scheduling with $\b{\Tilde{c}}_t^T = [\alpha^i_1\;...\;\alpha^i_L ]^T$ in the reduced state problem. Using this equivalence and inequality (\ref{eqn:trace_inequality}), we can lower bound the original problem by using the reduced state but relaxing the scheduling constraint to any linear combination.
\end{proof}

\section{Proof of Lemma \ref{lemma:low_rank_full_rank_ub}}\label{appendix:low_rank_full_rank_ub}
\begin{proof}
As seen in the proof of Lemma \ref{lemma:low_rank_full_rank_lb} in Appendix \ref{appendix:low_rank_full_rank_lb}, we observe that the $M$-dimensional system model in (\ref{eqn:sys_model}), $\b{x}_{t+1} = \b{x}_t + \b{w}_t$, with $\b{w}_t \sim \mathcal{N}(0, Q)$, $\text{rank}(Q) = L < M$, can be converted to a $L$-dimensional reduced state system model (\ref{eqn:reduced_state_eqn}), $\b{x}'_{t+1} = \b{x}'_t + \b{v}_t$, where $\b{v}_t \sim \mathcal{N}(0, Q')$. Note that on multiplying matrix $A$ (as described in \ref{eqn:mat_mult_for_state_reduc}) with $\b{x}'_t$ and $\b{v}_t$, we get $\b{x}_t$ and $\b{w}_t$ respectively. Define $P_t := \E[(\b{x}_t-\b{\hat{x}}_t)(\b{x}_t-\b{\hat{x}}_t)^T]$ and $P'_t := \E[(\b{x}'_t-\hat{\b{x}}'_t)(\b{x}'_t-\hat{\b{x}}'_t)^T]$. \\

\noindent
Assume that at time $t$, $P_t = AP'_tA^T$. Then at time $t+1$,
\begin{align}
    P_{t+1} &= P_t - \frac{P_t\b{c}_t\b{c}_t^TP_t}{\b{c}_t^TP_t\b{c}_t} + Q\nonumber\\
    &= HP'_tH^T - \frac{HP'_tH^T\b{c}_t\b{c}_t^THP'_tH^T}{\b{c}_t^TAP'_tA^T\b{c}_t} + A\Tilde{Q}A^T\nonumber \\
    &= H\left(P'_t - \frac{P'_tH^T\b{c}_t\b{c}_t^THP'_t}{\b{c}_t^THP'_tH^T\b{c}_t} + \Tilde{Q}\right)H^T
\end{align}
If we do not schedule sensors $i \in \mathcal{I}$ (as defined in Appendix \ref{appendix:low_rank_full_rank_lb}), i.e., for $i \in \mathcal{I}$, $c^i_t = 0$, then, note that
\begin{align}
    H^T\b{c}_t = \begin{bmatrix}
        I &
        \alpha^T 
    \end{bmatrix} \b{c}_t = \begin{bmatrix}
        I &
        \alpha^T 
    \end{bmatrix}\begin{bmatrix}
        \b{\Tilde{c}}_t \\
        \b{0}
    \end{bmatrix} = \b{\Tilde{c}}_t.
\end{align}
Consequently,
\begin{align}
    P_{t+1} &= H\left(P'_t - \frac{P'_t\b{\Tilde{c}}_t\b{\Tilde{c}}_t^THP'_t}{\b{c}_t^TP'_tH^T\b{\Tilde{c}}_t} + \Tilde{Q}\right)H^T = HP'_{t+1}H^T.
\end{align}
If we assume $P_0 = HP'_0H^T$, then, if we do not schedule $i \in \mathcal{I}$, we have $P_t = HP'_tH^T$, $\forall t$.
\noindent
Now, we observe that, if $i \in \mathcal{I}$ is not scheduled, then
\begin{align}
    \tr{P_t} &= \tr{HP'_tH^T} =\tr{H^THP'_t}\nonumber\\ 
    &\leq \lambda_1(H^TH)\tr{P'_t} \leq\lambda_1(I + \alpha^T\alpha)\tr{P'_t}.\nonumber
    \end{align}
Hence, $\tr{P_t} \leq c\cdot \tr{P'_t}$.
Consequently, 
\begin{align}
    \lim_{T \to \infty}\frac{1}{T}\sum_{t = 1}^T\E[\tr{P_t}] &\leq c\cdot \lim_{T \to \infty}\frac{1}{T}\sum_{t = 1}^T\E[\tr{P'_t}].
\end{align}
Observe that the inequality is true for any policy that does not schedule $i \in \mathcal{I}$. So,
\begin{align}
    \lim_{T \to \infty}\frac{1}{T}&\sum_{t = 1}^T\E_\pi[\tr{P_t}] \leq c\cdot \min_{\pi \in \Pi} \lim_{T \to \infty}\frac{1}{T}\sum_{t = 1}^T\E_\pi[\tr{P'_t}] = c P^*.
\end{align}
But any policy that does not schedule $i \in \mathcal{I}$ is not going to be any better than the optimal policy. Therefore, we have 
\begin{align}
    P_{OPT} \leq \lim_{T \to \infty}\frac{1}{T}\sum_{t = 1}^T\E_\pi[\tr{P_t}],
\end{align}
and hence, $P_{OPT} \leq  c P^*$.
\end{proof}

\section{Proof of Lemma \ref{lemma:full_rank_relaxed_lb}}\label{appendix:full_rank_relaxed_lb}
\begin{proof}
From the recursion 
    \begin{equation}
        P_{t+1} = P_t + Q - \frac{P_t\b{c}_t\b{c}_t^TP_t}{\b{c}_t^TP_t\b{c}_t}\label{eqn:recursion},
    \end{equation}
    we observe that $P_{t+1}$ is a rank one ``perturbation" of $P_t + Q$. 
    Let $\lambda_1(A) \geq \lambda_2(A) \geq ... \geq \lambda_M(A)$ denote the ordered eigenvalues of a symmetric matrix $A$. This is possible if $A$ is a symmetric matrix as all the eigenvalues are real. We now employ Weyl's inequalities \cite{horn_johnson_1985} to bound the the ordered eigenvalues of the symmetric matrices $P_{t}$. 
    \begin{lemma} \label{lemma:lb_induction}
        If at time $t=0$, $P_0 \succeq 0$, then $\lambda_i(P_{t}) \geq  (M-i+1)\lambda_{M}(Q),\;  i  = 1,...,M$ after finite time.
    \end{lemma}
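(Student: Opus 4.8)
The plan is to derive a one-step recursive inequality on the ordered eigenvalues of $P_t$ and then unroll it over time. First I would rewrite \eqref{eqn:recursion} as $P_{t+1} = (P_t - \Delta_t) + Q$, where $\Delta_t := \frac{P_t\b{c}_t\b{c}_t^TP_t}{\b{c}_t^TP_t\b{c}_t}$ is a positive-semidefinite matrix of rank at most one (interpret $\Delta_t = 0$ in the degenerate case $P_t\b{c}_t=0$). A Cauchy--Schwarz argument using the semi-inner-product $\langle\b{u},\b{v}\rangle := \b{u}^TP_t\b{v}$ shows $\b{v}^TP_t\b{v} \geq (\b{v}^TP_t\b{c}_t)^2/(\b{c}_t^TP_t\b{c}_t)$, i.e. $P_t - \Delta_t \succeq 0$ whenever $P_t \succeq 0$. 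By induction this gives $P_t \succeq 0$ for all $t$, and moreover $P_{t+1} = (P_t - \Delta_t) + Q \succeq Q$ for every $t \geq 0$ (so in fact $P_t \succ 0$ for $t \geq 1$ since $Q$ is full rank here). In particular $\lambda_M(P_{t+1}) \geq \lambda_M(Q)$, which settles the $i = M$ case.

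Next I would combine two standard facts. The rank-one interlacing inequality (a special case of Weyl's inequalities \cite{horn_johnson_1985}) gives $\lambda_k(P_t - \Delta_t) \geq \lambda_{k+1}(P_t)$ for $k = 1,\dots,M-1$, since $\Delta_t = \b{v}\b{v}^T$ for a suitable $\b{v}$. Weyl's inequality in the form $\lambda_k(X+Y) \geq \lambda_k(X) + \lambda_M(Y)$, applied with $X = P_t - \Delta_t$ and $Y = Q$, then yields the key recursion
\begin{equation*}
    \lambda_k(P_{t+1}) \;\geq\; \lambda_{k+1}(P_t) + \lambda_M(Q), \qquad k = 1,\dots,M-1,
\end{equation*}
together with $\lambda_M(P_{t+1}) \geq \lambda_M(Q)$ from the previous paragraph.

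Finally I would run an induction on $t$ to show that $\lambda_{M-j+1}(P_t) \geq j\,\lambda_M(Q)$ for all $1 \leq j \leq \min(t,M)$. The base case $j = 1$ is exactly $\lambda_M(P_t) \geq \lambda_M(Q)$; in the inductive step, for $j \geq 2$ one sets $k = M-j+1 \in \{1,\dots,M-1\}$ in the key recursion and applies the induction hypothesis to $\lambda_{M-(j-1)+1}(P_t)$, picking up one extra factor of $\lambda_M(Q)$. Once $t \geq M$, every value $j = 1,\dots,M$ is covered, and re-indexing via $i = M - j + 1$ gives $\lambda_i(P_t) \geq (M-i+1)\lambda_M(Q)$ for all $i$, which is the claimed bound for all $t$ past a finite $t_0$ (any $t_0 \geq M-1$ works).

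The main obstacle is the index bookkeeping in the interlacing/Weyl step: the lower bound on $\lambda_k(P_{t+1})$ is in terms of $\lambda_{k+1}(P_t)$ rather than $\lambda_k(P_t)$, so each eigenvalue reaches its target level one time step later than the next-larger one. This one-slot shift is precisely why the statement holds only after a finite transient rather than from $t=0$, and getting the ranges of $j$ and $t$ consistent in the double induction is the only genuinely delicate point; the degenerate case $P_t\b{c}_t = 0$ requires no separate treatment since all the inequalities above hold a fortiori when $\Delta_t = 0$.
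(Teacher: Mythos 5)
Your proposal is correct and follows essentially the same route as the paper's proof: both establish $P_{t+1}\succeq Q$ (you via a Cauchy--Schwarz argument, the paper via Lemma~\ref{lemma:pt_geq_q}), both use Weyl's inequalities / rank-one interlacing to obtain the key recursion $\lambda_i(P_{t+1})\geq \lambda_{i+1}(P_t)+\lambda_M(Q)$, and both then run the same induction over time to accumulate one $\lambda_M(Q)$ increment per eigenvalue index until the bound $\lambda_i(P_t)\geq(M-i+1)\lambda_M(Q)$ holds for all $t\geq M$. The only differences are cosmetic (decomposing $P_{t+1}$ as $(P_t-\Delta_t)+Q$ rather than as a rank-one perturbation of $P_t+Q$), so no further changes are needed.
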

    \begin{proof}
        In Appendix \ref{appendix:lb_induction}.
    \end{proof}
    From Lemma \ref{lemma:lb_induction}, it is evident that $\tr{P_t} = \sum_{i=1}^M\lambda_i(P_t) \geq \sum_{i=1}^M (M-i+1)\lambda_M(Q) = M(M-1)\lambda_M(Q)/2$ after some finite time $t_0$. Hence, for any policy, we have
    \begin{align}
        \lim_{T \to \infty }\frac{1}{T}\sum_{t=1}^T \tr{P_t} &= \lim_{T \to \infty }\frac{1}{T}\sum_{t=1}^{t_0-1} \tr{P_t} \nonumber\\
        & \;\;\;\;\;\;\;\;\;+  \lim_{T \to \infty }\frac{T-t_0}{T}\frac{1}{T-t_0}\sum_{t=t_0}^T \tr{P_t}\nonumber \\
        &\geq \frac{M(M-1)}{2}\lambda_M(Q)
    \end{align}
    The first term in the RHS goes to zero as $T \to \infty$ as the sum is finite. For $t \geq t_0$, $\tr{P_t} \geq M(M-1)\lambda_M(Q)/2$. So,
    \begin{align}
         \lim_{T \to \infty }\frac{1}{T}\sum_{t=1}^T \tr{P_t} \geq \frac{M(M-1)}{2}\lambda_M(Q)
    \end{align}
  Since this is true for any policy, we conclude
    \begin{align}
        \min_{\pi \in \Pi}\lim_{T\to \infty}\frac{1}{T}\sum_{t=1}^T 
        \E_\pi[\tr{P_t}] \geq \frac{M(M-1)}{2}\lambda_M(Q).
    \end{align}
\end{proof}

\section{Proof of Lemma \ref{lemma:lb_induction}}\label{appendix:lb_induction}
\begin{proof}
    According to Weyl's inequality\cite{horn_johnson_1985}, for any two symmetric matrices $A,B \in \R^{M\times M}$,
    \begin{align}
        \lambda_{i}(A) + \lambda_{j}(B) &\leq \lambda_{i+j-M}(A + B)\label{eqn:weyl_lb}\\
        \lambda_{i+j-1}(A + B) &\leq  \lambda_{i}(A) + \lambda_{j}(B)\label{eqn:weyl_ub}
    \end{align}
    where $1 \leq i, j \leq M$ and $i+j \leq M + 1$.\\
    Choosing $j$ to be $1$ and using Weyl's on (\ref{eqn:recursion}), we get
    \begin{equation*}
        \lambda_{i}(P_{t+1}) \leq \lambda_i(P_t + Q) + \lambda_1\left(-\frac{P_t\b{c}_t\b{c}_t^TP_t}{\b{c}_t^TP_t\b{c}_t}\right) = \lambda_i(P_t + Q) 
    \end{equation*}
    because the largest eigenvalue of $-\frac{P_t\b{c}_t\b{c}_t^TP_t}{\b{c}_t^TP_t\b{c}_t}$ is 0. Similarly, choosing $j$ to be $M-1$ and using Weyl's on (\ref{eqn:recursion}), we get,
     \begin{equation*}
        \lambda_{i-1}(P_{t+1}) \geq \lambda_i(P_t + Q) + \lambda_{M-1}\left(-\frac{P_t\b{c}_t\b{c}_t^TP_t}{\b{c}_t^TP_t\b{c}_t}\right) = \lambda_i(P_t + Q) 
    \end{equation*}
    because the second smallest eigenvalue of $-\frac{P_t\b{c}_t\b{c}_t^TP_t}{\b{c}_t^TP_t\b{c}_t}$ is 0, as it is a rank 1 negative semidefinite matrix. On combining both the inequalities, we have,
    \begin{equation}
        \lambda_{i+1}(P_t + Q) \leq \lambda_i(P_{t+1}) \leq \lambda_{i}(P_t + Q) \label{eqn:interlacing}
    \end{equation}
    Using (\ref{eqn:interlacing}) and lemma (\ref{lemma:pt_geq_q}), we can derive the required bound using induction.\\
    At time $t=0$, $P_0 \succ 0$. Therefore, at $t = 1$, $P_1 = P_0 - P_0\b{c}_t\b{c}_t^TP_0/\b{c}_t^TP_0\b{c}_t + Q \succeq Q$ (Lemma \ref{lemma:pt_geq_q}). Hence, $\lambda_i(P_1) \geq \lambda_i(Q) \;\; \forall i = 1,...,M$, and $\lambda_i(P_1) \geq \lambda_M(Q)$.
    At time $t = 2$,
    \begin{itemize}
        \item $\lambda_M(P_2) \geq \lambda_M(Q)$
        \item For $i =1,...,M-1$, $\lambda_i(P_2) \geq \lambda_{i+1}(P_1) + \lambda_M(Q) \geq 2\lambda_M(Q)$. 
    \end{itemize}
    At time $t = 3$, 
    \begin{itemize}
        \item $\lambda_M(P_3) \geq \lambda_M(Q)$
        \item $\lambda_{M-1}(P_3) \geq \lambda_M(P_2) + \lambda_M(Q) \geq 2\lambda_M(Q)$ 
        \item For $i =1,...,M-2$, $\lambda_i(P_3) \geq \lambda_{i+1}(P_2) + \lambda_M(Q) \geq 3\lambda_M(Q)$. 
    \end{itemize}
    On extending the above line of argument inductively till time $t = M$, we get $\lambda_i(P_M) \geq (M-i+1)\lambda_M(Q)$.\\
    Now, assume the hypothesis that at some $t$, $\lambda_i(P_t) \geq (M-i+1)\lambda_M(Q)$. This is true for $t = M$ as argued above. Note that at time $t + 1$,
    \begin{align}
        \lambda_i(P_{t+1}) &\geq  \lambda_{i+1}(P_{t} + Q) \nonumber\\
        &\stackrel{(a)}{\geq} \lambda_{i+1}(P_{t}) + \lambda_M(Q)\nonumber\\
        &\geq (M-(i+1)+1)\lambda_M(Q) + \lambda_M(Q)\nonumber\\
        &= (M-i+1)\lambda_M(Q)
    \end{align}
    where inequality $(a)$ comes from employing the Weyl's inequality on $\lambda_{i+1}(P_{t} + Q)$.
    Also note that $\lambda_M(P_{t + 1}) \geq \lambda_M(Q)$. Therefore, the hypothesis is true for one step. Note that $\lambda_M(P_{t+1}) \geq \lambda_M(Q)$ is always true from Lemma \ref{lemma:pt_geq_q}. Therefore, we have proved the lemma by induction.
\end{proof}

\section{Proof of Lemma \ref{lemma:eig_p_t_gen_A}}\label{appendix:eigenvalue_inequality}
\begin{proof}
    The proof of this lemma is very similar to \ref{appendix:lb_induction}. Note that with a general $A$ matrix, the dynamics of $P_t$ follows
    \begin{align}
        P_{t+1} = AP_{t}A^T - \frac{AP_t\b{c}_t\b{c}_t^TP_tA^T}{\b{c}_t^TP_t\b{c}_t} + Q.
    \end{align}
    Again, observe that $P_{t+1}$ is a rank one ``perturbation" of $AP_tA^T +Q$. The next step involves using Weyl's inequalities (as described in \ref{appendix:lb_induction}). 
    \begin{equation}
        \lambda_{i+1}(AP_tA^T + Q) \leq \lambda_i(P_{t+1}) \leq \lambda_{i}(AP_tA^T + Q) \label{eqn:interlacing1}
    \end{equation}
    Using (\ref{eqn:interlacing1}), we can derive the required bound using induction.\\
    At time $t=0$, $P_0 \succ 0$. Therefore, at $t = 1$, $P_1 = AP_0A^T - AP_0\b{c}_t\b{c}_t^TP_0A^T/\b{c}_t^TP_0\b{c}_t + Q \succeq Q$ (this is a straightforward extension of Lemma \ref{lemma:pt_geq_q}). Hence, $\lambda_i(P_1) \geq \lambda_i(Q) \;\; \forall i = 1,...,M$, and $\lambda_i(P_1) \geq \lambda_M(Q)$.
    Before we proceed, we need invoke results from \cite[Theorem 5.4]{ipsen_1998} on the relative perturbations of eigenvalues on symmetric transformations. For any matrices $X$ and $D$,
    \begin{align}\label{eqn:mult_perturb}
        \lambda_{\min}(DD^T)&\lambda_j(X) \nonumber\\
        &\leq \lambda_{j}(DXD^T) \nonumber\\
        &\leq \lambda_{\max}(DD^T)\lambda_j(X) &&\text{(From \cite[Theorem 5.4]{ipsen_1998})}.
    \end{align} 
    Using the above result, at time $t = 2$,
    \begin{itemize}
        \item $\lambda_M(P_2) \geq \lambda_M(Q)$
        \item For $i =1,...,M-1$, 
        \begin{align}
            \lambda_i(P_2) &\geq \lambda_{i+1}(AP_1A^T) + \lambda_M(Q)\nonumber\\
            &\geq \lambda_M(AA^T) \lambda_{i+1}(P_1) + \lambda_M(Q)\nonumber\\ 
            &\geq (\lambda_M(AA^T) + 1)\lambda_M(Q) =\frac{\lambda_{M}(AA^T)^2 - 1}{\lambda_{M}(AA^T) - 1}\lambda_M(Q). 
        \end{align} 
    \end{itemize}
    At time $t = 3$, 
    \begin{itemize}
        \item $\lambda_M(P_3) \geq \lambda_M(Q)$
        \item $\lambda_{M-1}(P_3) \geq \lambda_M(AP_2A^T) + \lambda_M(Q) \geq \lambda_M(AA^T)\lambda_{M}(P_2) + \lambda_M(Q)\geq \frac{\lambda_{M}(AA^T)^2 - 1}{\lambda_{M}(AA^T) - 1}\lambda_M(Q)$. 
        \item For $i =1,...,M-2$, $\lambda_i(P_3) \geq \lambda_{i+1}(AP_2A^T) + \lambda_M(Q) \geq \lambda_M(AA^T)\lambda_{i+1}(P_2) + \lambda_M(Q)\geq \frac{\lambda_{M}(AA^T)^3 - 1}{\lambda_{M}(AA^T) - 1}\lambda_M(Q)$. 
    \end{itemize}
    On extending the above line of argument inductively till time $t = M$, we get
    \begin{equation}
        \lambda_i(P_M) \geq \frac{\lambda_M(AA^T)^{M-i+1} - 1}{\lambda_M(AA^T) - 1}\lambda_M(Q).
    \end{equation}
    Now, assume the hypothesis that at some $t$, $\lambda_i(P_t) \geq \frac{\lambda_M(AA^T)^{M-i+1} - 1}{\lambda_M(AA^T) - 1}\lambda_M(Q)$. This is true for $t = M$ as argued above. Note that at time $t + 1$,
    \begin{align}
        \lambda_i(P_{t+1}) &\geq  \lambda_{i+1}(AP_{t}A^T + Q) \nonumber\\
        &\stackrel{(a)}{\geq} \lambda_M(AA^T)\lambda_{i+1}(P_{t}) + \lambda_M(Q)\nonumber\\
        &\geq \frac{\lambda_M(AA^T)^{M-i} - 1}{\lambda_M(AA^T) - 1}\lambda_M(Q) + \lambda_M(Q)\nonumber\\
        &= \frac{\lambda_M(AA^T)^{M-i+1} - 1}{\lambda_M(AA^T) - 1}\lambda_M(Q)
    \end{align}
    where inequality $(a)$ comes from employing the Weyl's inequality and \eqref{eqn:mult_perturb} on $\lambda_{i+1}(AP_{t}A^T + Q)$.
    Also note that $\lambda_M(P_{t + 1}) \geq \lambda_M(Q)$. Therefore, the hypothesis is true for one step. Note that $\lambda_M(P_{t+1}) \geq \lambda_M(Q)$ is always true from the straightforward extension of Lemma \ref{lemma:pt_geq_q}. Therefore, we have proved the lemma by induction.
\end{proof}

\section{Inverse Inequality of Positive Semidefinte Matrices}\label{appendix:inv_psd}
\begin{lemma}\label{lemma:inv_psd}
If $A$ and $B$ are invertible positive definite matrices such that $A - B \succeq 0$, then $B^{-1} - A^{-1} \succeq 0$.
\end{lemma}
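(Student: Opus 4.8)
The plan is to reduce the matrix inequality to the almost scalar fact that, for a symmetric positive definite matrix $C$, $C \succeq I$ implies $C^{-1} \preceq I$, by first using a congruence transformation to symmetrize the comparison between $A$ and $B$.

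First I would use that $B \succ 0$ admits a (unique) positive definite square root $B^{1/2}$, which is itself invertible; write $B^{-1/2}$ for its inverse. Applying the congruence $X \mapsto B^{-1/2}XB^{-1/2}$ to the hypothesis $A - B \succeq 0$ — and using that congruence by an invertible matrix preserves positive semidefiniteness, since $\b{z}^T\big(B^{-1/2}(A-B)B^{-1/2}\big)\b{z} = (B^{-1/2}\b{z})^T(A-B)(B^{-1/2}\b{z}) \geq 0$ for every $\b{z}$ — gives $B^{-1/2}AB^{-1/2} \succeq I$. Denote this matrix by $C$; it is symmetric and, being congruent to $A \succ 0$, positive definite, and it satisfies $C \succeq I$.

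Next I would invoke the spectral characterization in the paper's notation: $C \succeq I$ means $\lambda_M(C) \geq 1$, and since the eigenvalues of $C^{-1}$ are the reciprocals of those of $C$, this forces $\lambda_1(C^{-1}) = 1/\lambda_M(C) \leq 1$, i.e. $C^{-1} \preceq I$. Because $C^{-1} = B^{1/2}A^{-1}B^{1/2}$, this reads $B^{1/2}A^{-1}B^{1/2} \preceq I$. Applying the congruence $X \mapsto B^{-1/2}XB^{-1/2}$ once more yields $A^{-1} \preceq B^{-1}$, that is, $B^{-1} - A^{-1} \succeq 0$, which is the claim.

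I do not expect a genuine obstacle here; the only points requiring care are (i) that congruence by an invertible matrix maps the PSD (resp. PD) cone into itself, and (ii) the order-reversing property of matrix inversion against the identity, both of which follow directly from the spectral theorem. If one prefers to avoid square roots, an essentially equivalent route uses the variational identity $\b{y}^TA^{-1}\b{y} = \max_{\b{x}}\big(2\b{x}^T\b{y} - \b{x}^TA\b{x}\big)$: since $A \succeq B$ gives $2\b{x}^T\b{y} - \b{x}^TA\b{x} \leq 2\b{x}^T\b{y} - \b{x}^TB\b{x}$ pointwise in $\b{x}$, taking the maximum over $\b{x}$ on both sides yields $\b{y}^TA^{-1}\b{y} \leq \b{y}^TB^{-1}\b{y}$ for all $\b{y}$. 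I would present the congruence argument as the main proof and mention this alternative in a remark.
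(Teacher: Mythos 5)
Your proof is correct and takes essentially the same route as the paper: both congruence-transform the hypothesis by $B^{-1/2}$ to obtain $B^{-1/2}AB^{-1/2} \succeq I$ and then argue via eigenvalues. The only difference is in the final step, where you invert $C = B^{-1/2}AB^{-1/2}$ directly (using reciprocal eigenvalues) and congruence back by $B^{-1/2}$, whereas the paper passes through $A^{1/2}$ and the fact that $XY$ and $YX$ share eigenvalues to reach $A^{1/2}B^{-1}A^{1/2} \succeq I$ before congruencing by $A^{-1/2}$; your variant is a slight streamlining that avoids needing $A^{1/2}$ altogether.
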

\begin{proof}
    Since $A$ and $B$ are positive definite matrix, we can factorize $A$ as $A=(A^{1/2})^2$ and $B$ as $B = (B^{1/2})^2$, and $B^{1/2} \succ 0$. This implies $B^{-1/2} := (B^{1/2})^{-1}$ exists and $B^{-1/2} \succ 0$.
    Since $A-B \succeq 0$, on pre- and post- multiplying by a positive definite matrix $B^{-1/2}$, we still preserve positive definiteness. Hence,
    \begin{equation}
        B^{-1/2}AB^{-1/2} - I \succeq 0.
    \end{equation}
    Note that the above matrix inequality implies that all the eigenvalues of $ B^{-1/2}AB^{-1/2}$ are greater than 1.
    Note that $B^{-1/2}AB^{-1/2} = (B^{-1/2}A^{1/2}) (A^{1/2}B^{-1/2})$. 
    Since the eigenvalues of a matrix product are invariant to switching the order of multiplication \cite{strang_eigenvalues_nodate, dobrushkin_eigenvalues_nodate}, $(A^{1/2}B^{-1/2})(B^{-1/2}A^{1/2}) = A^{1/2}B^{-1}A^{1/2}$ has the same eigenvalues as $B^{-1/2}AB^{-1/2}$. Therefore, all eigenvalues of $A^{1/2}B^{-1}A^{1/2}$ are greater than 1, which in turn suggests,
    \begin{align}
        A^{1/2}B^{-1}A^{1/2} - I \succeq 0.
    \end{align}
    On pre- and post- multiplying by $A^{-1/2} \succ 0$, we obtain,
    \begin{align}
        B^{-1} - A^{-1} \succeq 0.
    \end{align}
\end{proof}
\end{document}